\documentclass[a4paper,UKenglish,cleveref, autoref, thm-restate]{lipics-v2021}
\usepackage[ruled, noend]{algorithm2e}
\usepackage{mathtools}

\hideLIPIcs  

\DeclareMathOperator*{\argmax}{argmax}

\newcommand{\subtour}{\psi}
\newcommand{\WTSP}{\textsc{Wtsp}}
\newcommand{\QTSP}{\textsc{Qtsp}}

\newcommand{\cost}{\text{cost}}
\newcommand{\cthief}{\textsc{Cttp}}
\newcommand{\bithief}{\textsc{Bttp}}
\newcommand{\cop}{\textsc{Cop}}
\newcommand{\knapdead}{\textsc{Mpbkp}}
\newcommand{\lborder}{s}
\newcommand{\NP}{\mathsf{NP}}
\newcommand{\PNP}{\mathsf{P}}
\title{Approximation Algorithms for the Traveling Thief Problem} 

\author{Jan Eube}{University of Bonn, Germany }{eube@cs.uni-bonn.de}{https://orcid.org/0009-0000-7487-3187}{}

\author{Kelin Luo}{University at Buffalo, NY, USA}{kelinluo@buffalo.edu}{https://orcid.org/0000-0003-2006-0601}{}

\author{Heiko Röglin}{University of Bonn, Germany}{roeglin@cs.uni-bonn.de}{https://orcid.org/0009-0006-8438-3986}{}

\author{Sarah Sturm}{University of Bonn, Germany}{ssturm@uni-bonn.de}{https://orcid.org/0009-0003-1962-8666}{}

\funding{This work has been funded by the Deutsche
Forschungsgemeinschaft (DFG, German Research Foundation) – 390685813;
459420781 and by the Lamarr Institute for Machine Learning and
Artificial Intelligence lamarr-institute.org.}

\authorrunning{J. Eube, K. Luo, H. Röglin and S. Sturm} 

\Copyright{Jan Eube, Kelin Luo, Heiko Röglin and Sarah Sturm} 

\ccsdesc[300]{Theory of computation~Routing and network design problems} 

\keywords{Traveling Thief Problem, Traveling Salesperson Problem, Knapsack Problem, Approximation Algorithms, Bi-objective optimization} 

\category{} 

\relatedversion{} 

\nolinenumbers 

\EventEditors{John Q. Open and Joan R. Access}
\EventNoEds{2}
\EventLongTitle{42nd Conference on Very Important Topics (CVIT 2016)}
\EventShortTitle{CVIT 2016}
\EventAcronym{CVIT}
\EventYear{2016}
\EventDate{December 24--27, 2016}
\EventLocation{Little Whinging, United Kingdom}
\EventLogo{}
\SeriesVolume{42}
\ArticleNo{23}

\begin{document}

\maketitle

\begin{abstract}
The Traveling Thief Problem (TTP) combines the Traveling Salesperson Problem with the Knapsack Problem. In this problem, a finite metric space is given, and at each location an item with some profit and weight is placed. An agent seeks to collect a subset of the items. To do so, the agent must decide which items to collect and to determine a cyclic tour visiting the corresponding locations.     While collecting an item yields its profit as a reward, the agent’s speed decreases as more weight is picked up. The problem involves two competing objectives: maximizing the total profit of the collected items and minimizing the travel time of the tour. 

While many heuristics and exact algorithms (with a non-polynomial running time) have been developed, no approximation algorithms are known for any variant of the TTP. We aim at computing an $(\alpha_1,\alpha_2)$-approximate Pareto set that, for every solution, contains another solution collecting at least a $\frac{1}{\alpha_1}$ fraction of its profit while requiring at most $\alpha_2$ times its travel time. Our main result is an algorithm that calculates a $(9 + \epsilon,9 + \epsilon)$-approximate Pareto set in polynomial time. 

We also consider the setting in which the set of items to be collected is given in advance, so that the agent only has to compute a tour through the corresponding locations that minimizes the total travel time. This is the so-called Weighted TSP. For this setting, we present a $(2e + \epsilon)$-approximation algorithm.
\end{abstract}

\section{Introduction}

The Knapsack Problem (KP) and the Traveling Salesperson Problem (TSP) are well-studied combinatorial optimization problems that have applications in many different fields. Both problems are known to be $\NP$-hard. In order to deal with them, one either needs to accept a non-polynomial worst-case running time or rely on heuristics or approximation algorithms. While both problems are already challenging on their own, many real-world applications involve the combination of multiple $\NP$-hard optimization problems. These combinations are often referred to as \emph{multi-component optimization problems}. In particular, there is a sequence of papers on a combination of the KP and the TSP called the \emph{Traveling Thief Problem (TTP)} (e.g.,~\cite{Bonyadi2013TTP, DBLP:conf/gecco/PolyakovskiyB0MN14,DBLP:journals/ec/PrzybylekWM18,DBLP:journals/swevo/HerringKY24,DBLP:journals/telo/NikfarjamNN24}). Following its introduction, there were also multiple competitions that aimed at solving variants of this problem\footnote{\url{https://cs.adelaide.edu.au/~optlog/research/combinatorial.php}}.

As in the TSP, also in the TTP one is given a set of locations that have to be visited by an agent (the thief) in a cyclic tour. Additionally, there are items placed at the locations and the agent has to decide which items to pick up. Each item has a profit and a weight and the agent gets slower when it picks up additional weight. The inverse speed is modeled as an increasing function, and the time needed to traverse an edge is obtained by multiplying the current function value by the edge length. There also exists a weight threshold limiting the total weight of the items that the agent is allowed to pick up. Most models in the literature consider the special case in which the agent starts with an initial speed $v_{\max}$ that decreases proportionally to the weight that has already been picked up until a minimum speed $v_{\min}$ is reached when the maximum allowed weight has been picked up. The TTP can be used to model collection and delivery settings in which routing decisions are coupled with the selection of items to be transported.

A solution of the TTP consists of a tour as well as a packing plan that determines which items are picked up in which city. One aims at both maximizing the profit of the collected items and minimizing the time necessary to traverse the tour. Given that these two objectives oppose each other, one usually searches for a trade-off between them. In the literature, often a variant is considered where a scaled version of the travel time is subtracted from the total profit of the collected items, and one aims at maximizing this difference \cite{Bonyadi2013TTP, DBLP:conf/gecco/PolyakovskiyB0MN14,DBLP:journals/ec/PrzybylekWM18,DBLP:journals/telo/NikfarjamNN24}. 

Alternatively, one can formulate the TTP as a bi-objective optimization problem in which the goal is to find the set of Pareto-optimal solutions or a subset thereof that contains a good solution for every possible linear combination of the two objective values \cite{blank2017solving, CHAGAS2022105560}. 
This viewpoint has attracted increasing attention in the evolutionary computation community. In particular, Yafrani et al.\ \cite{yafrani2017multi} highlight the connection between the bi-objective TTP and single objective TTP, and Wu et al.\ \cite{wu2018evolutionary} develop an evolutionary algorithm augmented with a dynamic programming subroutine. 
There is also a variant of the bi-objective TTP in which the profit of an item decreases while it is being carried~\cite{Bonyadi2013TTP}. 
Since both maximizing the profit and minimizing the travel time are $\NP$-hard, it is already $\NP$-hard to decide whether there exists a solution in which the difference of the collected profit and the travel time is non-negative. Consequently, one cannot expect to obtain any polynomial time approximation algorithm for this objective function, unless $\PNP = \NP$. 
For this reason, we focus in this work on approximating a TTP variant that is strongly related to the bi-objective TTP.

The current literature on approximation algorithms for the TTP is rather sparse. There exists a PTAS for the case that the tour of the agent is already fixed and one only needs to determine the packing plan \cite{DBLP:conf/algocloud/NeumannPSSW18}. For the so-called \emph{Weighted TSP} in which the packing plan is already fixed and one only needs to determine the tour, Bossek et al.~\cite{DBLP:conf/gecco/BossekCK020} provide a $3.59$-approximation algorithm if the cost of traversing an edge grows linearly with the weight that has been collected. For more general cost functions, Eube et al.\ \cite{eube2026effectivetravelingmetricinstances} show that the Weighted TSP is polynomially solvable if the metric is a line metric. They also show that the problem is $\NP$-hard even on metrics induced by star-graphs and provide an $8$-approximation for this case. To the authors' knowledge, there exist no approximation results if one needs to determine both the tour and the packing plan, prior to this work.

While our current theoretical knowledge regarding the TTP is rather limited, there exists a rich body of literature for the broader field of vehicle routing problems. In the classic orienteering problem, one is given an undirected graph $G = (V,E)$ with metric edge weights and node profits, a length limit, as well as a start node $s$ and end node $t$. The objective is to find a path from $s$ to $t$ of length at most the given limit that maximizes the profit. There exists a $(2 + \epsilon)$-approximation algorithm for this problem in general metrics \cite{orienteering2apx}.
There exist many variations of this problem. For example, Xu et al.\ give a constant-factor approximation for the team orienteering problem, in which multiple vehicles may be used to visit the nodes~\cite{teamorienteering}. The capacitated orienteering problem (\cop) is probably the variant most closely related to the TTP \cite{bock2015capacitated}. In this problem, nodes also have weights that are independent of the profits, and an additional side constraint requires the total weight of the visited nodes to remain below a given threshold. If we consider the TTP with a constant speed function, the problem basically becomes the $\cop$ problem. Bock and Sanit{\`a} give a $(3 + \epsilon)$-approximation for this problem~\cite{bock2015capacitated}, which we will use in our approximation algorithm for the TTP.

Also relevant to our algorithms is a variant of the TSP known as Quota TSP ($\QTSP$) \cite{ausiello2018prize}. In this problem, nodes are weighted and the objective is to find a tour of minimum length whose total weight meets a given lower bound $k$. If all node weights are one, this is the problem to find a tour of minimum length that visits at least $k$ nodes. This problem is known as the $k$-TSP~\cite{arora20062+}, and there exists a $(2 + \epsilon)$-approximation algorithm for it~\cite{arora20062+}. 

\section{Our Model}

We consider a variant of the TTP that differs slightly from the existing literature. Most changes are introduced to simplify writing and do not functionally change the problem. We will discuss these changes after introducing our model.

We are given a node set $V$ with $|V| = n$, a root $r\in V$ and a distance metric $d:V^2 \rightarrow \mathbb{R}_{\geq 0}$. At every node $v \in V\setminus\{r\}$, an item with a known weight and profit is placed. The weights of the items are given by a function $w: V \rightarrow \mathbb{N}_{ 0}$ and the profits by $p: V \rightarrow \mathbb{N}$. We assume that no item is placed at $r$ itself, which is why we set $w(r) = p(r) = 0$. For simplicity, we also assume that the distance between $r$ and its closest location is exactly $1$, which can be ensured via scaling, unless another location is placed at the same position as $r$. We can model the situation that multiple items are placed at the same position by creating multiple nodes which are at distance $0$ to each other.

There is an agent starting at $r$ who wants to collect the items. To collect an item, the agent must travel to its corresponding location, and we assume that the agent visits only those locations at which items are picked up. Thus, no separate packing plan is needed. In the end, the agent has to return to the initial location $r$. Collecting an item yields its profit as a reward, but requires the agent to carry the item for the remainder of the tour. As the agent accumulates weight, their speed decreases. More formally, let $f: \mathbb{N}_0 \rightarrow \mathbb{R}_{\geq 0}$ be a non-decreasing function modeling the inverse speed. When the agent carries weight $W \in \mathbb{N}_{\geq 0}$, then traversing an edge between two vertices $u,v \in V$ requires time $f(W) \cdot d(u,v)$. The agent needs to follow a tour $\pi = (\pi_1,\ldots,\pi_{m+1})$ with $m \in [n]$ such that $\pi_1 = \pi_{m+1} = r$ and every node in $V \setminus \{r\}$ appears at most once on this tour. The entire profit collected on this tour is equal to $p(\pi) = \sum_{i=2}^{m} p(\pi_i)$. After visiting the $i$-th location on this tour (for an arbitrary $i \in [m]$), the agent has collected weight:
\begin{equation*}
    W_i^\pi = \sum_{j=2}^i w(\pi_j).
\end{equation*}
There is a maximum capacity $w_{\max}$ limiting the total weight that the agent can carry, and we require $w(\pi) := \sum_{i = 2}^m w(\pi_i) \leq w_{\max}$. Without loss of generality we may assume that $f(W) = \infty$ if $W > w_{\max}$. The travel time of the agent to complete such a tour is equal to:
\begin{equation*}
    \cost(\pi) =  \sum_{i=1}^{m} d(\pi_i,\pi_{i+1}) f(W_i^\pi).
\end{equation*}

We introduce the \emph{Constrained Traveling Thief Problem} ($\cthief$). Similar to the Quota TSP, we assume that we are given a desired profit $\mathcal{P}$ and then aim at finding a tour that collects profit at least $\mathcal{P}$ while minimizing the travel time. Unfortunately, it is already $\NP$-hard to decide whether there exists any tour that collects profit at least $\mathcal{P}$, as this would require to determine if there exists a subset of the locations whose entire weight is at most $w_{\max}$ and whose profit is at least $\mathcal{P}$. This is exactly the knapsack problem. To circumvent this, we design bicriteria approximation algorithms. For two constants $\alpha_1, \alpha_2 \geq 1$, a bicriteria $(\alpha_1, \alpha_2)$-approximation algorithm for $\cthief$ calculates a tour $\pi$ such that $\alpha_1 \cdot p(\pi) \geq \mathcal{P} $ and $\cost(\pi) \leq \alpha_2 \cdot \cost(\pi^*)$, where $\pi^*$ denotes a tour with minimum travel time among all tours that collect profit at least $\mathcal{P}$. Figure~\ref{fig:example_tours} depicts an example $\cthief$ instance.

\begin{figure}
\begin{subfigure}{0.4\textwidth}
    \includegraphics[width=\textwidth, page =1]{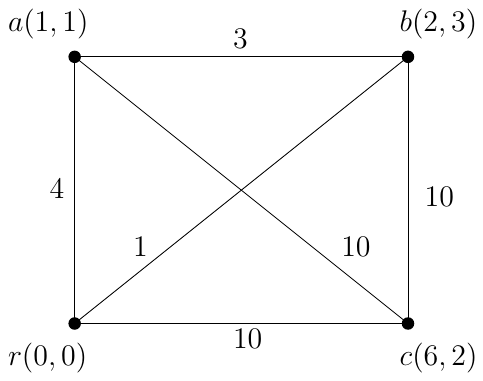}
\end{subfigure}
\hfill
\begin{subfigure}{0.4\textwidth}
    \includegraphics[width=\textwidth, page =2]{images/example_tours.pdf}
\end{subfigure}
\caption{The left figure depicts a TTP instance. The edges are labeled with the respective distances and every node $v$ is followed by the pair $(p(v),w(v))$. We consider the function $f(w) = w+1$ and $w_{\max} = 6$. The right figure depicts the tours $(r,a,b,r)$ in blue and $(r,c,r)$ in green. It holds that $p(r,a,b,r) = 3$ and $\cost(r,a,b,r) = 1 \cdot 4 + 2 \cdot 3+ 5 \cdot 1 = 15$. At the same time, $p(r,c,r) = 6$ and $\cost(r,c,r) = 1 \cdot 10 + 3 \cdot 10 = 40$.}
\label{fig:example_tours}
\end{figure}

While the $\cthief$ problem has not been studied before, it is strongly related with the \emph{Bi-objective Traveling Thief Problem} $(\bithief$)~\cite{blank2017solving, CHAGAS2022105560}. In $\bithief$ one aims at calculating a Pareto set with respect to the profit and the travel time of the tours. We argue in Appendix~\ref{apx:bittp} that one can use a bicriteria $(\alpha_1,\alpha_2)$-approximation algorithm for $\cthief$ to calculate an $((1 + \epsilon) \cdot \alpha_1, \alpha_2)$-approximate Pareto set for $\bithief$.

Most papers on the TTP assume that every location has to be visited, even if the agent does not pick up any item there. The variant in which only a subset of the nodes needs to be visited is more in line with Orienteering and related problems. Furthermore, forcing the agent to visit locations from which no item is collected seems somewhat artificial. In Appendix~\ref{apx:all_location} we argue that our results also carry over to the case that every location needs to be visited.
Our model further assumes that the agent's starting position is fixed in advance and that no item is located at the corresponding node. If the initial position is not given, we can simply evaluate all $n$ choices of $r$ and choose the best one. We can also explicitly deal with the situation that items are placed at this starting position by modifying $r$ and $w_{\max}$. The details of how our algorithms need to be modified are described in Appendix~\ref{apx:no_start} and \ref{apx:mult_items}.

Most papers that deal with TTP also require that $f$ corresponds to a linear speed decrease. 
We consider the more general setting, where $f$ can be an arbitrary non-decreasing function.

We also consider a simplification of the $\cthief$. In the \emph{Weighted TSP} ($\WTSP$), the packing plan is already fixed. Then we can remove a vertex from $V$ if the respective item does not get collected. The problem becomes to find a tour $\pi$ that visits every vertex and minimizes $\cost(\pi)$. Neither the profit function $p$, nor the maximum weight $w_{\max}$ are still necessary.

\section{Our Results}

In Section~\ref{sec:weighted} we prove the following result.

\begin{restatable}{theorem}{ThmWeighted}\label{thm:Weighted}
For any $\epsilon > 0$, one can calculate a $(2e + \epsilon)$-approximation for $\WTSP$ in polynomial time  if all weights are polynomially bounded integers. 
\end{restatable}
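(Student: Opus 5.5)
The plan is to run a \emph{geometric doubling} scheme over weight thresholds, using the $(2+\epsilon)$-approximation for $k$-TSP~\cite{arora20062+} as the subroutine for each phase, and to obtain the factor $e$ through a random geometric offset. Since all weights are polynomially bounded integers, I would first reduce Quota TSP to $k$-TSP by replacing every node $v$ with $w(v)$ copies at distance $0$ from each other. For every quota $W$ this gives, in polynomial time, a tour through $r$ of length at most $(2+\epsilon)L(W)$. Here $L(W)$ is the minimum length of a closed tour from $r$ collecting weight at least $W$. Let $T=w(V)$ be the total weight.

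\textbf{Algorithm.} Fix an offset $\delta\in[0,1)$ and thresholds $W_j=\min\{T, e^{j+\delta}\}$ for $j=0,1,\dots$ until $W_j=T$. In phase $j$ I compute an approximate quota tour $Q_j$ for weight $W_j$. I shortcut it so that it skips nodes already collected and, if necessary, stops as soon as the cumulative weight would exceed $W_j$. The uncollected nodes are deferred, and the last phase takes everything that remains. I concatenate the phases at $r$. If the cumulative weight can be capped at roughly $W_j$ during phase $j$, then the cost of phase $j$ is at most $(2+\epsilon)L(W_j)f(W_j)$, up to a rounding loss that I handle via the $\epsilon$ slack. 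Derandomization is done by trying polynomially many discretized values of $\delta$ and keeping the best resulting tour.

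\textbf{Lower bound.} Let $\ell(t)$ be the length that $\pi^*$ travels before its carried weight first reaches $t$. Returning to $r$ along the prefix gives $L(t)\le 2\ell(t)$ by the triangle inequality. Moreover, $\cost(\pi^*)=\int f(\text{current weight})\,d(\text{length})$, which can be rewritten as a Stieltjes integral $\int f(t)\,d\ell(t)$ plus the final segment at weight $T$. I would therefore aim to show
\[
\mathbb{E}_\delta\Big[\sum_j L(W_j)f(W_j)\Big]\le e\cdot \cost(\pi^*).
\]
This follows the standard cow-path/doubling calculation. Writing $L(W_j)\le 2\ell(W_j)$ and exchanging sums, each infinitesimal piece of $\pi^*$ at carried weight $t$ is charged $f(W_j)$ for all thresholds $W_j$ at or above the weight $t$ at which it was traversed. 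Using monotonicity of $f$ in the form $f(W_j)$ versus $f(t)$ is problematic, so instead I charge piece $d\ell$ lying in the interval $[W_{j-1},W_j)$ to the phases $j'\ge j$. The geometric spacing with random offset then yields the factor $\int_0^1 e^{x}\,dx\cdot\frac{1}{1-1/e}\,$-type constant, which evaluates to $e$. Combining this with the factor $2$ for closing the prefix and the factor $2+\epsilon$ from $k$-TSP is too much. The fix is to apply the $(2+\epsilon)$ bound directly against the prefix path rather than against $L$: the algorithm of~\cite{arora20062+} is $2$-approximate against the optimal path-doubling bound. This gives $2e+\epsilon$ overall.

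\textbf{Main obstacle.} I expect two difficulties. The first is making the charging argument exact: $f$ is evaluated at the \emph{upper} threshold in the algorithm but at the actual weight in $\pi^*$, and the longer later phases must be paid for by the heavier parts of $\pi^*$. The second is the weight overflow caused by overlapping phases; this is handled by the truncation/deferral above together with a $(1+\epsilon)$ perturbation of thresholds. For the first point I would try the substitution $W_j=e^{j+\delta}$ and compute the expectation over $\delta$ of the number of thresholds between $t$ and a fixed weight level. That expectation is exactly where the constant $e$ arises as the optimum of the ratio $b/\ln b$ at $b=e$.
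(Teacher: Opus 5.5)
\textbf{There is a genuine gap.} Your central inequality $E_\delta\bigl[\sum_j L(W_j)f(W_j)\bigr]\le e\cdot\cost(\pi^*)$ is false, and the algorithm it is meant to support is not a constant-factor approximation.

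\emph{First failure: geometric weights do not give geometric lengths.} Take $f\equiv 1$ and put $T$ unit-weight items at a single point at distance $1$ from $r$. Then $L(W)=2$ for all $W\ge 1$ and $\cost(\pi^*)=2$. Because of your cap, each of the roughly $\ln T$ phases collects only part of the cluster and makes its own round trip, so you pay about $2\ln T$. The cow-path summation you invoke therefore has no geometric series to sum.

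\emph{Second failure: your lower bound uses the wrong end of $\pi^*$.} The bound $L(t)\le 2\ell(t)$ compares against the \emph{prefix} of $\pi^*$, which $\pi^*$ travels while light, whereas phase $j$ is paid at $f(W_j)$. Let $r$ be the origin, $A=(D,0)$ and $B=(-D,0)$ items of weight $1$, $h=(0,1)$ an item of weight $H\gg 1$, and $f(w)=1$ for $w\le 2$, $f(w)=D$ otherwise.
\begin{itemize}
\item The tour $r,A,B,h,r$ costs about $5D$, yet your bound contains the term $L(T)f(T)\ge 4D^2$.
\item The algorithm can really pay this. Either $h$ is picked up at a threshold $\ge H$ before $A$ or $B$, or everything is left to the last phase, whose near-optimal quota tour may be $r,A,h,B,r$. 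In both cases a distance of about $2D$ or more is travelled while carrying $h$.
\end{itemize}
Charging a piece $d\ell$ at weight $t\in[W_{j-1},W_j)$ to the phases $j'\ge j$ cannot repair this. Those phases are charged $f(W_{j'})$, which may exceed $f(t)$ by an arbitrary factor.

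\textbf{The missing idea is to run the doubling argument backwards in time, as the paper does.}
\begin{itemize}
\item \emph{Suffix decomposition.} Write $\cost(\pi)=\sum_W D^\pi(W)\,(f(W)-f(W-1))$, where $D^\pi(W)$ is the distance travelled \emph{after} weight $W$ has been collected. It then suffices to show $E[D^\pi(W)]\le(2+\epsilon)e\,D^{\pi^*}(W)$ for every $W$ (Lemma~\ref{lem:bound_rand}).
\item \emph{The right comparison object.} The suffix of $\pi^*$ after weight $W$ is a path ending in $r$ that carries weight at least $w(V)-W+1$. This is exactly what the quota subroutine competes against: a tree of length at most $(1+\epsilon)$ times any such path, as in~\cite{chaudhuri2003paths}, doubled to a tour. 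This gives $2+\epsilon$ with no extra factor for closing a prefix. A black-box $(2+\epsilon)$ guarantee for $k$-MST or $k$-TSP would only give $4+2\epsilon$ against such a path.
\item \emph{Geometric scaling in length, traversed in reverse.} For $s=1,2,\dots$, take the largest quota $w_s$ whose quota tour on the still-unused vertices has length at most $e^{s+\lambda}$, and remove its vertices. Finally traverse the subtours longest first and shortest last.
\item \emph{Analysis.} If $(2+\epsilon)D^{\pi^*}(W)\le e^{j+\lambda}$, then $\subtour_1,\dots,\subtour_j$ together already carry weight at least $w(V)-W+1$. So weight $W$ is reached no earlier than the start of $\subtour_j$, and $D^\pi(W)\le\sum_{i\le j}e^{i+\lambda}\le\frac{e}{e-1}e^{j+\lambda}$. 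Averaging over $\lambda$ gives the factor $e$, and discretizing $\lambda$ derandomizes as you suggested.
\end{itemize}
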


The restriction to instances with polynomially bounded integer weights stems from the $(2+\epsilon)$-approximation algorithm for the $k$-TSP, which we use as a subroutine in our algorithm~\cite{arora20062+}. This algorithms is stated for the case that all nodes have unit weight, but it can easily be extended to polynomially bounded integers weights by replacing each node with multiple copies. Moreover, we have examined the algorithm from~\cite{arora20062+} carefully and are confident that it also applies, without substantial modifications, to the case of general integer weights. If this is indeed the case, then Theorem~\ref{thm:Weighted} also extends to arbitrary integer weights.

In Section~\ref{sec:thief} we then present our main result, which is a constant bicriteria approximation algorithm for the Constrained Traveling Thief problem (\cthief).

\begin{restatable}{theorem}{ThmThief}\label{thm:Thief}
For any $\epsilon > 0$ there exists an polynomial time bicriteria $(9 + \epsilon, 9 + \epsilon)$-approximation algorithm for $\cthief$. 
\end{restatable}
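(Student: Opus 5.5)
I plan to reduce $\cthief$ to a polynomial number of calls to the $(3+\epsilon)$-approximation for the capacitated orienteering problem ($\cop$) of Bock and Sanit{\`a}, combined with geometric guessing of the weight profile of an optimal tour $\pi^*$.

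\textbf{Structure of the optimum.} Fix $\pi^*$ and split it at the points where its carried weight first reaches (roughly) consecutive thresholds, so that it becomes a sequence of segments. Within a segment the inverse speed is essentially constant. Concretely, I guess a constant number of breakpoints. The first is the prefix of $\pi^*$ before the carried weight reaches a level $W_1$, and the rest of the tour is charged at $f(w(\pi^*))$. More generally I use $O(\log)$ levels $W_j$ at which $f$ doubles, i.e.\ the values $f(W_j)$ grow geometrically, and for each level I guess the length $L_j$ of $\pi^*$ travelled while $f$ lies in $[f(W_{j-1}),f(W_j))$, rounded to a power of $(1+\epsilon)$. Since distances are scaled so that the minimum relevant distance is $1$ and lengths are polynomially bounded after standard rounding, there are polynomially many guesses per level. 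To keep the number of guesses polynomial, I only guess a constant number of levels. I handle this by charging every part of the tour beyond level $j$ to $f(w_{\max})$ times length, and by observing that the total cost is dominated by a geometric sum.

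\textbf{Algorithm.} For each guess, I build a tour as a concatenation of closed subtours from $r$. First, a light subtour is computed by a $\cop$ call with length budget $2L$ (going out and returning costs at most twice the prefix length), weight capacity $W_1$, and maximum profit. Then a heavy subtour is computed via a second $\cop$ call with the remaining budget and capacity $w_{\max}-W_1$. Each $\cop$ call loses a factor $3+\epsilon$ in profit. I then combine them by the following trick: among three candidate solutions (the light subtour alone, the heavy subtour alone, or both in sequence), one retains a $1/(9+\epsilon)$ fraction of $\mathcal{P}$. The cost analysis must give factor $9$ as well. Traversing the light subtour first costs at most $f(W_1)\cdot 2L_1\cdot(1+\epsilon)$. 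The heavy subtour, whose length is budgeted from the remaining part of $\pi^*$, is traversed at inverse speed at most $f(w(\pi^*))$, and returning from $r$ adds the doubling factors. I expect constants $3\cdot 3$ to emerge from the profit loss of $\cop$ composed with a $3$-way split: either I pick the best of three parts, losing a factor $3$ from the split and $3$ from $\cop$, or the cost side picks up $2\cdot$ length plus $f$-rounding.

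\textbf{Main obstacle.} The delicate step is making both factors equal to $9$ simultaneously. The split of $\pi^*$ into light and heavy parts must be chosen at the item that pushes the weight past the threshold. That single item is handled separately: it is guessed explicitly, giving a third candidate, a direct tour $r\to v\to r$. This is the likely source of a $3$-way best-of choice, which yields profit factor $3(3+\epsilon)$. On the cost side, each candidate must cost at most $9\cdot \cost(\pi^*)$. I would first try to bound each candidate by at most $3\cdot 2$ times the relevant portion of $\pi^*$ plus lower-order rounding terms, and then refine the choice of threshold (e.g.\ by a ratio parameter optimized to balance the two) until both become $9+\epsilon$.
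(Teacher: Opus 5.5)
Your proposal has a genuine gap. It sits exactly where you switch between ``$O(\log)$ levels'' and ``only a constant number of levels''.

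\textbf{Neither choice of level count works.} For an arbitrary non-decreasing $f$ the ratio $f(w_{\max})/f(0)$ is not bounded by a constant; for example, $f(W)=2^W$ with $w_{\max}=n$ has $n$ doubling levels. A constant factor in travel time therefore forces you to keep \emph{all} $\Theta(\log(f(w_{\max})/f(0)))$ levels.

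Suppose you stop after a constant number of levels and charge the rest at $f(w_{\max})$. Then your last (``heavy'') level spans an unbounded range of inverse speeds. A $\cop$ solution controls only the total length and total weight of its path, not where along the path the weight is picked up. If $\pi^*$ walks a long chain of light items and collects one heavy item next to $r$ at the very end, a $\cop$ path of the same length may visit the heavy item first and carry it along the whole chain. Its cost then exceeds that of $\pi^*$ by a factor of order $f(w_{\max})/f(W_1)$, and no geometric-sum argument repairs this.

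Suppose instead you keep all levels. Jointly guessing one length per level gives $N^{\Theta(s)}$ combinations, with $N$ guesses per level and $s=\Theta(\log f(w_{\max}))$ levels, which is not polynomial. The items at which $\pi^*$ crosses a level boundary are now up to one per level, so you cannot guess them explicitly either. Also, the budget for the subtour of level $i$ must be the suffix length $D^{\pi^*}(T_i+1)$, not the length of $\pi^*$ inside the level, because a closed subtour from $r$ has to reach those items and come back.

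\textbf{What the paper does instead.} It guesses only one global value $\mathcal{D}\approx\cost(\pi^*)$. It decouples the levels through a profit-to-cost threshold $t_1=\Theta(\mathcal{P}/\mathcal{D})$. For $i=0,\ldots,s$, on the not-yet-used vertices, it calls $\cop$ with capacity $T_{i+1}-T_i$ for every length guess up to $\mathcal{D}/2^i$. It discards solutions whose profit divided by their cost at the current weight is below $t_1$, and keeps the most profitable remaining one.

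The subtours are concatenated until the cost exceeds $(6+6\epsilon)\mathcal{D}$. Exceeding this certifies profit at least $t_1(6+6\epsilon)\mathcal{D}=\mathcal{P}/(9+3\epsilon)$. The last appended subtour costs at most $(3+3\epsilon)\mathcal{D}$, which gives the factor $9$ on the cost side.

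On the profit side, the lower bound $\cost(\pi^*)\ge\sum_{i=0}^{s}2^{i-1}D^{\pi^*}(T_i+1)$ shows that inefficient levels hold less than $\mathcal{P}/3$ of the optimal profit. As long as the collected profit is below $\mathcal{P}/(9+3\epsilon)$, the $\cop$ guarantee shows that already processed efficient levels hold less than another $\mathcal{P}/3$. This accounting, not a three-way best-of choice, is what produces $3(3+\epsilon)$.

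Finally, the boundary items $B$ straddle levels, are covered by no per-level comparison path, and may carry most of the profit. The paper handles them with a second tour of direct round trips $r\to v\to r$, computed by the $\knapdead$ FPTAS. Each deadline $\lborder(v)$ is the largest carried weight at which the round trip costs at most $\min(p(v)\,t_2,2\mathcal{D})$, and this tour costs at most $8\mathcal{D}$.

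Taking the more profitable of the two tours yields $9+O(\epsilon)$ in both criteria; the analysis splits on whether $p(B)$ is above or below $\frac{2+\epsilon}{9+3\epsilon}\mathcal{P}$. Tuning a threshold ``until both factors become $9+\epsilon$'' is not a substitute for this analysis.
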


\section{Algorithm for a Given Packing Plan}\label{sec:weighted}

For simplicity, we only describe an $(8 + \epsilon)$-approximation algorithm for $\WTSP$ in this section. In Appendix~\ref{apx:improved_weighted} we explain how the approximation factor can be improved to $(2 e + \epsilon)$.

Our algorithm for $\WTSP$ follows the general structure of the doubling algorithm presented by Epstein et al. \cite{epstein2010universal}. We consider the segments of tours that the agent travels after picking up at least some specific weight. 

\begin{definition}
    For a TSP-tour $\pi$ we define $D^{\pi}(W)$ to be the distance the agent travels after picking up weight at least $W \in \mathbb{N}_0$, i.e., let $i$ be the smallest index such that $W_i^\pi \geq W$ then we define:
    \begin{equation*}
        D^{\pi}(W) =  \sum_{j=i}^{n} d(\pi_j,\pi_{j+1}).
    \end{equation*}
\end{definition}

Intuitively, if for every possible weight the distance that the agent travels after picking up that weight is bounded by a constant multiple of the corresponding distance in an optimal solution, then the total travel cost is bounded by the same constant multiple of the optimum value. Indeed, this can be shown by adapting the arguments from \cite{epstein2010universal}.

\begin{lemma}\label{lem:bound}
    Let $\pi$ be a $\WTSP$ tour and let $\pi^*$ be the optimum $\WTSP$ tour. Let $c$ be an arbitrary constant. If it holds for any weight $W$ that $D^{\pi}(W) \leq c \cdot D^{\pi^*}(W)$ then $cost(\pi) \leq c \cdot cost(\pi^*)$.
\end{lemma}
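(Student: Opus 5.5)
We rewrite the cost of any tour as a layer-cake (Abel summation) integral over the weight levels, in which $D^{\pi}(W)$ appears with nonnegative coefficients depending only on $f$. The hypothesis $D^{\pi}(W)\le c\,D^{\pi^*}(W)$ then transfers termwise.

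Let $W_{\text{tot}}$ be the total weight of all items. Every $\WTSP$ tour visits all vertices, so both $\pi$ and $\pi^*$ end with the same final weight $W_{\text{tot}}$. Define the increments $\Delta(0)=f(0)$ and $\Delta(W)=f(W)-f(W-1)$ for $W\ge 1$. Since $f$ is non-decreasing, $\Delta(W)\ge 0$ for all $W$, and $f(W')=\sum_{W=0}^{W'}\Delta(W)$.

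\textbf{Step 1: the identity.} We claim that for every tour $\pi$ visiting all vertices,
\begin{equation*}
\cost(\pi)=\sum_{W=0}^{W_{\text{tot}}}\Delta(W)\, D^{\pi}(W).
\end{equation*}
To see this, consider the edge $(\pi_i,\pi_{i+1})$, which is traversed while carrying weight $W_i^\pi$. It contributes $d(\pi_i,\pi_{i+1})$ to $D^{\pi}(W)$ exactly when the first index $j$ with $W_j^\pi\ge W$ satisfies $j\le i$. Since the $W_j^\pi$ are non-decreasing in $j$, this happens exactly when $W_i^\pi\ge W$. The indexing convention in the definition of $D^\pi$ gives $D^\pi(0)$ equal to the full tour length, consistent with $W_1^\pi=0$. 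Hence the edge's total contribution to the right-hand side is
\begin{equation*}
d(\pi_i,\pi_{i+1})\sum_{W=0}^{W_i^\pi}\Delta(W)=d(\pi_i,\pi_{i+1})\,f(W_i^\pi),
\end{equation*}
which is its cost. Summing over all edges gives the identity.

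\textbf{Step 2: conclude.} Apply the identity to both $\pi$ and $\pi^*$, use the hypothesis termwise, and use $\Delta(W)\ge 0$:
\begin{equation*}
\cost(\pi)=\sum_{W}\Delta(W)D^{\pi}(W)\le c\sum_W\Delta(W)D^{\pi^*}(W)=c\cdot\cost(\pi^*).
\end{equation*}

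\textbf{Main point of care.} The only subtle step is Step 1. We must check that the indexing in the definition of $D^\pi$, where the sum starts at the index $i$ at which weight $W$ is first reached, counts exactly the edges traversed with weight at least $W$. We must also handle $W=0$ and the summation upper limit. The limit is harmless: for $W>W_{\text{tot}}$, $D^\pi(W)=0$ for both tours. Finally, zero-weight items produce repeated weight values $W_j^\pi$, but they do not break the monotonicity argument.
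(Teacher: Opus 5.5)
Your proof is correct and takes the same route as the paper. The paper likewise writes $\cost(\pi)=\sum_{W=0}^{w(V)} D^{\pi}(W)\,(f(W)-f(W-1))$ with the convention $f(-1)=0$ (your $\Delta(0)=f(0)$) and then applies the hypothesis termwise using that $f$ is non-decreasing; your edge-by-edge verification of this identity, which the paper only asserts, is a welcome addition.
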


\begin{proof}
    For simplicity we define $f(-1) = 0$. For any tour $\pi$ we can express $\cost(\pi)$ as follows:
    \begin{equation*}
        \cost(\pi) = \sum_{W= 0}^{w(V)} D^{\pi^*}(W) \cdot (f(W) - f(W-1))
    \end{equation*}
    Therefore we obtain:
    \begin{align*}
        \cost(\pi) &=  \sum_{W= 0}^{w(V)} D^\pi(W) \cdot (f(W) - f(W-1))\\
        &\leq \sum_{W= 0}^{w(V)} c \cdot D^{\pi^*}(W) \cdot (f(W) - f(W-1))\\
        &= c \cdot \cost(\pi^*)\qedhere
    \end{align*}
\end{proof}

To find a tour $\pi$ that fulfills the requirement of this lemma, we basically need to obtain for a given weight $\mathcal{W}$ a subtour $\subtour$ whose nodes have a total weight of at least $\mathcal{W}$ and whose length is bounded by $c \cdot D^{\pi^*}(w(V) - \mathcal{W})$ for a constant $c$. If we append $\subtour$ at the end of $\pi$ and do not pick up the items placed on $\psi$ during any previous subtour, we guarantee that the agent has picked up weight at most $w(V) - \mathcal{W}$ before starting with $\subtour$. Thus, the agent does not travel much more distance with this weight than in the optimum tour. To achieve this, we use an approximation algorithm for the so-called \emph{Quota TSP} \cite{ausiello2018prize}.

\begin{definition}
    Given a metric $(V,d)$, a root $r \in V$, a weight function $w:V \rightarrow \mathbb{N}$ and a number $W^* \in \mathbb{N}$. The Quota TSP ($\QTSP$) asks for a sequence $\psi = \psi_1,...,\psi_l, \psi_{l+1}$ of vertices in $V$ with $\psi_1 = \psi_{l+1}= r$ such that the total weight of the vertices on this sequence is at least $W^*$ and the length of the cycle visiting all the nodes in the given order is minimized. More formally we want to find a $\psi$ such that:
    \begin{itemize}
        \item $\sum_{i=1}^l w(\psi_i) \geq W^*$
        \item $\ell(\psi) :=  \sum_{i=1}^{l} d(\psi_i,\psi_{i+1})$ gets minimized.
    \end{itemize}
\end{definition}

We use a subroutine for $\QTSP$ that is based on an algorithm due to Chaudhuri et al.~\cite{chaudhuri2003paths}. They proved that in the special case where every vertex has unit weight (so that one simply counts the number of vertices) one can, for any given $k$ and root $r$, compute in polynomial time a tree that contains $r$, spans at least $k$ vertices, and has total edge length at most $(1+ \epsilon)$ times the length of the shortest path ending at $r$ that visits at least $k$ vertices. While Chaudhuri et al. do not address the case in which vertices have integer weights, their arguments carry over to this generalized case. Thus, one can also calculate for a desired weight $W^*$ a tree containing $r$ that spans weight at least $W^*$ with length bounded by $(1+\epsilon)$ times the length of any path ending in $r$ that picks up weight at least $W^*$. Since this tree can be transformed into a tour (and thus a valid $\QTSP$ solution) by doubling all its edges and then skipping vertices that are visited repeatedly we obtain the following theorem.

\begin{theorem}
    For any $\epsilon > 0$, there exists a polynomial-time algorithm that given a root $r$ and a constant $W^*$ finds a $\QTSP$ tour starting and ending in $r$ whose length is upper bounded by $(2+ \epsilon)$ times the length of any path ending in $r$ with total weight at least $W^*$.
\end{theorem}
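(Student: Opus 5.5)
The plan is to take the weighted extension of the tree algorithm of Chaudhuri et al.\ as given, as the paragraph before the statement allows, and only carry out the conversion from tree to tour. Fix $\epsilon > 0$ and set $\epsilon' = \epsilon/2$. Let $P$ be a path ending in $r$ whose vertices have total weight at least $W^*$, and let $\ell(P)$ be its length. We may take $P$ to be a shortest such path; if no such path exists (that is, $w(V) < W^*$), the statement is vacuous. Run the weighted version of the Chaudhuri et al.\ algorithm with parameter $\epsilon'$ and root $r$. It returns in polynomial time a tree $T$ with $r \in V(T)$, $w(V(T)) \geq W^*$ and $\ell(T) \leq (1+\epsilon')\,\ell(P)$. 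If only the unit-weight version may be used, first replace each vertex $v$ by $w(v)$ copies at distance $0$ from each other. A tree spanning at least $W^*$ copies then spans original vertices of total weight at least $W^*$, since copies of the same vertex carry no extra weight beyond $w(v)$. This step is polynomial when the weights are polynomially bounded, which matches the setting of Theorem~\ref{thm:Weighted}.

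Next, double every edge of $T$. The resulting multigraph is connected and every vertex has even degree, so it has an Eulerian circuit starting and ending at $r$, of length $2\ell(T)$. Shortcut the circuit by keeping only the first occurrence of each vertex, together with the final return to $r$. This gives a sequence $\psi = (r, \psi_2, \dots, \psi_l, r)$ that visits every vertex of $T$ exactly once. By the triangle inequality for the metric $d$, $\ell(\psi) \leq 2\ell(T) \leq (2+2\epsilon')\,\ell(P) = (2+\epsilon)\,\ell(P)$. Since $\psi$ contains exactly the vertices of $V(T)$, its total weight is at least $W^*$. Hence $\psi$ is a feasible $\QTSP$ tour starting and ending at $r$. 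Because the bound holds for every admissible $P$, in particular for the shortest one, the guarantee in the statement follows.

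The only substantive step is the extension of the Chaudhuri et al.\ guarantee from vertex counts to integer weights. I would justify it either by the vertex-copying reduction above, which is immediate for polynomially bounded weights, or by noting that their argument only ever counts covered vertices, so each count can be replaced by a sum of weights. The doubling and shortcutting step is standard, and the running time is polynomial because each step is.
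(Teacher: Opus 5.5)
Your proposal is correct and follows the same route as the paper: take the weighted Chaudhuri et al.\ tree containing $r$, of weight at least $W^*$ and length at most $(1+\epsilon')$ times any path ending in $r$ with weight at least $W^*$, then double its edges and shortcut the Eulerian circuit using the triangle inequality. Your explicit rescaling $\epsilon' = \epsilon/2$ and the vertex-copying fallback for polynomially bounded weights only make precise what the paper states in prose; in that fallback, keep one copy of the root $r$ (which has weight $0$ in the $\WTSP$ setting) rather than deleting it.
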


The fact that we compare the quality of the tour against a path ending in $r$ is relevant because such paths include the subpaths of the optimum tour after a specific weight has been picked up and thus are directly related to the values $ D^{\pi^*}(w(V) - \mathcal{W})$ for some given weight~$\mathcal{W}$. In our algorithm, we use a subroutine $\textsc{Qtsp}(v,(R,d),W)$ that returns a $\QTSP$ solution within this cost bound with root $v$ and weight parameter $W$ that only visits nodes in $R \subseteq V$ (one may note that $d$ restricted to $R$ is again a metric). We denote the length of this tour as $\ell(\textsc{Qtsp}(v,(R,d),W))$. We define $\ell(\textsc{Qtsp}(v,(R,d),W)) = \infty$ if there exists no feasible $\QTSP$ tour because $W$ is larger than the sum of the weights of the items in $R$.

The idea is now to find a sequence of subtours of increasing lengths (corresponding to increasing weights) and letting the agent travel over them in reversed order, i.e., starting with the longest tour and ending with the shortest one. To avoid that multiple subtours contain the same vertex, we maintain a set $R$ of vertices that are not contained in any tour yet and restrict the larger tours to only these vertices. We are able to ensure that for every weight~$\mathcal{W}$ there exists a subtour such that the agent does not pick up weight $\mathcal{W}$ before starting with this tour and the length of it as well as all subsequent tours is upper bounded by a constant times $D^{\pi^*}(\mathcal{W})$. This results in Algorithm \ref{Alg:8_apx}.

\begin{algorithm}
	\LinesNumbered
	\DontPrintSemicolon
	\SetKwInOut{Input}{input}
	\SetKwInOut{Output}{output}
    \caption{An $(8 + \epsilon)$-approximation for $\WTSP$.}\label{Alg:8_apx}
    $R = V$, $s = 0$\;
    \While{ $R \neq \{r\}$}{
        $s = s+1$\;
        Find $w_s$ s.t.\ $\ell(\textsc{Qtsp}(r, (R,d), w_s)) \leq 2^s < \ell(\textsc{Qtsp}(r,(R,d), w_{s}+1))$ (binary search)\;
        $\subtour_s = \textsc{Qtsp}(r,(R,d),w_s)$\;
        Remove all vertices in $\subtour_s$ (except $r$) from $R$\;
    }
    Initialize $\pi$ as a sequence only containing $r$\;
    \For{$i=s,\ldots 1$}{
     \If{$\subtour_i \neq (r)$}{Append $\subtour_i$ to $\pi$ (ignoring the vertex $r$)\;}
    }
    \textbf{return:} $\pi$
\end{algorithm}

To analyze the approximation ratio of this algorithm, we will often be interested in the total weight picked up in the $i$ smallest tours for a given $i \in [s]$. We define:

\begin{equation*}
    W_i^\subtour := \sum_{j=1}^i w_j.
\end{equation*}

For any weight $\mathcal{W}$, the path that the agent travels after picking up weight at least $\mathcal{W}$ ends in $r$ and contains vertices with weight at least $w(V) - \mathcal{W} + 1$. Given that the $\QTSP$ subroutine compares against the lengths of such paths, we can lower bound the value $D^{\pi^*}(\mathcal{W})$ as follows.

\begin{lemma}\label{lem:opt_plan_lower}
    For any $\mathcal{W}$ and $i \in [s]$ with $w(V) - \mathcal{W} \geq W_i^\subtour$ it holds that $2^{i} \leq (2 + \epsilon) D^{\pi^*}(\mathcal{W})$.
\end{lemma}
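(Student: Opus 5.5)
The plan is to compare the suffix of $\pi^*$ after weight $\mathcal{W}$ has been collected with the $\QTSP$ call made in iteration $i$ of Algorithm~\ref{Alg:8_apx}. That call returned a tour of length greater than $2^i$ for quota $w_i+1$, while the suffix, restricted to the still-available vertices, is a candidate path meeting that quota.

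First I would fix notation. Let $j$ be the smallest index with $W^{\pi^*}_j \geq \mathcal{W}$, and let $P=(\pi^*_j,\pi^*_{j+1},\ldots,\pi^*_{m+1}=r)$ be the path travelled after weight $\mathcal{W}$ is reached. By definition, $P$ has length $D^{\pi^*}(\mathcal{W})$. By minimality of $j$, the weight collected strictly before $\pi^*_j$ is $W^{\pi^*}_{j-1} < \mathcal{W}$. Hence the vertices of $P$ carry total weight at least $w(V)-\mathcal{W}+1$. (If no such $j$ exists, the statement is vacuous or trivial; I would dispose of this edge case separately.)

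Next, let $R_i$ be the set $R$ at the start of iteration $i$. It is $V$ minus the vertices of $\subtour_1,\ldots,\subtour_{i-1}$, and these removed vertices have total weight $W^{\subtour}_{i-1}$. Here I take $w_j$ to be the weight actually collected by $\subtour_j$. This is consistent with the definition of $W_i^\subtour$. If the subroutine could overshoot the quota, I would use the fact that the tree-based routine can be truncated, or that the binary search can be read as choosing $w_j$ as the collected weight; this bookkeeping point is the one I expect to need care.

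Let $P'$ be the path obtained from $P$ by shortcutting all vertices not in $R_i$. It still ends at $r$, since $r\in R_i$. By the triangle inequality, $\ell(P') \le D^{\pi^*}(\mathcal{W})$. Its weight satisfies
\[
w(P') \;\ge\; w(V)-\mathcal{W}+1-W^{\subtour}_{i-1} \;\ge\; W^{\subtour}_i+1-W^{\subtour}_{i-1} \;=\; w_i+1 ,
\]
where the second inequality uses the hypothesis $w(V)-\mathcal{W}\ge W^{\subtour}_i$.

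Finally, the $\QTSP$ guarantee (the theorem above, applied in the metric $(R_i,d)$) compares $\textsc{Qtsp}(r,(R_i,d),w_i+1)$ against every path in $R_i$ ending at $r$ with weight at least $w_i+1$, in particular against $P'$. By the binary search in line 4,
\[
2^i < \ell(\textsc{Qtsp}(r,(R_i,d),w_i+1)) \le (2+\epsilon)\,\ell(P') \le (2+\epsilon) D^{\pi^*}(\mathcal{W}),
\]
which proves the claim. The existence of $P'$ also shows that the quota $w_i+1$ is feasible in $R_i$, so the $\infty$ convention in the definition of $\ell(\textsc{Qtsp}(\cdot))$ does not interfere with this chain of inequalities.
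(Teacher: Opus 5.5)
Your argument is correct and is the paper's proof run directly rather than by contraposition. The paper assumes $2^i>(2+\epsilon)D^{\pi^*}(\mathcal W)$ and uses the same shortcut suffix of $\pi^*$ inside $R$ to force $w_i\ge w(V)-\mathcal W-W^{\subtour}_{i-1}+1$, which is exactly your comparison against the quota $w_i+1$.

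The bookkeeping issue you flag is real: the paper silently asserts that $V\setminus R$ has weight at most $W^{\subtour}_{i-1}$, and this can fail if $\textsc{Qtsp}$ overshoots its quota. Your phrasing of the fix is slightly off, though, because line 4 gives $2^i<\ell(\textsc{Qtsp}(\cdot,q_i+1))$ only for the binary-search quota $q_i$, not for the collected weight. The clean repair is to let $w_j$ denote the weight actually collected by $\subtour_j$ and apply that inequality at $q_i+1\le w_i+1$, a quota that $P'$ still meets.
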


\begin{proof}

    Let $i$ be an index such that $2^{i} > (2 + \epsilon) D^{\pi^*}(\mathcal{W})$. We will show that $W_i^\subtour> w(V) - \mathcal{W}$. Let $l$ be the index after which the agent has picked up weight at least $\mathcal{W}$ on the optimum tour $\pi^*$. Then the path $\pi^*_l,\pi^*_{l+1},\ldots,\pi^*_n,\pi^*_1$ has exactly length $D^{\pi^*}(\mathcal{W})$. The total weight of the vertices on this path is at least $w(V)-\mathcal{W} + 1 $, given that the total weight of the previous vertices is strictly smaller than $\mathcal{W}$ and thus upper bounded by $\mathcal{W} - 1$.

     Let $R$ be the set of vertices left in the graph at the beginning of the iteration of Algorithm~\ref{Alg:8_apx} in which $\subtour_i$ gets calculated. Let $\subtour^*$ be the path that gets created by removing all vertices in $V \setminus R$ from $\pi^*_l,\pi^*_{l+1},\ldots,\pi^*_n,\pi^*_1$. Since the total weight of items in $V \setminus R$ is at most $W_{i-1}^\subtour$, the weight of the vertices on this path is at least $w(V)-\mathcal{W} - W_{i-1}^\subtour + 1$. Additionally by the triangle inequality we know that $\ell(\subtour^*) \leq D^{\pi^*}(\mathcal{W})$.

    For any weight $W$, the subroutine  $\textsc{Qtsp}(r, (R,d), W)$ returns a tour with length at most $(2 + \epsilon)$ times the length of the shortest path ending in $r$ that picks up at least the desired weight $W$. Thus, we know that for any $W \leq w(V)-\mathcal{W} - W_{i-1}^\subtour + 1$ the length of the resulting tour is bounded by $(2 + \epsilon) \ell(\subtour^*)\leq (2 + \epsilon) D^{\pi^*}(\mathcal{W}) < 2^{i}$ and the algorithm chooses $w_i \geq w(V)-\mathcal{W} - W_{i-1}^\subtour + 1 $. This clearly implies that $W_i^\subtour = w_i + W_{i-1}^\subtour \geq w(V) - \mathcal{W} +1$ which proves the lemma.
\end{proof}

At the same time, we can bound $D^{\pi}(\mathcal{W})$ by the length of the $i + 1$ shortest subtours for a given index $i \in [s]$ if the total weight $W_{i+1}^\subtour$ of these tours is larger than $w(V) - \mathcal{W}$.

\begin{lemma}\label{lem:alg_plan_upper}
For any $\mathcal{W} \in \mathbb{N}_0$ and any $i \in [s]$ with $w(V) - \mathcal{W} \leq W_{i+1}^\subtour - 1$ it holds that $D^\pi(\mathcal{W})\leq 2^{i+2}$
\end{lemma}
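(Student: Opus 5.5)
The plan is to use the structure of the output tour $\pi$: it traverses $\subtour_s,\subtour_{s-1},\ldots,\subtour_1$ in this order, so the last portion of $\pi$ consists of the subtours $\subtour_{i+1},\subtour_i,\ldots,\subtour_1$. The total weight picked up on these final subtours is exactly $W_{i+1}^\subtour$. Consequently, the weight the agent collects \emph{before} starting $\subtour_{i+1}$ equals $w(V) - W_{i+1}^\subtour$.

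The first step is to locate the point where the agent reaches weight $\mathcal{W}$. By the hypothesis $w(V) - \mathcal{W} \le W_{i+1}^\subtour - 1$, we have $w(V) - W_{i+1}^\subtour \le \mathcal{W} - 1 < \mathcal{W}$. So the agent has not yet reached weight $\mathcal{W}$ when it begins $\subtour_{i+1}$. The smallest index $j$ with $W_j^\pi \ge \mathcal{W}$ therefore lies at a vertex of $\subtour_{i+1}$ or later. Hence $D^\pi(\mathcal{W})$ is at most the length of the portion of $\pi$ from the start of $\subtour_{i+1}$ back to $r$.

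The next step is to bound that portion. In $\pi$, consecutive subtours are concatenated with the intermediate visits to $r$ skipped, and the empty subtours $\subtour_j=(r)$ are dropped. Each such shortcut only decreases length by the triangle inequality. The suffix length is therefore at most $\sum_{j=1}^{i+1} \ell(\subtour_j)$; if the suffix starts mid-route, the remaining path is also bounded by this sum via shortcutting. By the binary search in Algorithm~\ref{Alg:8_apx}, $\ell(\subtour_j) \le 2^j$. This gives $D^\pi(\mathcal{W}) \le \sum_{j=1}^{i+1} 2^j < 2^{i+2}$.

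The main obstacle is a bookkeeping detail: $W_i^\pi$ counts weight collected up to and including vertex $i$, and $D^\pi$ starts at the first index reaching $\mathcal{W}$. I will check that this index is not earlier than the first vertex of $\subtour_{i+1}$, or, if $\subtour_{i+1}$ is empty, not earlier than the first nonempty subtour among $\subtour_i,\ldots,\subtour_1$. This follows from the strict inequality above.

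A second case is $i+1 > s$, so that $\subtour_{i+1}$ does not exist. Then $W_{i+1}^\subtour$ should be read as $w(V)$, and the whole tour has length at most $\sum_{j\le s} 2^j < 2^{s+1} \le 2^{i+2}$, so the bound still holds.
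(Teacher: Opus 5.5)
Your argument is the same as the paper's: the weight collected before the first vertex of $\psi_{i+1},\ldots,\psi_1$ is at most $\mathcal{W}-1$, so $D^\pi(\mathcal{W})$ is at most the shortcut length $\sum_{j=1}^{i+1}\ell(\psi_j)\le 2^{i+2}$. One small correction: the \QTSP{} subroutine only guarantees that $\psi_j$ has weight \emph{at least} $w_j$, so these subtours carry weight at least (not exactly) $W_{i+1}^\psi$, and the weight collected beforehand is at most (not equal to) $w(V)-W_{i+1}^\psi$, which is all your argument needs.
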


\begin{proof}
    One may note that the entirety of vertices in the tours $\subtour_{i+1}, \ldots ,\subtour_1$ have a total weight of at least $W_{i+1}^\subtour$. Thus, the weight that $\pi$ picked up before visiting any of these vertices is bounded by $w(V) - (W_{i+1}^\subtour) \leq \mathcal{W} - 1$. As a result $D^{\pi}(\mathcal{W})$ is upper bounded by the remaining length of $\pi$ after the first vertex (except $r$) on one of these tours is visited. Given that we remove a vertex from $R$ if it appears in a tour, the tours $\subtour_{i+2},\ldots,\subtour_n$ do not contain any vertices in $\subtour_{i+1}, \ldots ,\subtour_1$. Thus, we can bound $D^{\pi}(\mathcal{W})$ by the length of the combined subtours $\subtour_{i+1}, \ldots ,\subtour_1$:
    \begin{equation*}
        D^{\pi}(\mathcal{W}) \leq \sum_{j=1}^{i+1} \ell(\subtour_j) \leq \sum_{j=1}^{i+1} 2^{j} \leq 2^{i+2}\qedhere
    \end{equation*}
\end{proof}

By inserting the previous two lemmas into Lemma \ref{lem:bound} we obtain.

\begin{theorem}
    If all weights are polynomially bounded integers, Algorithm \ref{Alg:8_apx} calculates an $(8 + 4 \epsilon)$-Approximation of the optimum $\WTSP$ tour in polynomial time.
\end{theorem}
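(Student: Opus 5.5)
The plan is to verify the hypothesis of Lemma~\ref{lem:bound} with $c = 8 + 4\epsilon$. Then argue polynomial running time separately.

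Fix an arbitrary weight $\mathcal{W}\in\mathbb{N}_0$. Weights $\mathcal{W}$ with $D^{\pi}(\mathcal{W}) = 0$ are trivial, and we may restrict to $1 \le \mathcal{W} \le w(V)$. Otherwise $D^\pi(\mathcal{W})=0$, or we are at $\mathcal{W}=0$, where both sides are full tour lengths and the same argument applies.

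Let $i$ be the largest index in $\{0,\dots,s\}$ with $W_i^\subtour \le w(V)-\mathcal{W}$, with the convention $W_0^\subtour = 0$. Since the algorithm terminates only when $R=\{r\}$, we have $W_s^\subtour = w(V) > w(V)-\mathcal{W}$, so $i<s$. By maximality, $w(V)-\mathcal{W} \le W_{i+1}^\subtour - 1$.

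Next we combine the two lemmas. Lemma~\ref{lem:alg_plan_upper} with index $i$ gives $D^\pi(\mathcal{W}) \le 2^{i+2}$. If $i\ge 1$, Lemma~\ref{lem:opt_plan_lower} gives $2^i \le (2+\epsilon) D^{\pi^*}(\mathcal{W})$. Hence $D^\pi(\mathcal{W}) \le 4\cdot 2^i \le (8+4\epsilon) D^{\pi^*}(\mathcal{W})$.

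The case $i=0$ needs separate handling, and it is also the one place where the stated lemmas' index ranges don't literally cover the argument. Here we need $D^\pi(\mathcal{W}) \le \ell(\subtour_1) \le 2$ against a lower bound on $D^{\pi^*}(\mathcal{W})$. Since $\mathcal{W}\le w(V)$, the optimal tour still visits some non-root vertex after reaching weight $\mathcal{W}$ and then returns to $r$. Every non-root vertex (distinct position) has distance at least $1$ to $r$ by the scaling assumption, so $D^{\pi^*}(\mathcal{W})\ge 1$, and thus $D^\pi(\mathcal{W})\le 2 \le (8+4\epsilon)D^{\pi^*}(\mathcal{W})$.

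Degenerate zero-distance vertices at $r$ also need care. Such vertices would be picked up in the $s=1$ iteration with zero length, so the bound still holds; alternatively they can be handled by the scaling convention. With the hypothesis verified for every $\mathcal{W}$, Lemma~\ref{lem:bound} yields $\cost(\pi)\le (8+4\epsilon)\cost(\pi^*)$.

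For the running time, each iteration either removes a vertex or only increases $s$. Once $2^s$ exceeds $(2+\epsilon)$ times twice the diameter-based bound on a tour through all of $R$, the $\QTSP$ call returns all of $R$. So the number of iterations is at most $n$ plus $O(\log(n \cdot d_{\max}))$, which is polynomial in the input size. Each binary search over $w_s\in[0,w(V)]$ takes $O(\log w(V))$ calls to the polynomial $\QTSP$ subroutine, and these calls are polynomial because the weights are polynomially bounded integers, so node copying works.

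The main obstacle I expect is the boundary bookkeeping: the off-by-one between $w(V)-\mathcal{W}$ and the $+1$ in the definition of $D$, and the edge cases $i=0$ and $\mathcal{W}=0$, which need the distance-$1$ normalization rather than the lemmas.
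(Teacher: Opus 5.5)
Your argument is essentially the paper's proof. You pick $i$ with $W_i^\subtour \le w(V)-\mathcal{W} \le W_{i+1}^\subtour - 1$, combine Lemmas~\ref{lem:alg_plan_upper} and~\ref{lem:opt_plan_lower} into $D^\pi(\mathcal{W}) \le 2^{i+2} \le (8+4\epsilon)D^{\pi^*}(\mathcal{W})$, apply Lemma~\ref{lem:bound}, and bound the running time by the number of doubling rounds times the $O(\log w(V))$ binary-search calls. Your separate case $i=0$, using $\ell(\subtour_1)\le 2$ and $D^{\pi^*}(\mathcal{W})\ge 1$, is correct. It actually covers a boundary the paper's proof skips: the paper asserts that such an $i\in[s]$ always exists, which fails when $w(V)-\mathcal{W} < w_1$.

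Two small corrections.

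First, your aside that zero-distance copies of $r$ are harmless is wrong. The order of vertices inside $\subtour_1$ is not controlled, so such a copy can be visited before a heavy vertex at distance $1$. That gives $D^\pi(\mathcal{W})\ge 1 > 0 = D^{\pi^*}(\mathcal{W})$, and an unbounded cost ratio for a steep $f$. This case must be excluded by assumption, or handled by moving those vertices to the very end of the tour.

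Second, for $\mathcal{W}=0$ (also omitted in the paper), your maximal index is $i=s$, where neither lemma applies as stated. Instead, for $s\ge 2$, combine $D^\pi(0)\le 2^{s+1}$ with $2^{s-1} < (2+\epsilon)D^{\pi^*}(0)$. The latter holds because round $s-1$ did not exhaust $R$; the case $s=1$ is trivial.
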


\begin{proof}
    First, we prove the approximation factor. By Lemma \ref{lem:bound}, it is sufficient to show that for any $\mathcal{W} \in [w(V)]$ it holds that $D^{\pi}(\mathcal{W}) \leq (8 + 4 \epsilon) D^{\pi^*}(\mathcal{W})$. Let $\mathcal{W}$ be chosen arbitrarily. Then there exists an index $i \in [s]$ such that $W_i^\subtour \leq w(V) - \mathcal{W} \leq W_{i+1}^\subtour -1$. By Lemma~\ref{lem:opt_plan_lower} and \ref{lem:alg_plan_upper} we obtain:
    \begin{equation*}
        D^\pi(\mathcal{W}) \leq 2^{i+2} \leq 4 (2+ \epsilon) D^{\pi^*}(\mathcal{W}) = (8 + 4 \epsilon)D^{\pi^*}(\mathcal{W}).
    \end{equation*}

    To bound the running time, let $u,v$ be the two vertices maximizing the pairwise distance $d(u,v)$. Then the length of any tour returned by the $\QTSP$ algorithm has at most length $n \cdot d(u,v)$. As a result, we can bound the final value of $s$ and thus also the iterations of the while loop by $\lceil \log(n \cdot d(u,v))\rceil$. Given that our instance requires at least $\Omega(\log(d(u,v)))$ many bits to encode $d(u,v)$, this is bounded polynomially of the input size. 

    Given that we assumed that the weights are integers we also know that we can find for any $i \in [s]$ the value $w_i$ by $O(\log(w(V)))$ uses of the algorithm $\textsc{Qtsp}$ which is again polynomial. Also the length of all other steps of the algorithm clearly only require polynomial time. Thus, the theorem holds.
\end{proof}

\section{A Bicriteria Approximation for Constrained TTP}\label{sec:thief}

In this section we develop a bicriteria approximation algorithm for the Constrained Traveling Thief Problem (\cthief). For this, we need another method to bound the cost of a tour against the cost of an optimum solution. Let us assume w.l.o.g.\ that $f(0) = 1$ (which can be ensured via scaling). We subdivide the range of potential weights $0,\ldots, w_{\max}$ into intervals in which the speed of the agent does not differ by more than a factor of two. To be more precise let $s = \lceil \log(f(w_{\max})\rceil - 1$. Then for every $i \in [s]$ we define a value $T_i$ such that $f(T_i) < 2^i \leq f(T_i +1)$ and for $i = 0$ we set $T_0 = 0$ and for $i = {s+1}$ we set $T_{s+1} = w_{\max}$. Then we can lower bound the cost of the optimum solution as follows.

\begin{figure}
\centering
    \includegraphics[width= 0.8 \textwidth]{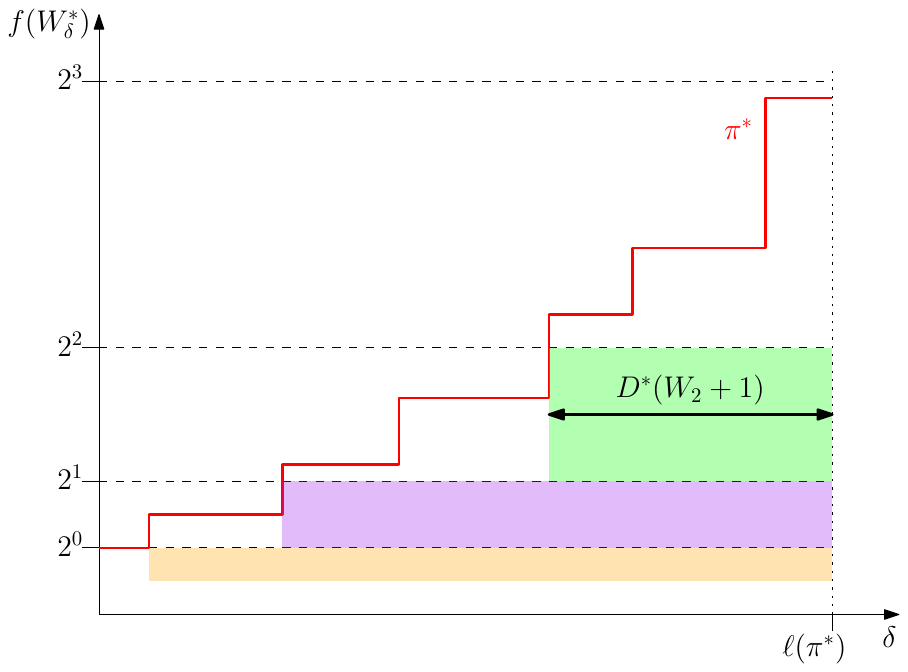}
    \caption{A sketch depicting how the cost of the optimum solution can be lower bounded. For a distance $\delta$ we define $W_\delta^*$ to be the weight that the agent has picked up on the optimum tour $\pi^*$ after traveling distance $\delta$. The the red curve depicts the value of the cost function $f$ over the course of $\pi^*$ and it is easy to verify that the cost of the tour is equal to the area below this curve. In Lemma \ref{lem:alt_lower_bound} we lower bound this by the sum of the areas of the colored rectangles.} \label{fig:lower_bound}
\end{figure}
 
\begin{restatable}{lemma}{LemAltLowerBound}\label{lem:alt_lower_bound}
Let $\pi^*$ be the optimum tour. It holds that
    \begin{equation*}
        \cost(\pi^*) \geq \sum_{i=0}^{s} 2^{i-1} \cdot D^{\pi^*}(T_{i} + 1).
    \end{equation*}
\end{restatable}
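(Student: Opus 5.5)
We write $\cost(\pi^*)$ as a sum of telescoping contributions, as in the proof of Lemma~\ref{lem:bound}, and then keep only the part of each contribution that corresponds to one of the weight thresholds $T_i+1$. Concretely, we use the identity
\begin{equation*}
\cost(\pi^*) = \sum_{W=0}^{w(\pi^*)} D^{\pi^*}(W)\,\bigl(f(W)-f(W-1)\bigr),
\end{equation*}
with $f(-1)=0$. It holds for tours collecting only a subset of the items: an edge traversed while carrying weight $W_j$ is counted in exactly the terms with $W\le W_j$, and these telescope to $f(W_j)$. Since $f$ is non-decreasing, every term is non-negative, so we may discard terms and group the rest.

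We split the weight range into blocks $\{0\}$ and $(T_{i-1}+1,\,T_i+1]$ for $i=1,\dots,s+1$ (capped at $w(\pi^*)$). Since $D^{\pi^*}$ is non-increasing in $W$, within the block ending at $T_i+1$ every $D^{\pi^*}(W)$ is at least $D^{\pi^*}(T_i+1)$. That block therefore contributes at least $D^{\pi^*}(T_i+1)\,(f(T_i+1)-f(T_{i-1}+1))$, and the telescoping is exact inside the block.

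Rather than working with blocks directly, it is cleaner to compare with the right-hand side through the pointwise bound
\begin{equation*}
\sum_{i=0}^{s}2^{i-1}\,\mathbf{1}[W\ge T_i+1]\;\le\; f(W)\quad\text{for each weight } W \text{ carried.}
\end{equation*}
This suffices because both sides are integrals over the tour: $\sum_i 2^{i-1}D^{\pi^*}(T_i+1)$ equals the sum over edges of $d(e)$ times $\sum_{i:\,W_e\ge T_i+1}2^{i-1}$, where $W_e$ is the weight carried on edge $e$. Here we use that $D^{\pi^*}(T_i+1)$ is exactly the total length of the edges traversed with weight at least $T_i+1$, which follows from the definition of $D^{\pi^*}$ via the first index where the weight reaches the threshold. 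Likewise $\cost(\pi^*)=\sum_e d(e)f(W_e)$.

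To verify the pointwise bound, let $k$ be the largest index with $W\ge T_k+1$. If no such $k$ exists, the left side is $0$. Otherwise the left side is $\sum_{i=0}^{k}2^{i-1}<2^{k}$. For $k\ge1$ the definition of $T_k$ gives $f(T_k+1)\ge 2^k$, hence $f(W)\ge 2^k$. For $k=0$ the left side is $1/2$ and $f(W)\ge f(0)=1$. Either way the inequality holds.

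There are two boundary cases. For $i=s+1$ we have $T_{s+1}=w_{\max}$, and that index is not in the sum. If $T_i+1$ exceeds the weight ever carried, then $D^{\pi^*}(T_i+1)=0$ under the convention that the distance after a never-reached weight is $0$.

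The main obstacle is bookkeeping rather than ideas. We must check that $D^{\pi^*}(T_i+1)$ really equals the length traveled with weight $\ge T_i+1$, including the off-by-one from "picking up weight at least $W$" and the edge leaving the node where that weight is reached. We also need the convention for unreached weights and the $i=0$ term with $T_0=0$.
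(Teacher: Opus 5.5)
Your proof is correct and is essentially the paper's argument. The paper bounds $f$ from below by the step function $g$ equal to $2^i$ on $\{T_i+1,\dots,T_{i+1}\}$, writes $D^{\pi^*}(T_i+1)=\sum_{j\ge i}\ell_j$, and uses $2^i\ge\sum_{i'\le i}2^{i'-1}$; that is exactly your pointwise inequality $\sum_{i}2^{i-1}\mathbf{1}[W\ge T_i+1]\le f(W)$ summed over the edges of $\pi^*$. The opening block discussion is not needed, and comparing each block with $2^{i-1}D^{\pi^*}(T_i+1)$ term by term would fail when $T_{i-1}=T_i$ because that block is then empty, but the layered argument you actually rely on is complete.
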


\begin{proof}
    Figure \ref{fig:lower_bound} sketches the idea behind this proof. We consider an alternative cost function $g$ with $g(w) = 2^i$ for any $w \in \{T_i + 1, \ldots, T_{i+1}\}$ and denote the cost of tour $\pi^*$ if we use $g$ instead of $f$ as $cost'(\pi^*)$. It is easy to observe that for any $w \in [w_{\max}]\cup \{0\}$ it holds that $f(w) \geq g(w)$ and thus $ cost(\pi^*)\geq cost'(\pi^*)$.
    For any $i \in \{0,\ldots,s-1\}$ we denote the length of the subpath on which the agent has picked up weight at least $T_i + 1$ and less than $T_{i+1} + 1$ by $\ell_i$. One might note that $\ell_i$ might be $0$ if there exists a vertex where the agent arrives with weight smaller $T_i + 1$ and leaves with weight at least $T_{i+1} + 1$. For $i = s$ we define $L_s$ as the length of the subpath after the agent has picked up weight at least $T_s + 1$. Then
    \begin{align*}
        cost'(\pi^*) &\geq \sum_{i=0}^s g(T_{i}+1) \cdot \ell_i\\
        &= \sum_{i=0}^s 2^i \cdot \ell_i\\
        &\geq \sum_{i=0}^s (2^i - 2^{i-1})\cdot \sum_{j=i}^s \ell_j\\
        &= \sum_{i=0}^s 2^{i-1} \cdot D^{\pi^*}(T_{i} + 1),
    \end{align*}
    where we used in the last inequality that $D^{\pi^*}(T_i+1)$ denotes exactly the length of the subtour after the agent has picked up weight at least $T_i +1$ which can be subdivided into the subpaths corresponding to $\ell_1,\ldots, \ell_s$.    
\end{proof}

Intuitively, this lemma will allow us to calculate for every interval $[T_{i} +1,T_{i+1}]$ a subtour on which the collected weight lies within this interval. Given that the speed of the agent only differs by $2$ on this subtour the exact order of the locations visited is not too important. If for every $i$ the length of the subtour is bounded by $D^{\pi^*}(T_{i} + 1)$ we know that if we append these tours after each other the total travel time of the resulting tour is then also bounded by a constant times $\cost(\pi^*)$.

For any $i \in [s] \cup \{0\}$ we define the the index $b_i$ such that $W_{b_i -1}^{\pi^*} \leq T_i$ and $W^{\pi^*}_{b_i} \geq T_i +1$. Or in other words the locations $\pi_{b_1}^*,\ldots,\pi_{b_s}^*$ are exactly those where the weight that the agent picked up moves into another interval. For technical reasons we also define $b_{s+1} = m +1$ (where $m$ is the number of items collected by the optimum tour). We define $B = \{ \pi_{b_i}^*\mid i \in [s] \cup \{0\}\}$. We present two algorithms that compute two possible $\cthief$ tours $\pi^{(1)}$ and $\pi^{(2)}$ such that both $\cost(\pi^{(1)})$ and $\cost(\pi^{(2)})$ can be bounded by a constant times $\cost(\pi^*)$. Additionally, we have that if $p(B)$ is not too large then $p(\pi^{(1)})$ is at least a constant fraction of $\mathcal{P}$. Otherwise, $p(\pi^{(2)})$ can be lower bounded and we obtain a bicriteria approximation by always taking the more profitable tour.

To calculate $\pi^{(1)}$, we consider for any $i \in [s] \cup \{0\}$ the path $(\pi_{b_i +1}^*,\ldots,\pi_{b_{i+1} -1}^*,r)$ that visits all vertices on the optimum tour appearing after $\pi_{b_i}^*$ and before $\pi_{b_{i+1}}^*$. Then the length of this path is upper bounded by $ D^{\pi^*}(T_{i} + 1)$ and the weight picked up by $T_{i+1} - T_i$. The existence of such a path will enable us to find a subtour corresponding to the weight interval $[T_{i} +1,T_{i+1}]$ that is not too expensive (when starting with the weight collected by the subtours corresponding to the previous intervals) and picks up items whose profit is lower bounded by a constant times $\sum_{j= b_i +1}^{b_{i+1} -1} p(\pi_j^*)$. Since every item in $V(\pi^*)$ that is not contained in $B$ is contained in one of this described paths we will be able to use this to lower bound the values of $\pi^{(1)}$. The details are presented in Section \ref{sec:path1}.

To obtain the tour $\pi^{(2)}$, we can use Lemma \ref{lem:alt_lower_bound} to conclude that a tour that visits the items in $B$ in the same order as $\pi^*$ and returns to $r$ between any two items does not take too much longer than $\pi^*$. Since the items are directly visited from $r$, the distance that needs to be traveled to reach an item is independent from the remainder of the tour. As a result, we can use a knapsack algorithm that yields a tour whose profit is at least a constant fraction of $p(B)$. The details are presented in Section \ref{sec:path2}.

For both algorithms we need to have an estimate for the travel time of the optimum tour. We assume that a value $\mathcal{D}$ is given such that $ \cost(\pi^*) \leq \mathcal{D} \leq (1 + \epsilon) \cost(\pi^*)$. This can be ensured by trying out all values $(1+\epsilon)^t$ for $t \leq \lceil \max_{v,w \in V} \log_{(1+ \epsilon)}(n \cdot d(v,w) \cdot f(w_{\max})) \rceil$.  

Additionally, we denote for a subtour $\subtour = (\subtour_1,..,\subtour_m, \subtour_{m+1})$ where $\subtour_1= \subtour_{m+1}$ and a weight $W$ the cost of $\subtour$ when the agent collected already weight $W$ as 
\begin{equation*}
    \cost(\subtour, W) =  \sum_{i=1}^{m} d(\subtour_i,\subtour_{i+1}) f(W_i^\subtour + W).
\end{equation*}
This definition will help us when bounding the cost of the calculated solutions. 

\subsection{Obtaining the First Solution}\label{sec:path1}

To obtain the solution $\pi^{(1)}$ we need some results for the so called \emph{capacitated orienteering problem} (\cop)~\cite{bock2015capacitated}. In this problem one is given a point set $V$, a distance metric $d: V^2 \rightarrow \mathbb{R}_{\geq 0}$, a weight function $w: V \rightarrow \mathbb{N}$, a reward function $p: V \rightarrow \mathbb{N}$, a capacity $C \in \mathbb{N}$, a length restriction $L \in \mathbb{R}_{\geq 0}$ and a root $r \in V$. The objective is to obtain a path $\subtour$ ending in $r$ such that the total weight of the vertices on the path is bounded by $C$, the length is bounded by $L$ and the reward of the vertices on the path gets maximized. Bock and Sanit{\`a}~\cite{bock2015capacitated} showed the following theorem.

\begin{restatable}{theorem}{ThmCOP}
    For any constant $\epsilon > 0$, one can $(3+ \epsilon)$-approximate the the capacitated orienteering problem in polynomial time. If the distance metric $d$ is Euclidean the approximation factor can be improved to $(1+ \epsilon)$.
\end{restatable}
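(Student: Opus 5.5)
This statement is quoted from Bock and Sanit\`a~\cite{bock2015capacitated}, so the paper will rely on it rather than prove it. If I had to reprove it, I would follow the standard orienteering framework: reduce the problem to a Lagrangian/budgeted tree problem, and handle the capacity constraint with a knapsack-style split.

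\textbf{Step 1: guessing and preprocessing.} Guess the farthest vertex $v$ (from $r$) on an optimal path $\subtour^*$, and discard every vertex at distance greater than $d(r,v)$. Also guess the $O(1/\epsilon)$ highest-reward vertices of $\subtour^*$. The remaining optimal reward then comes from vertices that are individually small compared with the guessed ones.

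\textbf{Step 2: split by weight.} Call a vertex \emph{heavy} if its weight exceeds $C/2$, and \emph{light} otherwise. $\subtour^*$ contains at most one heavy vertex. The best single heavy vertex $u$ with $2d(r,u)\le L$ is therefore found by enumeration, and it captures the reward of that vertex.

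\textbf{Step 3: light vertices, the main obstacle.} Among the light vertices I must find a path that is simultaneously within length $L$ and within capacity $C$. I would first drop the capacity constraint. Using a $(2+\epsilon)$-type orienteering algorithm (e.g.\ via the $k$-path subroutine of Chaudhuri et al.\ with rewards scaled to polynomial size), I compute a rooted path of length at most $L$ collecting reward at least that of the light part of $\subtour^*$ divided by $(2+\epsilon)$. I expect the weight accounting here to be the most delicate point.

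\textbf{Step 4: restoring capacity.} If this path exceeds $C$, cut it into consecutive segments of weight in $[C/2, C]$, which is possible because every vertex is light. Keep the most profitable segment and shortcut it to $r$, which keeps the length within $L$ by the triangle inequality, since all vertices lie within radius $d(r,v)$. Each segment's weight is at most $C$, so the kept segment is feasible.

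\textbf{Step 5: the accounting that yields $3+\epsilon$.} A naive accounting here only gives a larger constant. Obtaining exactly $3+\epsilon$ requires a more careful Lagrangian combination of the reward and weight budgets, and I would defer to the analysis in~\cite{bock2015capacitated} for that.

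\textbf{Euclidean case.} The $(1+\epsilon)$ bound would come from Arora-style dynamic programming on portals, with the knapsack weight rounded into the DP state.

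Since the paper uses this only as an imported black-box result, citing~\cite{bock2015capacitated} suffices.
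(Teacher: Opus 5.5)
Your conclusion matches the paper: the theorem is imported from Bock and Sanit\`a~\cite{bock2015capacitated} and the paper gives no proof of its own, so citing it is exactly what the paper does. If you want your sketch to stand on its own, note that Step~4 fails more badly than you suggest, because it loses an unbounded factor rather than a constant. The capacity-oblivious orienteering path can have weight $W$ arbitrarily larger than $C$, with its reward spread evenly along it, so the best of the roughly $2W/C$ segments keeps only an $O(C/W)$ fraction of that reward; the capacity has to be handled inside the subproblem (e.g.\ by trading reward against weight there) rather than repaired afterwards.
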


In our algorithm we use a subroutine $\cop(S,C,L)$ that returns a path ending in $r$ that only visits vertices in $S \subseteq V$ such that its weight is at most $C$, its length at most $L$ and its profit is lower bounded by $\frac{1}{3 + \epsilon}$ times the profit of any other path fulfilling this requirements. We directly interpret the result of the subroutine as a subtour on which the agent travels from $r$ to the start of the path and then follows the path back to $r$.

If we knew the value $D^{\pi^*}(T_{i} +1)$ for an $i \in [s] \cup \{0\}$ we could use this algorithm to find a `good' subtour of length at most $D^{\pi^*}(T_{i} +1)$ that picks up at most weight $T_{i+1} - T_i$ which means that the collected weight stays within the interval $[T_{i}, T_{i+1}]$ if the agent collected at most weight $T_i$ on the previous tours. We may `guess' the value of $D^{\pi^*}(T_{i} +1)$ with accuracy $(1+\epsilon)$ by trying out all values $(1+\epsilon)^j$ for reasonable integer values $j$, but it is not directly possible to detect whether or not a guess is correct. For two different guesses of the length, it is possible (and actually likely) that for one guess the resulting tour produced by $\cop$ will take less time to traverse while the tour of the other guess picks up a larger profit. The choice which tour is preferable depends on the knowledge how much the agent needs to travel within the other intervals to pick up a certain profit. 

To circumvent this problem, we may use our estimate $\mathcal{D}$ of the cost of the optimum tour. If we only consider `efficient' subtours on which the ratio between the profit and the cost of the tour is lower bounded by a constant times $\frac{\mathcal{P}}{\mathcal{D}}$ then we know that the cost of the resulting tour can only exceed the cost of the optimum tour if we also pick up a constant fraction of the profit. If the cost of the optimum solution is exceeded by too much, we can simply remove some of the subtours. When considering the possible subtours for an interval, we pick the most profitable among all efficient subtours. We will be able to use Lemma \ref{lem:alt_lower_bound} to ensure that then at least a constant fraction of the optimum profit is picked up (if $p(B)$ is not too large). We end up with Algorithm~\ref{Alg:profit_tour1}.

\begin{restatable}{algorithm}{AlgPiOne}
	\LinesNumbered
	\DontPrintSemicolon
	\SetKwInOut{Input}{input}
	\SetKwInOut{Output}{output}
    \caption{Calculation of the first tour $\pi^{(1)}$.}\label{Alg:profit_tour1}
    $P_{0}=\emptyset$, $t_1 = \frac{1}{9 + 3 \epsilon} \cdot \frac{\mathcal{P}}{(6 + 6\epsilon)\mathcal{D}}$ \;
     \For{ $i = 0, \ldots s$}{
         $\subtour_i = (r)$\;
         \For{$j=0,\ldots,\lceil\log_{1+\epsilon}(\mathcal{D}/2^{i})\rceil$}{
             $\subtour' = \textsc{Cop}((V \setminus P_i,d),T_{i+1} - T_i,(1+\epsilon)^j)$\;
             \If{$\frac{p(\subtour')}{\cost(\subtour',w(P_i))}\geq t_1$}{
                 \If{$p(\subtour')\geq p(\subtour_i)$}{
                     Set $\subtour_i = \subtour'$\;
                 }
             }
         }
         Set $P_{i +1} = P_i \cup V(\subtour_i)$\;
     }
    Initialize $\pi^{(1)}$ as a sequence only containing $r$\;
    $i= 0$\;
    \While{$\cost(\pi^{(1)})\leq (6 + 6\epsilon) \mathcal{D}$ and $i \leq s$}{
     \If{$\subtour_i \neq (r)$}{Append $\subtour_i$ to $\pi^{(1)}$ (ignoring the vertex $r$)\;}
     $i = i+1$\;
    }
    \textbf{return:} $\pi^{(1)}$
\end{restatable}

For any $i \in [s] \cup \{0\}$ we consider the subtour that picks up all items that are collected on $\pi^*$ while the collected weight lies within $[T_i,T_{i+1}]$. If the ratio between the profit and the travel time of this tour (if one starts with weight $T_i$) is not too low we call the interval $[T_i,T_{i+1}]$ efficient. We provide a formal definition that also extends to the case that some items have already been collected:

\begin{restatable}{definition}{DefPiOneEff}
    For any $i \in [s] \cup \{0\}$ we define the set $A_i = \{\pi_j^*\mid T_{i}+1 \leq W_{j-1}^{\pi^*}\leq T_{i+1}-w(\pi^*_j)\}$ of items that are visited on the optimum tour such that both directly before and after this visit the collected weight lies within $[T_{i}+1,T_{i+1}]$. 
    
    For any set $P \subseteq V$ we say that the interval $[T_{i}+1,T_{i+1}]$ is efficient after $P$ has been collected if $\frac{p(A_i \setminus P)}{D^{\pi^*}(T_{i}+1)\cdot 2^{i+1}} \geq \frac{1}{3} \cdot \frac{\mathcal{P}}{ 4 \cdot \mathcal{D}}$.
\end{restatable}

    

If an interval is efficient this guarantees us a good solution that we can compare against when calculating $\subtour_i$:

\begin{restatable}{lemma}{LemEfficientSubtour}\label{lem:efficient_subtour}
    For any $i \in [s] \cup \{0\}$ it holds that if the interval $[T_{i}+1,T_{i+1}]$ is efficient after $P_i$ has been collected then $p(\subtour_i) \geq \frac{p(A_i \setminus P_i)}{3+ \epsilon}$.
\end{restatable}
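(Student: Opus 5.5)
The plan is to exhibit, for the efficient interval $[T_i+1,T_{i+1}]$, a witness path that is feasible for one of the $\cop$ calls in the inner loop. We then show that the path returned by that call passes the efficiency test, so it is a candidate, and it has profit at least $p(A_i\setminus P_i)/(3+\epsilon)$. Since $\subtour_i$ is the most profitable candidate, the claim follows.

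\emph{Witness.} Let $\subtour^*$ be the path that visits the vertices of $A_i\setminus P_i$ in the order of $\pi^*$ and then goes to $r$. The vertices of $A_i$ form a contiguous stretch of $\pi^*$ lying after the point where weight $T_i+1$ is reached, so by the triangle inequality $\ell(\subtour^*)\le D^{\pi^*}(T_i+1)$. The weight of $A_i$ is at most $T_{i+1}-T_i$, because all of it is picked up while the carried weight stays in $[T_i+1,T_{i+1}]$ and the weight before the first such item is already at least $T_i+1$. Hence $\subtour^*$ respects capacity $T_{i+1}-T_i$.

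\emph{Choosing the guess.} We need an index $j$ in the range $0,\ldots,\lceil\log_{1+\epsilon}(\mathcal{D}/2^i)\rceil$ with $D^{\pi^*}(T_i+1)\le(1+\epsilon)^j\le(1+\epsilon)D^{\pi^*}(T_i+1)$. For this we use Lemma~\ref{lem:alt_lower_bound}, which gives $2^{i-1}D^{\pi^*}(T_i+1)\le\cost(\pi^*)\le\mathcal{D}$, so $D^{\pi^*}(T_i+1)\le 2\mathcal{D}/2^i$. The upper end of the range must be checked against this factor $2$; if it falls short, the argument has to absorb a constant, which I expect to be harmless. At the low end, if $A_i\setminus P_i$ is nonempty then the path reaches a non-root vertex, so its length is at least $1$ by the scaling assumption, up to distance-$0$ degeneracies. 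For that $j$, $\cop(V\setminus P_i,T_{i+1}-T_i,(1+\epsilon)^j)$ returns $\subtour'$ with $p(\subtour')\ge p(\subtour^*)/(3+\epsilon)=p(A_i\setminus P_i)/(3+\epsilon)$.

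\emph{Efficiency test (main obstacle).} We must show $p(\subtour')/\cost(\subtour',w(P_i))\ge t_1$.
\begin{itemize}
\item \emph{Weight bound.} By construction $w(P_i)\le T_i$, since each earlier $\subtour_k$ has weight at most $T_{k+1}-T_k$. So along $\subtour'$ the carried weight is at most $T_{i+1}$, and $f\le f(T_{i+1})<2^{i+1}$; the case $i=s$ needs care with $f(w_{\max})$ and $\lceil\log f(w_{\max})\rceil$.
\item \emph{Length bound.} Viewed as a tour, $\subtour'$ is the path plus the leg from $r$ to its start, so its length is at most $2(1+\epsilon)^j\le 2(1+\epsilon)D^{\pi^*}(T_i+1)$.
\item \emph{Cost bound.} Together these give $\cost(\subtour',w(P_i))\le 2^{i+2}(1+\epsilon)D^{\pi^*}(T_i+1)$.
\item \emph{Combining.} Use $p(\subtour')\ge p(A_i\setminus P_i)/(3+\epsilon)$ and the efficiency inequality $p(A_i\setminus P_i)\ge \frac{\mathcal{P}}{12\mathcal{D}}\,2^{i+1}D^{\pi^*}(T_i+1)$. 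The ratio is then at least $\frac{\mathcal{P}}{12\mathcal{D}}\cdot\frac{1}{2(1+\epsilon)(3+\epsilon)}$, and I would compare this constant with $t_1=\frac{\mathcal{P}}{(9+3\epsilon)(6+6\epsilon)\mathcal{D}}$.
\end{itemize}
This constant bookkeeping is where I expect friction. The two expressions match exactly up to $(9+3\epsilon)=3(3+\epsilon)$ and $(6+6\epsilon)=6(1+\epsilon)$, with $12\cdot2=24$ versus $18$, so the factor-$2$ slack has to be recovered somewhere. My first attempt would tighten the bound $f<2^{i+1}$, or treat the return leg more carefully. Failing that, I would accept the constant loss.

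Once the test passes, $\subtour'$ is a candidate, so the final $\subtour_i$ satisfies $p(\subtour_i)\ge p(\subtour')\ge p(A_i\setminus P_i)/(3+\epsilon)$.
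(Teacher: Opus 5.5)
Your overall structure matches the paper's: a witness path on $A_i\setminus P_i$, one $\cop$ call with the right length guess, showing that call passes the ratio test, then maximality of $p(\subtour_i)$. However, the two places where you say you would ``accept the constant loss'' are genuine gaps. Both $t_1$ and the range of $j$ are hard-coded in Algorithm~\ref{Alg:profit_tour1}, so any loss means $\subtour'$ may never be generated or may fail the test, and the lemma is not proved.

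\emph{The guess range.} Lemma~\ref{lem:alt_lower_bound} gives only $D^{\pi^*}(T_i+1)\le 2\mathcal{D}/2^i$, which can exceed $(1+\epsilon)^{\lceil\log_{1+\epsilon}(\mathcal{D}/2^i)\rceil}$. The missing observation is direct. All of the distance $D^{\pi^*}(T_i+1)$ is traveled while carrying weight at least $T_i+1$, hence at inverse speed at least $f(T_i+1)\ge 2^i$. This follows from the choice of $T_i$, and for $i=0$ from $f(1)\ge f(0)=1$. So $D^{\pi^*}(T_i+1)\le \cost(\pi^*)/2^i\le\mathcal{D}/2^i$, and $j=\lceil\log_{1+\epsilon}D^{\pi^*}(T_i+1)\rceil$ lies in the loop range.

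\emph{The efficiency test.} Charging the whole subtour at inverse speed $2^{i+1}$ gives $\cost(\subtour',w(P_i))\le 4\cdot 2^i(1+\epsilon)D^{\pi^*}(T_i+1)$. That leaves your ratio at $\frac{3}{4}t_1$, a factor $4/3$ short, not $2$. The slack is in the leg from $r$ to the first vertex of the $\cop$ path. The agent traverses it before collecting anything on $\subtour'$, so at weight $w(P_i)\le T_i$ and inverse speed $f(T_i)\le 2^i$. By the triangle inequality this leg has length at most $(1+\epsilon)^j$. Only the path itself needs $f(T_{i+1})\le 2^{i+1}$, which also holds for $i=s$ since $f(w_{\max})\le 2^{\lceil\log f(w_{\max})\rceil}=2^{s+1}$. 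This yields
\[
\cost(\subtour',w(P_i))\le 3\cdot 2^i(1+\epsilon)D^{\pi^*}(T_i+1)=\frac{3}{2}(1+\epsilon)\,2^{i+1}D^{\pi^*}(T_i+1)\le \frac{18(1+\epsilon)\,\mathcal{D}\,p(A_i\setminus P_i)}{\mathcal{P}},
\]
using efficiency in the last step. Together with $p(\subtour')\ge p(A_i\setminus P_i)/(3+\epsilon)$, the ratio is at least exactly $t_1$.
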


\begin{proof}
    Let us consider the iteration in which we determine $\subtour_i$. Let $j = \lceil\log_{(1+\epsilon)}(D^{\pi^*}(T_i + 1))\rceil$. Since $f(T_i+1) \geq 2^i$ we know that $D^{\pi^*}(T_i + 1) \leq \frac{\cost(\pi^*)}{2^i} \leq \frac{\mathcal{D}}{2^i} $. Thus, the algorithm will consider $\subtour' = \textsc{Cop}(V \setminus P_i,(1+\epsilon)^j,T_{i+1} - T_i)$ as a possible choice of $\subtour_i$. Let us consider the path that visits the items in $A_i \setminus P_i$ in the same order as $\pi^*$ and then returns to $r$. Then the length of this path is upper bounded by $D^{\pi^*}(T_i + 1)$ and its weight by $T_{i+1} -T_i$. The total profit of this path is $p(A_i \setminus P_i)$, which together with the fact that $\cop$ is a $(3+\epsilon)$-approximation implies that $p(\subtour') \geq \frac{p(A_i \setminus P_i)}{3 + \epsilon}$.

    At the same time the length of the path corresponding to $\subtour'$ is at most $(1+\epsilon) D^{\pi^*}(T_i + 1)$. Since on the subtour the agent first travels from $r$ to the start of this path and then traverses the path itself we may bound $\cost(\subtour',w(P_i))$ as follows:
    \begin{align*}
        \cost(\subtour',w(P_i)) &\leq (1+\epsilon)D^{\pi^*}(T_i + 1) f\big(w(P_i)\big) + (1+\epsilon)D^{\pi^*}(T_i + 1) f\big(w(P_i) + T_{i+1} -T_i\big)\\
        &\leq (1+\epsilon)D^{\pi^*}(T_i + 1) (f(T_i) + f(T_{i+1}))\\
        &\leq (1+\epsilon)D^{\pi^*}(T_i + 1) (2^i + 2^{i+1})\\
        &= 1.5\cdot (1+\epsilon)\cdot\big(D^{\pi^*}(T_i + 1)  \cdot 2^{i+1}\big)\\
        &\leq \frac{3\cdot (6+6\epsilon) \cdot \mathcal{D}\cdot p(A_i \setminus P)}{ \mathcal{P}}.
    \end{align*}
    Thus:
    \begin{equation*}
        \frac{p(\subtour')}{\cost(\subtour',w(P_i))} \geq \frac{p(A_i \setminus P_i)}{3+\epsilon} \cdot \frac{ \mathcal{P}}{ 3 \cdot (6+6\epsilon) \cdot \mathcal{D}\cdot p(A_i \setminus P)} = t_1.
    \end{equation*}
    As a result the algorithm will choose $\subtour_i = \subtour'$ unless it finds another more profitable  subtour and $p(\subtour_i)$ is lower bounded by $p(\subtour') \geq \frac{p(A_i \setminus P_i)}{3+\epsilon}$.
\end{proof}

By bounding the profit that corresponds to the inefficient intervals, we are able to prove that there still exists an efficient interval if the total profit of the already collected items is smaller than $\frac{\mathcal{P}}{9 + 3 \epsilon}$.

\begin{restatable}{lemma}{LemEffItemsLeft}\label{lem:eff_items_left}
    If $p(B) \leq \frac{2 + \epsilon}{9 + 3 \epsilon} \mathcal{P}$ it holds for any $i \in [s] \cup\{0\}$ that if $p(P_i) < \frac{ \mathcal{P}}{9 + 3\epsilon}$ then there exists an $i'\geq i $ such that the interval $[T_{i'}+1,T_{i'+1}]$ is efficient after $P_i$ has been collected.
\end{restatable}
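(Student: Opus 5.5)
We argue by contradiction: assume $p(B) \leq \frac{2+\epsilon}{9+3\epsilon}\mathcal{P}$ and $p(P_i) < \frac{\mathcal{P}}{9+3\epsilon}$, but no interval $[T_{i'}+1,T_{i'+1}]$ with $i' \geq i$ is efficient after $P_i$ has been collected. We then show that the optimum tour $\pi^*$ would collect strictly less than $\mathcal{P}$, which contradicts its feasibility.

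The first step is a partition of the items on $\pi^*$. Every item of $V(\pi^*)$ either lies in $B$ (it moves the collected weight into a new interval) or lies in exactly one set $A_{i'}$. So
\[
\mathcal{P} \leq p(\pi^*) \leq p(B) + p(P_i) + \sum_{i'<i} p(A_{i'}\setminus P_i) + \sum_{i'\geq i} p(A_{i'}\setminus P_i).
\]
The first two terms are at most $\frac{2+\epsilon}{9+3\epsilon}\mathcal{P}$ and less than $\frac{1}{9+3\epsilon}\mathcal{P}$ respectively, so together they are less than $\frac{3+\epsilon}{9+3\epsilon}\mathcal{P} = \frac{1}{3}\mathcal{P}$.

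The sum over $i'<i$ is the delicate part, and I expect it to be the main obstacle. My plan is to argue that all of these items are already in $P_i$, or else to handle them with the same efficiency bound as the later intervals.
\begin{itemize}
\item If only the sum over $i' \geq i$ matters, everything is clean.
\item If items of $A_{i'}$ with $i'<i$ survive outside $P_i$, the cleanest fix is to apply inefficiency to every $i'$ in the sum. The statement only gives inefficiency for $i' \geq i$, though, so I would first check whether the later use of the lemma (iterating $i$ upward from $0$) allows reducing to $i=0$.
\item Failing that, I would bound these leftover items by noting that intervals with $i'<i$ were processed earlier. Lemma~\ref{lem:efficient_subtour} at those steps, combined with the fact that $P$ only grows, should absorb them into $p(P_i)$ up to the factor $3+\epsilon$.
\end{itemize}
I would try the reduction to $i=0$ first.

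Next, for each inefficient interval $i' \geq i$, the definition gives
\[
p(A_{i'}\setminus P_i) < \frac{\mathcal{P}}{12\mathcal{D}}\, 2^{i'+1} D^{\pi^*}(T_{i'}+1) = \frac{\mathcal{P}}{3\mathcal{D}}\, 2^{i'-1} D^{\pi^*}(T_{i'}+1).
\]
Summing over $i'$ and applying Lemma~\ref{lem:alt_lower_bound} together with $\cost(\pi^*) \leq \mathcal{D}$ gives
\[
\sum_{i'} p(A_{i'}\setminus P_i) < \frac{\mathcal{P}}{3\mathcal{D}}\cost(\pi^*) \leq \frac{\mathcal{P}}{3}.
\]

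Combining the pieces yields $\mathcal{P} \leq p(\pi^*) < \frac{2}{3}\mathcal{P}$, a contradiction. The constants leave slack: the factor $\frac{1}{3}\cdot\frac{1}{4}$ in the efficiency definition is exactly what makes this sum come out to $\frac{1}{3}\mathcal{P}$, and the remaining third of the budget could absorb the earlier-interval leftovers if they need to be charged separately. So some interval $i' \geq i$ must be efficient after $P_i$ has been collected.
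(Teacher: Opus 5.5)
Your skeleton matches the paper's: split the items of $\pi^*$ into $B$ and the sets $A_{i'}$, then bound the inefficient intervals by $\frac{\mathcal{P}}{3}$ via Lemma~\ref{lem:alt_lower_bound}. However, the step you flag as the main obstacle is left unresolved, and your preferred resolution does not work.
\begin{itemize}
\item Reducing to $i=0$ does not prove the statement, which is claimed for every $i$. It also cannot serve the later use: Corollary~\ref{cor:picked_tour1} applies the lemma essentially with $i=s+1$, where no index $i'\ge i$ exists. So the whole content of the lemma lies in the terms with $i'<i$.
\item Your first bullet is false. $\psi_{i'}$ is only a $(3+\epsilon)$-approximate \textsc{Cop} path, so items of $A_{i'}$ need not lie in $P_i$.
\item Your second bullet is unjustified, because intervals with $i'<i$ may well be efficient after $P_i$.
\end{itemize}

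The missing argument is a case split for the indices $i'<i$.
\begin{itemize}
\item If $[T_{i'}+1,T_{i'+1}]$ is inefficient after $P_i$, put it into the same inefficiency sum as the intervals with $i'\ge i$. Lemma~\ref{lem:alt_lower_bound} already sums over all indices $0,\dots,s$, so all inefficient intervals together still contribute at most $\frac{\mathcal{P}}{3}$.
\item If it is efficient after $P_i$, it was also efficient after $P_{i'}$, since $P_{i'}\subseteq P_i$ gives $A_{i'}\setminus P_{i'}\supseteq A_{i'}\setminus P_i$. Then Lemma~\ref{lem:efficient_subtour} yields $p(A_{i'}\setminus P_i)\le (3+\epsilon)\,p(\psi_{i'})$. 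Summing over these $i'<i$ gives at most $(3+\epsilon)\,p(P_i)<\frac{\mathcal{P}}{3}$, because the $\psi_{i'}$ with $i'<i$ are disjoint and their union is $P_i$.
\end{itemize}
With your accounting this yields $\mathcal{P}\le p(\pi^*)<\frac{\mathcal{P}}{3}+\frac{\mathcal{P}}{3}+\frac{\mathcal{P}}{3}=\mathcal{P}$, the desired contradiction.

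Contrary to your remark, there is no slack in the constants. The earlier efficient intervals consume exactly the remaining third, since $(3+\epsilon)\cdot\frac{\mathcal{P}}{9+3\epsilon}=\frac{\mathcal{P}}{3}$, and the strict inequality on $p(P_i)$ is what closes the argument. Once this split is in place, your proof is the paper's proof. The paper phrases it directly, with the classes $I_{inef}$, $I_{block}$ and $I_{eff}$, showing that the efficient intervals with index $\ge i$ carry positive profit, rather than arguing by contradiction.
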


\begin{proof}
    For simplicity we say that an interval is efficient if it is efficient after $P_i$ has been collected. We subdivide the intervals into three categories:
    \begin{itemize}
        \item $I_{inef} = \{j \in [s] \cup \{0\}\mid [T_{j}+1,T_{j+1}]\text{ is not efficient.}\}$ contains all the indices corresponding to inefficient intervals.
        \item $I_{block} := \{j \in [i-1] \cup \{0\}\mid[T_j +1,T_{j+1}]\text{ is efficient.}\}$ contains all indices smaller $i$ corresponding to efficient intervals. We call the respective intervals blocked.
        \item $I_{eff} := \{j \in[s] \cup \{0\}\mid j \geq i\land [T_j +1,T_{j+1}]\text{ is efficient.}\}$ contains all indices with value at least $i$ corresponding to efficient intervals.
    \end{itemize}
    Let $p(P_i)< \frac{ \mathcal{P}}{9 + 3\epsilon}$. Since every item on the optimum tour is contained in $B$ or in exactly one of the intervals we know that:
    \begin{align*}
        \sum_{j \in I_{inef}} p(A_j \setminus P_i) + \sum_{j \in I_{block}} p(A_j \setminus P_i) + \sum_{j \in I_{eff}} p(A_j \setminus P_i) &= \sum_{j=0}^s p(A_j \setminus P_i)\\
        &\geq \left(\sum_{j=0}^s p(A_j)\right) - p(P_i)\\
        &> \frac{7 + 2\epsilon}{9 + 3 \epsilon} \mathcal{P} - \frac{ \mathcal{P}}{9 + 3\epsilon}\\
        &= \frac{2}{3} \mathcal{P}
    \end{align*}
    
    We want to bound the contribution of the inefficient and the blocked intervals to the left side of the inequality. For any $j \in I_{inef}$ it holds that:
    \begin{equation*}
        p(A_j \setminus P_i) < \frac{\mathcal{P}}{12 \mathcal{D}} \cdot D^{\pi^*}(T_j +1) 2^{j+1} = \frac{\mathcal{P}}{3\mathcal{D}} D^{\pi^*}(T_j +1) 2^{j-1}
    \end{equation*}
    By combining this with Lemma~\ref{lem:alt_lower_bound} we obtain:
    \begin{align*}
        \sum_{j \in I_{inef}} p(A_j \setminus P_i) &\stackrel{\phantom{\text{Lem. } \ref{lem:alt_lower_bound}}}{<} \frac{\mathcal{P}}{3 \mathcal{D}} \sum_{j \in I_{inef}}D^{\pi^*}(T_j +1) 2^{j-1}\\
        &\stackrel{\phantom{\text{Lem. } \ref{lem:alt_lower_bound}}}{\leq} \frac{\mathcal{P}}{3 \mathcal{D}} \sum_{j=0}^s D^{\pi^*}(T_j +1) 2^{j-1}\\
        &\stackrel{\text{Lem. } \ref{lem:alt_lower_bound}}{\leq} \frac{\mathcal{P}}{3 \mathcal{D}} \cost(\pi^*)\\
        &\stackrel{\phantom{\text{Lem. } \ref{lem:alt_lower_bound}}}{\leq} \frac{\mathcal{P}}{3}
    \end{align*}

    Now we consider the blocked intervals. Let $j \in I_{block}$. Since $j \leq i-1$ it holds that $P_j \subseteq P_i$. Thus, the interval $[T_{j}+1,T_{j+1}]$ was also efficient after $P_j$ got collected and by Lemma \ref{lem:efficient_subtour}:
    \begin{equation*}
        p(\subtour_j) \geq \frac{p(A_j \setminus P_j)}{3+ \epsilon} \geq \frac{p(A_j \setminus P_i)}{3 + \epsilon}.
    \end{equation*}
    By summing up over all blocked intervals we obtain:
    \begin{align*}
        \sum_{j \in I_{block}} p(A_j \setminus P_i) &\leq (3 + \epsilon) \sum_{j \in I_{block}} p(\subtour_j)\\
        &\leq (3 + \epsilon) \sum_{j=0}^{i-1} p(\subtour_j)\\
        &= (3 + \epsilon) \cdot p(P_i)\\
        &< (3 + \epsilon) \cdot \frac{\mathcal{P}}{9+3\epsilon}\\
        &= \frac{\mathcal{P}}{3}
    \end{align*}

    By combining all of these bounds we may conclude that $I_{eff}$ cannot be empty:
    \begin{align*}
        \sum_{j \in I_{eff}} p(A_j \setminus P_i) &> \frac{2}{3} \mathcal{P} - \sum_{j \in I_{inef}}p(A_j \setminus P_i) - \sum_{j \in I_{block}} p(A_j \setminus P_i)\\
        &> \frac{2}{3}\mathcal{P} -  \frac{\mathcal{P}}{3} - \frac{\mathcal{P}}{3} = 0.
    \end{align*}
    This directly proves the lemma.
\end{proof}

Since there do not exists any intervals with indices greater $s+1$, we can use this lemma to directly lower-bound the total profit collected by the intervals $\subtour_1,\ldots,\subtour_s$:

\begin{restatable}{corollary}{CorrPickedTOne}\label{cor:picked_tour1}
    After executing Algorithm \ref{Alg:profit_tour1}, it holds that $p(P_{s+1}) \geq \frac{ \mathcal{P}}{9 + 3 \epsilon}$ if $p(B) \leq \frac{2 + \epsilon}{9 + 3 \epsilon}\mathcal{P}$.
\end{restatable}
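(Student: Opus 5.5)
The plan is to argue by contradiction using Lemma~\ref{lem:eff_items_left}, applied at the last index. Assume $p(B) \leq \frac{2+\epsilon}{9+3\epsilon}\mathcal{P}$ and, for contradiction, that $p(P_{s+1}) < \frac{\mathcal{P}}{9+3\epsilon}$.

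First I record the structural facts about the sets $P_i$ produced by Algorithm~\ref{Alg:profit_tour1}. They satisfy $P_0 = \emptyset$ and $P_{i+1} = P_i \cup V(\subtour_i)$, so they are nested: $P_0 \subseteq P_1 \subseteq \cdots \subseteq P_{s+1}$. Moreover, each $\subtour_i$ is computed on $V \setminus P_i$, so the subtours are vertex-disjoint apart from $r$. Hence $p(P_i) = \sum_{j<i} p(\subtour_j)$, and $p(P_i)$ is non-decreasing in $i$. In particular $p(P_s) \leq p(P_{s+1}) < \frac{\mathcal{P}}{9+3\epsilon}$.

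Next I would rather not invoke Lemma~\ref{lem:eff_items_left} at index $s+1$ directly, since its proof uses the indices $\{0,\ldots,s\}$ and an index $s+1$ is outside the stated range. I intend to rerun its counting argument with $i = s+1$, i.e.\ with $P_{s+1}$ playing the role of $P_i$.
\begin{itemize}
\item The total $\sum_{j=0}^{s} p(A_j \setminus P_{s+1})$ still exceeds $\frac{2}{3}\mathcal{P}$. This uses only $p(B)$ and $p(P_{s+1}) < \frac{\mathcal{P}}{9+3\epsilon}$.
\item The inefficient intervals still contribute less than $\frac{\mathcal{P}}{3}$, by Lemma~\ref{lem:alt_lower_bound}.
\item Every efficient interval now has index $j \leq s = (s+1)-1$, so every efficient interval is blocked. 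For each such $j$, the nesting $P_j \subseteq P_{s+1}$ shows the interval was already efficient after $P_j$ was collected. Lemma~\ref{lem:efficient_subtour} then gives $p(A_j \setminus P_{s+1}) \leq (3+\epsilon)p(\subtour_j)$. Summing over the blocked intervals bounds their total by $(3+\epsilon)p(P_{s+1}) < \frac{\mathcal{P}}{3}$.
\end{itemize}
The set $I_{eff}$ of indices $j \geq s+1$ is empty, because no interval has such an index. So the three contributions together are less than $\frac{2}{3}\mathcal{P}$, contradicting the first bullet. Hence $p(P_{s+1}) \geq \frac{\mathcal{P}}{9+3\epsilon}$.

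The step that needs care is justifying this extension of Lemma~\ref{lem:eff_items_left} to $i = s+1$. I must check that nothing in its proof depends on $i \leq s$. As far as I can see, the only ingredients are the nesting $P_j \subseteq P_i$ for $j < i$ and the identity $p(P_i) = \sum_{j<i} p(\subtour_j)$, and both hold at $i = s+1$. In the write-up I would simply note that the lemma's proof goes through verbatim for $i = s+1$. Its conclusion, that some interval with index $i' \geq s+1$ is efficient, is then impossible, which is exactly the contradiction.
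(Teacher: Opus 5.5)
Your proposal is correct and follows the paper's route. The paper also obtains the corollary by invoking Lemma~\ref{lem:eff_items_left} at index $s+1$: since no interval has index $\geq s+1$, the hypothesis $p(P_{s+1}) < \frac{\mathcal{P}}{9+3\epsilon}$ must fail. Your explicit check that the lemma's counting argument still works outside its stated range $i \in [s]\cup\{0\}$ fills in a step the paper leaves implicit; the check rests on the nesting $P_j \subseteq P_{s+1}$ and the identity $p(P_{s+1}) = \sum_{j=0}^{s} p(\subtour_j)$.
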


While this lower-bounds the total profit collected by the subtours, Algorithm \ref{Alg:profit_tour1} might not append all subtours to $\pi^{(1)}$. But, this can only happen if $\cost(\pi^{(1)}) \geq (6 +6\epsilon) \mathcal{D}$ and we can lower-bound the ratio between the profit and the cost of $\pi^{(1)}$.

\begin{restatable}{lemma}{LemTourOneProfit}\label{lem:tour1_profit}
    If $p(B) \leq \frac{2 + \epsilon}{9 + 3 \epsilon} \mathcal{P}$ it holds that $p(\pi^{(1)}) \geq \frac{1 }{9 + 3 \epsilon} \mathcal{P}$.
\end{restatable}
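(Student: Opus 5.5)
We split into two cases according to how the final while loop of Algorithm~\ref{Alg:profit_tour1} terminates.

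\emph{Case 1: the loop ends because $i > s$.} Then every nonempty subtour $\subtour_0,\ldots,\subtour_s$ has been appended. These subtours are vertex-disjoint, since $\subtour_i$ is computed on $V\setminus P_i$ and $P_{i+1}=P_i\cup V(\subtour_i)$. Hence $p(\pi^{(1)}) = \sum_{i=0}^s p(\subtour_i) = p(P_{s+1})$. Corollary~\ref{cor:picked_tour1} with the hypothesis $p(B)\le \frac{2+\epsilon}{9+3\epsilon}\mathcal{P}$ then gives $p(\pi^{(1)})\ge \frac{\mathcal{P}}{9+3\epsilon}$.

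\emph{Case 2: the loop ends because $\cost(\pi^{(1)}) > (6+6\epsilon)\mathcal{D}$.} Let $\subtour_{0},\ldots,\subtour_{k}$ be the subtours appended so far. By construction, $\pi^{(1)}$ is their concatenation in index order. When the agent starts $\subtour_i$ it carries exactly $w(P_i)$, because all earlier appended subtours are exactly those with index below $i$, and empty ones contribute nothing. Therefore
\begin{equation*}
\cost(\pi^{(1)}) = \sum_{i=0}^{k} \cost(\subtour_i, w(P_i)).
\end{equation*}
Every nonempty $\subtour_i$ passed the efficiency test, so $p(\subtour_i)\ge t_1\cdot \cost(\subtour_i,w(P_i))$. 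Empty subtours have cost $0$ and profit $0$, so they also satisfy this inequality. Summing over $i$ gives
\begin{equation*}
p(\pi^{(1)}) \ge t_1 \cost(\pi^{(1)}) > \frac{1}{9+3\epsilon}\cdot\frac{\mathcal{P}}{(6+6\epsilon)\mathcal{D}}\cdot (6+6\epsilon)\mathcal{D} = \frac{\mathcal{P}}{9+3\epsilon}.
\end{equation*}

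The main point requiring care is the cost decomposition in Case 2. The cost of a subtour depends on the weight carried when it starts, and we need this starting weight to be exactly $w(P_i)$, the value used in the efficiency test. This holds because subtours are appended in increasing index order with none skipped, and each $\subtour_i$ is disjoint from $P_i$.

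One also has to confirm that the two cases are exhaustive. The loop only ever stops with either $i>s$ or the cost threshold exceeded. If both happen at the same time, either argument applies.
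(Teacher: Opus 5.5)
Your proof is correct and follows the paper's own argument: the same two cases, namely all subtours appended (handled by Corollary~\ref{cor:picked_tour1}) versus the cost threshold $(6+6\epsilon)\mathcal{D}$ being exceeded (handled by summing the efficiency test $p(\subtour_j)\ge t_1\cost(\subtour_j,w(P_j))$ over the appended subtours). Your bookkeeping is slightly cleaner than the paper's, which writes $w(P_i)$ where $w(P_j)$ is meant and starts the index at $1$ rather than $0$; also note that only the inequality $\cost(\pi^{(1)})\le\sum_j \cost(\subtour_j,w(P_j))$ is needed, not the equality you assert.
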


\begin{proof}
    We distinguish two cases. In the first one $\pi^{(1)}$ consists of all the subtours $\subtour_1,\ldots, \subtour_s$ appended to each other. By Corollary~\ref{cor:picked_tour1} we know that $p(P_{s+1}) \geq \frac{ \mathcal{P}}{9 + 3 \epsilon}$. Since $P_{s+1}$ is the union of the items that are collected on the subtours $\subtour_0,\ldots, \subtour_s$ this implies that $p(\pi^{(1)}) = p(P_{s+1}) \geq \frac{ \mathcal{P}}{9 + 3 \epsilon}$.
    
    In the second case, there exists an $i \in [s]$ such that the algorithm does not append $\subtour_i$ to $\pi^{(1)}$. This can only happen if $\cost(\pi^{(1)}) > (6 + 6 \epsilon) \mathcal{D}$ when the algorithm considers to append the subtour $\subtour_i$ to $\pi^{(1)}$. Then the tour $\pi^{(1)}$ consists of the subtours $\subtour_0,\ldots,\subtour_{i-1}$ that get traversed after each other. One can bound its travel time as follows:
    \begin{align*}
        \cost(\pi^{(1)}) &\leq \sum_{j=0}^{i-1} \cost\left(\subtour_j,w(P_i)\right)\\
        &\leq \sum_{j=0}^{i-1} \frac{p(\subtour_j)}{t_1}\\
        &= \frac{{p(\pi^{(1)}})}{t_1}.
    \end{align*}

    Thus, the profit of $\pi^{(1)}$ is lower bounded by its travel time times $t_1$ and since its travel time greater than $(6 + 6 \epsilon) \mathcal{D}$ we obtain:
    \begin{equation*}
        p(\pi^{(1)}) > t_1 \cdot (6 + 6 \epsilon) \mathcal{D} = \frac{1}{9 + 3 \epsilon} \cdot \frac{\mathcal{P}}{(6 + 6\epsilon)\mathcal{D}} (6 + 6 \epsilon) \mathcal{D} = \frac{\mathcal{P}}{9 + 3\epsilon}.
    \end{equation*}
    Thus, the lemma is proven.
\end{proof}

Lastly, we have to prove that the cost of $\pi^{(1)}$ is not too large.

\begin{restatable}{lemma}{LemCostTourOne}
    It holds that $\cost(\pi^{(1)}) \leq (9 + 9\epsilon) \mathcal{D}$.
\end{restatable}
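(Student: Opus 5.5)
The bound follows from the stopping rule of the while loop in Algorithm~\ref{Alg:profit_tour1}. The key facts are that, just before the last subtour is appended, the cost is at most $(6+6\epsilon)\mathcal{D}$, and that appending one further subtour adds at most $(3+3\epsilon)\mathcal{D}$. Concretely, suppose the loop terminates after appending the subtours $\subtour_0,\ldots,\subtour_k$ (some possibly trivial). By the loop condition, the tour $\pi'$ formed by $\subtour_0,\ldots,\subtour_{k-1}$ satisfies $\cost(\pi')\leq (6+6\epsilon)\mathcal{D}$. Appending $\subtour_k$ does not change the cost of the earlier part, because the earlier items are picked up at the same moments. It then suffices to show
\begin{equation*}
\cost(\subtour_k, w(P_k)) \leq (3+3\epsilon)\mathcal{D}.
\end{equation*}
Here I use that the weight collected before starting $\subtour_k$ is exactly $w(P_k)$, since $\pi'$ picks up precisely the items of $\subtour_0,\ldots,\subtour_{k-1}$, which equal $P_k$.

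To bound a single subtour $\subtour_i$ with $i\leq s$, I argue as in the proof of Lemma~\ref{lem:efficient_subtour}. The subtour was produced by $\cop$ with length bound $(1+\epsilon)^j$, where $j\leq\lceil\log_{1+\epsilon}(\mathcal{D}/2^i)\rceil$. Hence the path has length at most $(1+\epsilon)\mathcal{D}/2^i$, and the initial leg from $r$ to the path's start is at most that as well, by the triangle inequality. While traversing $\subtour_i$, the weight carried is at most $w(P_i)+T_{i+1}-T_i$. So I must check that $w(P_i)\leq T_i$, which holds by induction: each $\subtour_j$ has weight at most $T_{j+1}-T_j$, so $w(P_i)\leq\sum_{j<i}(T_{j+1}-T_j)=T_i$. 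Therefore the weight during $\subtour_i$ is at most $T_{i+1}$, and $f(T_{i+1})<2^{i+1}$ for $i<s$.

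For $i=s$ we have $T_{s+1}=w_{\max}$, and I need $f(w_{\max})\leq 2^{s+1}$. This follows from $s=\lceil\log f(w_{\max})\rceil-1$, since then $f(w_{\max})\leq 2^{\lceil\log f(w_{\max})\rceil}=2^{s+1}$. In both cases the inverse speed is at most $2^{i+1}$ throughout, and the subtour's cost is at most
\begin{equation*}
f(T_i)\cdot\frac{(1+\epsilon)\mathcal{D}}{2^i} + f(T_{i+1})\cdot\frac{(1+\epsilon)\mathcal{D}}{2^i}\leq (1+2)(1+\epsilon)\mathcal{D}=(3+3\epsilon)\mathcal{D},
\end{equation*}
using $f(T_i)\leq 2^i$ on the first leg, which is travelled with weight $w(P_i)\leq T_i$. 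For $i=0$ this bound also holds, because $f(T_0)=f(0)=1=2^0$.

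Summing, $\cost(\pi^{(1)})\leq(6+6\epsilon)\mathcal{D}+(3+3\epsilon)\mathcal{D}=(9+9\epsilon)\mathcal{D}$. If the loop ends because $i>s$ without the cost condition ever failing, the same argument applies to the last appended subtour, or the bound is immediate.

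The main obstacle is the boundary bookkeeping. The argument needs $w(P_i)\leq T_i$, the edge cases $i=0$ and $i=s$ for the doubling property $f(T_{i+1})\leq 2^{i+1}$, and care that the length guess $(1+\epsilon)^j$ really is at most $(1+\epsilon)\mathcal{D}/2^i$ at the maximal $j$. Everything else is a direct cost computation. The efficiency ratio $t_1$ is not needed here; it matters only for the profit bound.
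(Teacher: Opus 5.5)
Your proof is correct and takes essentially the same route as the paper. You bound the tour before the last appended subtour by $(6+6\epsilon)\mathcal{D}$ via the while-loop condition, then bound the last subtour by $(1+\epsilon)\frac{\mathcal{D}}{2^i}\bigl(f(T_i)+f(T_{i+1})\bigr)\le(3+3\epsilon)\mathcal{D}$ using the $\textsc{Cop}$ length guess, the triangle inequality for the initial leg, and $w(P_i)\le T_i$; your explicit handling of the boundary cases $i=0$ and $i=s$ (where $f(w_{\max})\le 2^{s+1}$ follows from the choice of $s$) is more careful than the paper's write-up, which only treats $i\in[s]$.
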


\begin{proof}
    Let $i \in [s]$ be the index of the last tour that gets appended to $\pi^{(1)}$. Let $\sigma^{(1)}$ denote the tour $\pi^{(1)}$ before $\subtour_i$ gets appended. Since Algorithm~\ref{Alg:profit_tour1} only continues to append subtours as long as $\cost(\pi^{(1)}) \leq (6 + 6\epsilon)\mathcal{D}$ we know that $\cost(\sigma_1) \leq (6 + 6\epsilon)\mathcal{D}$. We can bound the cost of $\pi^{(1)}$ as follows:
    \begin{equation*}
        \cost(\pi^{(1)}) \leq \cost(\sigma^{(1)}) + \cost(\subtour_i,w(\sigma^{(1)})) \leq (6 + 6 \epsilon) \mathcal{P} + \cost(\subtour_i,w(\sigma^{(1)})).
    \end{equation*}
    We know that the weight of the items picked up on the tours $\subtour_1,\ldots,\subtour_{i-1}$ is upper bounded by $T_i$ and that the weight picked up tour $\subtour_i$ is upper bounded by $T_{i+1} - T_i$. Thus, at the beginning of the subtour $\subtour_i$ the collected weight is upper bounded by $T_i$ and at the end it is upper bounded by $T_{i+1}$. Additionally, we know that the length of the path after the agent collects the first item on the subtour $\subtour_i$ is upper bounded by $(1+ \epsilon)^{\lceil\log_{1+\epsilon}(\mathcal{D}/2^{i})\rceil} \leq (1 + \epsilon) \frac{\mathcal{D}}{2^i}$. By the triangle inequality we may also bound the distance from the $r$ to the location of the first item by this value and we obtain:
    \begin{align*}
        \cost(\subtour_i,w(\sigma^{(1)})) &\leq (1 + \epsilon) \frac{\mathcal{D}}{2^i} f(T_i) + (1 + \epsilon) \frac{\mathcal{D}}{2^i} f(T_{i+1})\\
        &\leq (1 + \epsilon) \frac{\mathcal{D}}{2^i} (2^i + 2^{i+1})\\
        &\leq (3 + 3\epsilon) \mathcal{D}
    \end{align*}
    Thus, $\cost(\pi^{(1)}) \leq (6 + 6\epsilon) \mathcal{D} + (3 + 3\epsilon) \mathcal{D} = (9 + 9\epsilon) \mathcal{D}$ and the lemma is proven.
\end{proof}

\subsection{Obtaining the Second Solution}\label{sec:path2}

We describe an algorithm that calculates a tour $\pi^{(2)}$ whose profit can be lower bounded in terms of the profit of the items in $p(B)$. The tour basically consists of subtours on which the agent collects a single item and returns to $r$ in between. The algorithm will only collect an item (after picking up a certain weight) when the cost of doing this is upper bounded by a threshold $t_2$ times the profit of the item. Since the agent only gets slower over time, we can define for any item $v \in V$ a weight $\lborder(v)$ such that we are willing to pick up $v$ if and only if the already collected weight is at most $\lborder(v)$. Additionally, we also require that $\cost((r,v,r),s(v)) \leq 2 \mathcal{D}$ to prevent that a single subtour is too expensive. Our algorithm maximizes the profit of the collected items such that no item $v$ is collected after the agent already collected items with total weight greater $\lborder(v)$. This can be modeled as the \emph{Multiperiod Binary Knapsack Problem} (\knapdead) which is a knapsack variant with deadlines~\cite{GaoBG21}:

\begin{restatable}{definition}{defMPBKP}
    In the \emph{Multiperiod Binary Knapsack Problem} (\knapdead) one is given a set of items $V$, a weight function $w:V \rightarrow \mathbb{N}_{0}$, a profit or reward function $p:V \rightarrow \mathbb{N}_{0}$ and a function $s: V \rightarrow \mathbb{N}_{ 0}$ that assigns a deadline to every item. The aim is to find a tuple $\sigma = (\sigma_1,\ldots,\sigma_l)$ of distinct items such that the profit $p(\sigma) = \sum_{i=1}^l p(\sigma_i)$ gets maximized and for any $i \in [l]$ it holds that $\sum_{i=1}^{l-1} w(\sigma_i) \leq \lborder(\sigma_l)$.
\end{restatable}

The definition differs from the one presented by Gao et al.~\cite{GaoBG21} but it is easy to verify that both definitions are  equivalent. They provided a fully polynomial $(1+\epsilon)$-approximation scheme for this problem. In our algorithm we use this scheme as a function $\knapdead(V,p,w,s)$ to calculate an $(1 +\epsilon)$ approximation. After calculating the respective solution, we simply append items to the initially empty tour $\pi^{(2)}$ until $\cost(\pi^{(2)})$ exceeds $6 \mathcal{D}$ or no items remain. Algorithm~\ref{Alg:profit_tour2} provides the pseudocode of this procedure.

\begin{restatable}{algorithm}{AlgTourTwo}
	\LinesNumbered
	\DontPrintSemicolon
	\SetKwInOut{Input}{input}
	\SetKwInOut{Output}{output}
    \caption{Calculation of the second tour $\pi^{(2)}$.}\label{Alg:profit_tour2}
        $t_2 = \frac{ 6 \cdot (9 + 3\epsilon) \cdot \mathcal{D}}{\mathcal{P}}$\;
        Define the function $\lborder: V \rightarrow \mathbb{N}_{0}$\;
        \For{ $v \in V \setminus \{r\}$}{
            Calculate the value $\lborder(v)$ such that $\cost((r,v,r),\lborder(v)) \leq \min(p(v) \cdot t_2, 2\mathcal{D}) < \cost((r,v,r),\lborder(v) +1)$\;
        }
        $(\sigma_{1},\ldots,\sigma_l)=\knapdead(V,p,w,\lborder)$\;
        Find minimum $l' \in [l]$ such that $\cost(r,\sigma_1,\ldots,\sigma_{l'},r) \geq 6 \mathcal{D}$.\;
        If no such $l'$ exists set $l' = l$.\;
        $\pi^{(2)} = (r,\sigma_1,\ldots,\sigma_{l'},r)$\;
        \textbf{return:} $\pi^{(2)}$
\end{restatable}


Similarly to the analysis of Algorithm \ref{Alg:profit_tour1}, we subdivide the items in $B$ into two groups, efficient and inefficient items:

\begin{restatable}{definition}{DefEffTourTwo}
    Let $\beta_1,\ldots,\beta_m$ be the elements of $B$ ordered by the moment in which the agent visits the respective location in $\pi^*$. For any $j \in [m]$ let $W^*_{\beta_j}$ be the weight that the agent has collected on $\pi^*$ once it reaches $\beta_j$. We define $i_j$ as the largest index such that $W^*_{\beta_j} \geq T_{i_j} +1$. We call $\beta_j$ \emph{efficient} if $\frac{p(\beta_j)}{D^{\pi^*}(T_{i_j}+1)\cdot 2^{i_j-1}} \cdot (9 + 3 \epsilon) \geq \frac{\mathcal{P}}{\mathcal{D}}$.
\end{restatable}

By the definition of $B$, we know for any $\beta_j \in B$ that the weight collected before and after visiting $\beta_j$ cannot be in the same interval $[T_i+1,T_{i+1}]$ for an $i \in [s] \cup \{0\}$. Because of this, the indices $i_1,\ldots,i_m$ need to be strictly increasing.

\begin{restatable}{observation}{ObsStepsB}\label{obs:steps_b}
    For any $j \in [m-1]$ it holds that $i_{j} < i_{j +1}$.
\end{restatable}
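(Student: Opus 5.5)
We prove the observation directly from the definition of the indices $b_i$ and of $B$. Each $\beta_j \in B$ equals $\pi^*_{b_i}$ for some $i \in [s]\cup\{0\}$. The defining property of $b_i$ is
\[
W^{\pi^*}_{b_i-1} \leq T_i \quad\text{and}\quad W^{\pi^*}_{b_i} \geq T_i+1 .
\]
So at position $b_i$ the collected weight crosses the threshold $T_i+1$ for the first time. Here $W^*_{\beta_j}$ is the weight collected once the agent reaches $\beta_j$. We read it as the weight after picking up the item at $\beta_j$, that is $W^{\pi^*}_{b_i}$; with the other reading, ``largest index with $W^*_{\beta_j}\ge T_{i_j}+1$'' would fail for $\beta_j=\pi^*_{b_i}$. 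The first step is to check that $i_j$ is exactly the largest index $i'$ whose threshold $T_{i'}+1$ is crossed at this position. For every such $i'$ we have $\pi^*_{b_{i'}} = \beta_j$.

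Next, fix $j<j'$, and let $i_j$ be the index from the definition for $\beta_j$. Because $\beta_j$ is visited strictly before $\beta_{j'}$ on $\pi^*$, and the prefix weights $W^{\pi^*}_k$ are non-decreasing in $k$, we get
\[
W^*_{\beta_{j'}} \geq W^*_{\beta_j} \geq T_{i_j}+1 .
\]
By maximality of $i_{j'}$ this already gives $i_{j'} \geq i_j$. It remains to exclude equality.

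To exclude equality, suppose $i_{j'} = i_j =: i$. The position where the weight first reaches $T_i+1$ is unique, namely $b_i$. Since $W^*_{\beta_j}\ge T_i+1$, the threshold $T_i+1$ has already been reached at or before $\beta_j$. The position of $\beta_{j'}$ lies strictly after that of $\beta_j$, so the weight just before $\beta_{j'}$ already exceeds $T_i$. Hence the threshold $T_i+1$ is not crossed at $\beta_{j'}$.

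On the other hand, $\beta_{j'} = \pi^*_{b_{i''}}$ for some $i''$, and by the first step $i'' \le i_{j'} = i$. The threshold $T_{i''}+1 \le T_i+1$ is then also reached no later than $\beta_j$, since the $T_i$ are non-decreasing in $i$. This contradicts the fact that $T_{i''}+1$ is first crossed at $\beta_{j'}$. So the inequality is strict. Applying this to $j'=j+1$ gives $i_j < i_{j+1}$.

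The only delicate step is this interpretive issue: making precise that $i_j$ equals the largest $i$ with $\pi^*_{b_i}=\beta_j$. Once that identification is settled, the argument is pure monotonicity of the collected weight along $\pi^*$ together with monotonicity of the thresholds $T_i$ (which follows from $f$ being non-decreasing and $2^i$ increasing). I would also briefly note the corner case $T_0=0$: positions of items of weight $0$ do not cross any threshold, so they are never in $B$ unless they are $\pi^*_{b_0}$.
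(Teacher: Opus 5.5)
Your proof is correct and follows the paper's reasoning. The paper only observes that, by the definition of $B$, the weight before and after visiting any $\beta_j$ lies in different intervals $[T_i+1,T_{i+1}]$, and combines this with the monotonicity of the collected weight along $\pi^*$; your contradiction argument makes exactly this precise, using the after-pickup reading of $W^*_{\beta_j}$, which is also the reading Lemma~\ref{lem:tour2_efficient} relies on.

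A small simplification: your first step already gives $\pi^*_{b_{i_{j'}}}=\beta_{j'}$, so the threshold $T_{i_{j'}}+1$ is crossed at $\beta_{j'}$. This contradicts your third paragraph immediately, so the detour via $i''$ is unnecessary.
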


One can show that one can pack the efficient items in $B$ into the knapsack without violating the respective deadlines by using the order in which they are visited on the optimum tour $\pi^*$.

\begin{restatable}{lemma}{LemTourTwoEff}\label{lem:tour2_efficient}
    For any $j \in [m]$ it holds that $\lborder(\beta_j) \geq \sum_{j' = 1}^{j-1} w(\beta_{j'})$ if $\beta_j$ is efficient.
\end{restatable}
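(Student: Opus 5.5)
The plan is to reduce the claim to a single cost inequality and then bound the two legs of the round trip $(r,\beta_j,r)$ separately. By definition, $\lborder(v)$ is the largest weight $W$ with $\cost((r,v,r),W)\le \min(p(v)\cdot t_2, 2\mathcal{D})$. Since $f$ is non-decreasing, $\cost((r,v,r),W)$ is non-decreasing in $W$. So, writing $W' := \sum_{j'=1}^{j-1} w(\beta_{j'})$, it suffices to show
\begin{equation*}
\cost((r,\beta_j,r),W') = d(r,\beta_j)\, f(W') + d(\beta_j,r)\, f\big(W'+w(\beta_j)\big) \le \min\big(p(\beta_j)\, t_2,\ 2\mathcal{D}\big).
\end{equation*}

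First I collect the basic facts about $W'$ and the distance. All of $\beta_1,\ldots,\beta_{j-1}$ are collected on $\pi^*$ before $\beta_j$. Hence $W'$ is at most the weight carried by $\pi^*$ just before visiting $\beta_j$, and $W'+w(\beta_j)$ is at most the weight carried just after. I read $W^*_{\beta_j}$ as the weight after $\beta_j$ is picked up; this is the reading under which $\beta_j$ is the location where the weight enters a new interval. Under this reading, $\beta_j\in B$ gives two things: the weight just before $\beta_j$ is at most $T_{i_j}$, and the weight just after lies in $[T_{i_j}+1,T_{i_j+1}]$. Therefore
\begin{equation*}
f(W') \le f(T_{i_j}) < 2^{i_j}, \qquad f\big(W'+w(\beta_j)\big)\le f(T_{i_j+1}) < 2^{i_j+1}
\end{equation*}
(for $i_j=s$ the second bound degrades and I would handle it using $f(w_{\max})\le 2^{s+1}$ from the choice of $s$). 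Moreover, $\beta_j$ is visited at or after the index from which $D^{\pi^*}(T_{i_j}+1)$ is measured. By the triangle inequality, $d(r,\beta_j)=d(\beta_j,r)$ is at most the remaining length of $\pi^*$ from $\beta_j$ back to $r$, which is at most $D^{\pi^*}(T_{i_j}+1)$.

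For the $2\mathcal{D}$ bound I compare each leg with $\pi^*$ directly. On the remaining path from $\beta_j$ to $r$, $\pi^*$ carries at least $W^*_{\beta_j}\ge W'+w(\beta_j)\ge W'$ throughout. So each of $d(r,\beta_j) f(W')$ and $d(\beta_j,r) f(W'+w(\beta_j))$ is at most the cost of that segment of $\pi^*$, hence at most $\cost(\pi^*)\le\mathcal{D}$. Summing the two legs gives at most $2\mathcal{D}$.

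For the $p(\beta_j)t_2$ bound I use efficiency. Plugging $t_2 = 6(9+3\epsilon)\mathcal{D}/\mathcal{P}$ into the efficiency inequality gives
\begin{equation*}
p(\beta_j)\, t_2 \ge 6\, D^{\pi^*}(T_{i_j}+1)\, 2^{i_j-1} = 3\cdot 2^{i_j} D^{\pi^*}(T_{i_j}+1).
\end{equation*}
On the other side, the bounds above give
\begin{equation*}
\cost((r,\beta_j,r),W') < D^{\pi^*}(T_{i_j}+1)\,\big(2^{i_j}+2^{i_j+1}\big) = 3\cdot 2^{i_j}D^{\pi^*}(T_{i_j}+1),
\end{equation*}
which closes the argument.

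The main obstacle is the speed bound on the return leg. It is tight only if the first leg uses the sharper estimate $f(W')<2^{i_j}$, which needs the weight before $\beta_j$ to be at most $T_{i_j}$, and the return leg uses $f(W'+w(\beta_j))<2^{i_j+1}$, which needs the weight after $\beta_j$ to be at most $T_{i_j+1}$. This requires the definitions of $B$, $b_i$ and $i_j$ to be aligned exactly. If the paper instead means the weight on arrival at $\beta_j$, I would check whether Observation~\ref{obs:steps_b} still pins the post-pickup weight into the correct interval, and otherwise adjust the constant in $t_2$ accordingly.
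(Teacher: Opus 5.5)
Your proposal is correct and is essentially the paper's proof. You reduce to $\cost((r,\beta_j,r),W')\le\min(p(\beta_j)\,t_2,2\mathcal{D})$ via monotonicity, bound the two legs with $W'\le T_{i_j}$, $W'+w(\beta_j)\le T_{i_j+1}$, $d(r,\beta_j)\le D^{\pi^*}(T_{i_j}+1)$ and efficiency, and get the $2\mathcal{D}$ bound by comparing with the tail of $\pi^*$ after $\beta_j$. The differences are cosmetic: you derive $W'\le T_{i_j}$ directly from $\beta_j=\pi^*_{b_{i_j}}$ rather than via Observation~\ref{obs:steps_b}, and your post-pickup reading of $W^*_{\beta_j}$ is the one the paper's proof relies on. State the speed bounds non-strictly, since $f(T_0)=f(0)=1=2^0$ and $f(T_{s+1})\le 2^{s+1}$.
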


\begin{proof}
    Let $W^\beta_{j-1} = \sum_{j' = 1}^{j-1} w(\beta_{j'})$. It is easy to verify that $W^\beta_{j-1} + w(\beta_j) \leq T_{i_j +1}$. Additionally by Observation~\ref{obs:steps_b} we know that $i_{j-1} \leq i_j-1$ and thus $W^\beta_{j-1} \leq T_{i_j}$. Thus
    \begin{align*}
        \cost((r,\beta_j,r),W^\beta_{j-1}) &= d(r,\beta_j) \cdot f\left(W_{j-1}^\beta\right) + d(\beta_j,r) \cdot f\left(W_{j-1}^\beta + w(\beta_j)\right)\\
        &\leq d(r,\beta_j) \cdot \left( f(T_{i_j}) + f(T_{i_j +1})\right)\\
        &\leq d(r,\beta_j) \cdot (2^{i_j} + 2^{i_j +1})\\
        &\leq 6 \cdot (D^{\pi^*}(T_{i_j}+1)\cdot 2^{i_j-1})\\
        &\leq 6 \cdot\left(p(\beta_j) \cdot (9 + 3\epsilon) \cdot \frac{\mathcal{D}}{\mathcal{P}}\right)\\
        &= p(\beta_j)  \cdot t_2.
    \end{align*}

    Additionally, we know that on the optimum tour the agent travels at least distance $D^{\pi^*}(W^\beta_{j-1} + \beta_j) \geq d(r,\beta_j)$ after picking up weight $W^\beta_{j-1} + \beta_j$. Thus,
    \begin{equation*}
        \cost((r,\beta_j,r),W^\beta_{j-1}) \leq 2 f(W^\beta_{j-1} + \beta_j) \cdot  D^{\pi^*}(W^\beta_{j-1} + \beta_j) \leq 2 \mathcal{D}.
    \end{equation*}
    By combining these two bounds we obtain $\cost((r,\beta_j,r),W^\beta_{j-1}) \leq \min(p(\beta_j) \cdot t_2, 2\mathcal{D})$ which implies that $\lborder(\beta_j) \geq W^\beta_{j-1}$.
\end{proof}

This implies that the efficient items in $B$ form a feasible $\knapdead$ solution. By bounding the total profit of the inefficient items in $B$, we obtain a lower bound for the profit of $\sigma$:

\begin{restatable}{lemma}{LemTourTwoKpSol}\label{lem:tour2_kp_sol}
    If $p(B) \geq \frac{2+ \epsilon}{9+3\epsilon} \mathcal{P}$ then it holds that $p(\sigma) \geq \frac{\mathcal{P}}{9+3\epsilon}$.
\end{restatable}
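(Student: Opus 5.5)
Following the pattern of Lemma~\ref{lem:eff_items_left}, I would split $B$ into efficient and inefficient items, bound the profit of the inefficient ones via Lemma~\ref{lem:alt_lower_bound}, and exhibit the efficient ones as a feasible $\knapdead$ solution. Let $B_{\mathrm{eff}}\subseteq B$ be the efficient items, listed in the order $\beta_{j_1},\ldots,\beta_{j_k}$ in which $\pi^*$ visits them. The tuple $(\beta_{j_1},\ldots,\beta_{j_k})$ is feasible for $\knapdead(V,p,w,\lborder)$. Indeed, by Lemma~\ref{lem:tour2_efficient}, $\lborder(\beta_{j_q}) \geq \sum_{j'<j_q} w(\beta_{j'})$, and this full prefix sum over $B$ is at least the prefix sum over only the efficient items before $\beta_{j_q}$. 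Since the $\knapdead$ scheme of Gao et al.\ is a $(1+\epsilon)$-approximation, we get $p(\sigma) \geq p(B_{\mathrm{eff}})/(1+\epsilon)$. (If the $(1+\epsilon)$ loss matters for the exact constant, I would run the FPTAS with a smaller accuracy parameter, or absorb it into the final $\epsilon$.)

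Next I bound the inefficient items. For inefficient $\beta_j$, the definition gives $p(\beta_j) < \frac{\mathcal{P}}{(9+3\epsilon)\mathcal{D}} \, D^{\pi^*}(T_{i_j}+1)\, 2^{i_j-1}$. By Observation~\ref{obs:steps_b}, the indices $i_j$ are pairwise distinct, so summing over the inefficient items is bounded by the full sum $\sum_{i=0}^{s} 2^{i-1} D^{\pi^*}(T_i+1)$. By Lemma~\ref{lem:alt_lower_bound} this is at most $\cost(\pi^*) \le \mathcal{D}$. Hence
\[
p(B\setminus B_{\mathrm{eff}}) < \frac{\mathcal{P}}{9+3\epsilon}.
\]

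Combining with the hypothesis gives
\[
p(B_{\mathrm{eff}}) > \frac{2+\epsilon}{9+3\epsilon}\mathcal{P} - \frac{1}{9+3\epsilon}\mathcal{P} = \frac{1+\epsilon}{9+3\epsilon}\mathcal{P},
\]
so $p(\sigma) \geq p(B_{\mathrm{eff}})/(1+\epsilon) > \frac{\mathcal{P}}{9+3\epsilon}$. This explains the $2+\epsilon$ in the threshold: the extra $\epsilon$ exactly absorbs the $(1+\epsilon)$ FPTAS loss, so I would take the FPTAS accuracy equal to $\epsilon$.

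The main obstacle is making the feasibility step and the index-distinctness step airtight. For feasibility, $\knapdead$ requires each deadline to dominate the weight of the selected items preceding it, and I must check that discarding the inefficient items only decreases these prefix sums. For distinctness, the sum over inefficient items must really be dominated by a single copy of each term of Lemma~\ref{lem:alt_lower_bound}. This needs every $i_j \in \{0,\ldots,s\}$, which holds since $W^*_{\beta_j} \geq T_0+1$ for items of positive weight. Zero-weight items need a brief separate check or can be handled by the same ordering argument.
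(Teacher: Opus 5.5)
Your proposal is correct and follows essentially the same argument as the paper: the efficient items of $B$, taken in $\pi^*$-order, form a feasible $\knapdead$ solution by Lemma~\ref{lem:tour2_efficient} together with prefix-sum monotonicity, and the inefficient items have total profit less than $\frac{\mathcal{P}}{9+3\epsilon}$ by Observation~\ref{obs:steps_b} and Lemma~\ref{lem:alt_lower_bound}, with the extra $\epsilon$ in the threshold absorbing the $(1+\epsilon)$ loss of the FPTAS. Your side worries are unnecessary: by the definition of the indices $b_i$, every element of $B$ has positive weight and satisfies $W^*_{\beta_j}\ge T_0+1$, so no separate zero-weight case arises.
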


\begin{proof}
    Let us consider the $\knapdead$ solution $(\beta_{j_1},\ldots,\beta_{j_{m'}})$ such that $j_1,\ldots,j_{m'}$ are exactly the indices of the efficient items in $B$ in the order in which they are visited on the optimum tour. One may verify that this is a feasible solution because for any $z \in [m']$ it holds that
    \begin{equation*}
        \sum_{z' = 1}^{z-1} w(\beta_{j_{z'}}) \leq \sum_{j' = 1}^{j_z-1} w(\beta_{j'}) \stackrel{\text{Lem. } \ref{lem:tour2_efficient}}{\leq} \lborder(\beta_{j_{z}}).
    \end{equation*}
    It remains to lower bound the profit of this solution. To do this let $I_{inef}$be a set containing the indices all inefficient items in $B$. Then
    \begin{align*}
        \sum_{j \in I_{inef}} p(\beta_j) &\stackrel{\phantom{\text{Lem. } \ref{lem:alt_lower_bound}}}{<} \frac{\mathcal{P}}{(9 + 3 \epsilon) \mathcal{D}}\sum_{j \in I_{inef}} D^{\pi^*}(T_{i_j}+1) \cdot 2^{i_j-1}\\
        &\stackrel{\text{Obs. } \ref{obs:steps_b}}{\leq} \frac{\mathcal{P}}{{(9 + 3 \epsilon) \mathcal{D}}} \sum_{i=0}^s D^{\pi^*}(T_i +1) \cdot 2^{i -1}\\
        &\stackrel{\text{Lem. } \ref{lem:alt_lower_bound}}{\leq} \frac{\mathcal{P}}{{(9 + 3 \epsilon) \mathcal{D}}} \cost(\pi^*)\\
        &\stackrel{\phantom{\text{Lem. } \ref{lem:alt_lower_bound}}}{\leq} \frac{\mathcal{P}}{9 + 3 \epsilon}
    \end{align*}

    Thus, the total profit of the efficient items can be lower bounded by
    \begin{equation*}
        p(B) - \sum_{j \in I_{inef}} p(\beta_j) > \frac{2+ \epsilon}{9+3\epsilon} \mathcal{P} - \frac{\mathcal{P}}{9 + 3 \epsilon} = \frac{1+\epsilon}{9 + 3 \epsilon}.
    \end{equation*}

    Since $\sigma$ is an $(1+\epsilon)$-approximation of the optimum $\knapdead$ solution, this implies that $p(\sigma) \geq \frac{\mathcal{P}}{9+3\epsilon}$.
\end{proof}

As in the proof of Lemma \ref{lem:tour1_profit}, we can show that if $\pi^{(2)}$ does not collect all items contained in $\sigma$ then the profit of $\pi^{(2)}$ can be lower bounded because the ratio between its cost and profit is bounded. If $\pi^{(2)}$ collects all items in $\sigma$, we can apply the previous lemma to lower bound the profit of $\pi^{(2)}$.

\begin{restatable}{lemma}{LemProfitTourTwo}\label{lem:profit_tour2}
    If $p(B) \geq \frac{2+ \epsilon}{9+3\epsilon} \mathcal{P}$ then it holds that $p(\pi^{(2)}) \geq \frac{\mathcal{P}}{9+3\epsilon}$.
\end{restatable}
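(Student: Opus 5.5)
We split into two cases according to whether the truncation in Algorithm~\ref{Alg:profit_tour2} is active, mirroring the proof of Lemma~\ref{lem:tour1_profit}.

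\textbf{Case 1: no truncation.} If no index $l'$ with $\cost(r,\sigma_1,\ldots,\sigma_{l'},r)\geq 6\mathcal{D}$ exists, then $l'=l$ and $\pi^{(2)}$ collects every item of $\sigma$. Lemma~\ref{lem:tour2_kp_sol} then gives $p(\pi^{(2)})=p(\sigma)\geq \frac{\mathcal{P}}{9+3\epsilon}$ directly.

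\textbf{Case 2: truncation at $l'$.} Otherwise $l'$ is the minimum index with $\cost(r,\sigma_1,\ldots,\sigma_{l'},r)\geq 6\mathcal{D}$. We want to show that this cost is at most $t_2\cdot p(\sigma_1,\ldots,\sigma_{l'})$. The key step is to compare the direct tour $(r,\sigma_1,\ldots,\sigma_{l'},r)$ with the star tour that returns to $r$ after each item. When the agent reaches $\sigma_k$ it carries weight $W_{k-1}=\sum_{k'<k} w(\sigma_{k'})$. By the triangle inequality, $d(\sigma_{k-1},\sigma_k)\leq d(\sigma_{k-1},r)+d(r,\sigma_k)$, and $f$ is non-decreasing. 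Charging the leg $\sigma_{k-1}\to r$ at weight $W_{k-1}$ and the leg $r\to\sigma_k$ at weight $W_{k-1}$ therefore yields
\begin{equation*}
\cost(r,\sigma_1,\ldots,\sigma_{l'},r)\leq \sum_{k=1}^{l'}\cost\bigl((r,\sigma_k,r),W_{k-1}\bigr).
\end{equation*}
Here the return leg of item $k$ is charged at $f(W_{k-1}+w(\sigma_k))=f(W_k)$, which is exactly the weight carried on the original leg out of $\sigma_k$.

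Feasibility of the \knapdead{} solution gives $W_{k-1}\leq \lborder(\sigma_k)$. Since $\cost((r,v,r),W)$ is monotone in $W$, the choice of $\lborder$ gives $\cost((r,\sigma_k,r),W_{k-1})\leq \cost((r,\sigma_k,r),\lborder(\sigma_k))\leq p(\sigma_k)\,t_2$. Summing over $k$,
\begin{equation*}
6\mathcal{D}\leq t_2\, p(\pi^{(2)}) = \frac{6(9+3\epsilon)\mathcal{D}}{\mathcal{P}}\,p(\pi^{(2)}),
\end{equation*}
which rearranges to $p(\pi^{(2)})\geq \frac{\mathcal{P}}{9+3\epsilon}$.

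The main obstacle is the shortcut inequality: the direct tour must be dominated term-by-term by the star subtours, with each leg charged at a weight no smaller than the one actually carried on it. This is where the triangle inequality and the monotonicity of $f$ are both needed. Note also that Case 2 does not use the hypothesis on $p(B)$; that hypothesis is needed only in Case 1, through Lemma~\ref{lem:tour2_kp_sol}.
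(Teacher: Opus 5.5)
Your proof is correct and matches the paper's argument. It uses the same split: if nothing is truncated, Lemma~\ref{lem:tour2_kp_sol} applies directly; if truncation occurs, $\cost((r,\sigma_k,r),W_{k-1})\leq\cost((r,\sigma_k,r),\lborder(\sigma_k))\leq p(\sigma_k)\,t_2$ is summed and compared against the direct tour, and you also spell out the triangle-inequality/monotonicity shortcut step that the paper leaves implicit.
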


\begin{proof} 
    By Lemma~\ref{lem:tour2_kp_sol}, we know that $p(\sigma) \geq \frac{\mathcal{P}}{9+3\epsilon}$. Thus, it can only happen that $p(\pi^{(2)}) < \frac{\mathcal{P}}{9+3\epsilon}$ if there exists an $l' \in [l -1]$ such that $\pi^{(2)} = (r,\sigma_1,\ldots,\sigma_{l'})$ and $\cost(\pi^{(2)}) > 6 \mathcal{D}$. We may then lower-bound $p(\pi^{(2)})$ as follows:
    \begin{align*}
        p(\pi^{(2)}) &= \sum_{j=1}^{l'} p(\sigma_j)\\
        &\geq \sum_{j=1}^{l'} \frac{1}{t_2}\cdot \cost((r,\sigma_j,r),\lborder(\sigma_j))\\
        &\geq \frac{1}{t_2} \sum_{j=1}^{l'} \cost\left((r,\sigma_j,r),\sum_{j' = 1}^{j-1}w(\sigma_{j'})\right)\\
        &\geq \frac{1}{t_2} \cdot \cost(\pi^{(2)})\\
        &> \frac{\mathcal{P}}{6\cdot (9+3\epsilon)\cdot \mathcal{D}} \cdot 6 \mathcal{D}\\
        &=\frac{\mathcal{P}}{9 + 3 \epsilon}.
    \end{align*}
    This proves the lemma.
\end{proof}

Lastly, we bound the travel time of the tour $\pi^{(2)}$:

\begin{restatable}{lemma}{LemCostTourTwo}
    It holds that $\cost(\pi^{(2)})\leq 8 \mathcal{D}$.
\end{restatable}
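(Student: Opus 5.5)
The argument mirrors the cost bound for $\pi^{(1)}$: split off the last appended subtour, bound the rest by the stopping rule of Algorithm~\ref{Alg:profit_tour2}, and bound the last subtour by the cap built into $\lborder$.

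First, view $\pi^{(2)}=(r,\sigma_1,\ldots,\sigma_{l'},r)$ as the concatenation of the round trips $(r,\sigma_j,r)$ for $j\in[l']$. The direct tour visits $\sigma_1,\ldots,\sigma_{l'}$ in the same order and carries the same weight on each leg. Each shortcut edge $d(\sigma_{j},\sigma_{j+1})$ is replaced by $d(\sigma_j,r)+d(r,\sigma_{j+1})$ traversed at the same weight $W_j$. By the triangle inequality and $f\geq 0$, the direct tour is therefore no more expensive:
\begin{equation*}
\cost(\pi^{(2)}) \leq \sum_{j=1}^{l'} \cost\Big((r,\sigma_j,r),\sum_{j'=1}^{j-1} w(\sigma_{j'})\Big).
\end{equation*}
For the tour $\pi^{(2)}$ actually returned, I will work with this sum of round-trip costs. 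This is the same quantity whose comparison is used in Lemma~\ref{lem:profit_tour2}.

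Second, split off the last item. By the choice of $l'$ as the minimal index whose prefix cost reaches $6\mathcal{D}$, the prefix $(r,\sigma_1,\ldots,\sigma_{l'-1},r)$ has cost strictly below $6\mathcal{D}$. If no such index exists, $l'=l$ and the whole tour costs less than $6\mathcal{D}$, so we are done. Otherwise, appending $\sigma_{l'}$ adds at most the round-trip cost $\cost((r,\sigma_{l'},r),\sum_{j<l'}w(\sigma_j))$. Here I must be careful that the prefix is measured as a direct tour. Replacing the final return edge $(\sigma_{l'-1},r)$ by the path $\sigma_{l'-1}\to\sigma_{l'}\to r$ costs, by the triangle inequality, at most $d(\sigma_{l'-1},r)f(W)+[d(r,\sigma_{l'})f(W)+d(\sigma_{l'},r)f(W+w(\sigma_{l'}))]-d(\sigma_{l'-1},r)f(W)$, where $W$ is the weight carried before $\sigma_{l'}$. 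This is exactly the round-trip cost, so the increment is bounded as claimed.

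Third, bound that round trip by $2\mathcal{D}$. Feasibility of $\sigma$ for $\knapdead$ gives $\sum_{j<l'}w(\sigma_j)\leq \lborder(\sigma_{l'})$. Since $f$ is non-decreasing, $\cost((r,v,r),W)$ is non-decreasing in $W$, hence $\cost((r,\sigma_{l'},r),\sum_{j<l'}w(\sigma_j))\leq \cost((r,\sigma_{l'},r),\lborder(\sigma_{l'}))\leq 2\mathcal{D}$ by the definition of $\lborder$. Altogether $\cost(\pi^{(2)})<6\mathcal{D}+2\mathcal{D}=8\mathcal{D}$.

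The only delicate step is the second one: making sure the threshold test in the algorithm, which is stated for the direct tour, correctly controls the prefix. I handle it by the edge-replacement argument above, which works because the weight profile along the tour is unchanged. A minor edge case is when $\lborder(v)$ is undefined because even $\cost((r,v,r),0)$ exceeds the minimum. Such items should be treated as never collectible (deadline $-\infty$), so they never occur in $\sigma$.
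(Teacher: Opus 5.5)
Your proof is correct and is essentially the paper's argument: the prefix $(r,\sigma_1,\ldots,\sigma_{l'-1},r)$ costs less than $6\mathcal{D}$ by minimality of $l'$, and appending $\sigma_{l'}$ adds at most $\cost\bigl((r,\sigma_{l'},r),\sum_{j<l'}w(\sigma_j)\bigr)$, which $\knapdead$-feasibility, monotonicity of $f$ and the $2\mathcal{D}$ cap in the definition of $\lborder$ bound by $2\mathcal{D}$. Your edge-replacement step spells out the triangle-inequality justification that the paper leaves implicit, while your first step (comparing the tour with the full sum of round trips) is not needed for the bound.
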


\begin{proof}
    Let $l'$ be chosen such that $\pi^{(2)} = (r,\sigma_1,\ldots,\sigma_{l'},r)$. Then it holds that $\cost(r,\sigma_1,\ldots,\sigma_{l'-1},r) \leq 6 \mathcal{D}$ and we may bound $\cost(\pi^{(2)})$ as follows:
    \begin{align*}
        \cost(\pi^{(2)}) &\leq  6 \mathcal{D} + \cost\left((r,\sigma_{l'},r),\sum_{j=1}^{l'-1} w(\sigma_j)\right)\\
        &\leq 6 \mathcal{D} + \cost((r,\sigma_{l'},r),\lborder(\sigma_{l'}))\\
        &\leq 6 \mathcal{D} + 2\mathcal{D} \\
        &= 8 \mathcal{D},
    \end{align*}
    where we used in the last inequality that $\lborder(\sigma_{l'})$ is chosen in such a way that $\cost((r,\sigma_{l'},r),\lborder(\sigma_{l'})) \leq 2 \mathcal{D}$.
\end{proof}

By executing both Algorithm~\ref{Alg:profit_tour1} and Algorithm~\ref{Alg:profit_tour2} and taking the tour with the larger profit, we obtain a bicriteria approximation for \cthief:

\begin{restatable}{theorem}{ThmCTTP}
    For given $\mathcal{P}\in \mathbb{N}$ and an $\epsilon>0$ one can compute in polynomial time a $\cthief$ tour picking up weight at least $\frac{1}{9 + \epsilon} \mathcal{P}$ whose travel cost is upper bounded by $(9+\epsilon)$ times the optimum travel cost of any tour picking up weight at least $\mathcal{P}$.
\end{restatable}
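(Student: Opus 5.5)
The plan is to combine the two algorithms with the usual guessing of $\mathcal{D}$. The statement says ``weight'' but clearly means profit; we prove the profit version.

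\textbf{Guessing $\mathcal{D}$.} We try all candidate values $\mathcal{D} = (1+\epsilon')^t$ for $t$ up to $\lceil \log_{1+\epsilon'}(n \cdot \max_{v,w} d(v,w)\cdot f(w_{\max}))\rceil$, for a suitably small $\epsilon'$ chosen as a function of $\epsilon$. This is polynomially many guesses, since the encoding length bounds the logarithms.
\begin{itemize}
\item For each guess we run Algorithm~\ref{Alg:profit_tour1} and Algorithm~\ref{Alg:profit_tour2}.
\item For each guess we keep whichever of $\pi^{(1)},\pi^{(2)}$ has larger profit, provided that profit is at least $\mathcal{P}/(9+3\epsilon')$.
\item Over all guesses we output the admissible tour with smallest cost.
\end{itemize}
Both subroutines ($\cop$ with constant $\epsilon'$, and the FPTAS for $\knapdead$) run in polynomial time. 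The computations of $\lborder$ and of the cost checks are polynomial by binary search over integer weights.

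\textbf{Correctness for the right guess.} Fix the guess with $\cost(\pi^*)\le \mathcal{D}\le(1+\epsilon')\cost(\pi^*)$. We split into cases on $p(B)$, where $B$ is defined from $\pi^*$. The algorithm never needs to know $B$; we just take the better of the two tours.
\begin{itemize}
\item If $p(B)\le \frac{2+\epsilon'}{9+3\epsilon'}\mathcal{P}$, Lemma~\ref{lem:tour1_profit} gives $p(\pi^{(1)})\ge \mathcal{P}/(9+3\epsilon')$.
\item Otherwise $p(B)\ge\frac{2+\epsilon'}{9+3\epsilon'}\mathcal{P}$, and Lemma~\ref{lem:profit_tour2} gives $p(\pi^{(2)})\ge \mathcal{P}/(9+3\epsilon')$.
\end{itemize}
The two cases overlap only at equality, which is harmless. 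So this guess yields an admissible tour.

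\textbf{Cost bound.} The cost lemmas give $\cost(\pi^{(1)})\le (9+9\epsilon')\mathcal{D}$ and $\cost(\pi^{(2)})\le 8\mathcal{D}$. Hence the kept tour costs at most $(9+9\epsilon')(1+\epsilon')\cost(\pi^*)$. The output has cost no larger, because it minimizes cost over admissible tours. Choosing $\epsilon'$ with $(9+9\epsilon')(1+\epsilon')\le 9+\epsilon$ and $9+3\epsilon'\le 9+\epsilon$ gives the claimed $(9+\epsilon,9+\epsilon)$ guarantee.

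\textbf{Main obstacle.} The delicate point is that for wrong guesses the algorithms can return tours with arbitrary profit/cost tradeoffs. Selection is safe only because admissibility (profit $\ge \mathcal{P}/(9+\epsilon)$) is checked explicitly, and we minimize the true cost $\cost(\pi)$, which is computable. Correctness thus relies only on the correct guess existing among the candidates.

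This requires $\cost(\pi^*)\ge 1$ or a similar lower bound, so that the geometric range of guesses starting near $1$ covers $\cost(\pi^*)$. This follows from the scaling assumptions $f(0)=1$ and minimum distance $1$ from $r$. The exception is a zero-distance item, which can be handled as a degenerate case, or by also including the guess $\mathcal{D}=0$.
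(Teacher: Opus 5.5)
Your proposal is correct and follows the paper's argument: guess $\mathcal{D}$ geometrically, run both tour-constructing algorithms, keep the more profitable tour, and combine the case split on $p(B)$ (via the two profit lemmas) with the cost bounds $(9+9\epsilon)\mathcal{D}$ and $8\mathcal{D}$, then rescale $\epsilon$. Your explicit rule for choosing among guesses (output the cheapest tour that passes the profit threshold) is a detail the paper leaves implicit, and it is a sound way to fill that gap.
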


\bibliography{sample}

@inproceedings{epstein2010universal,
  title={Universal sequencing on a single machine},
  author={Epstein, Leah and Levin, Asaf and Marchetti-Spaccamela, Alberto and Megow, Nicole and Mestre, Juli{\'a}n and Skutella, Martin and Stougie, Leen},
  booktitle={International Conference on Integer Programming and Combinatorial Optimization},
  pages={230--243},
  year={2010},
  organization={Springer}
}

@inproceedings{GaoBG21,
  author       = {Zuguang Gao and
                  John R. Birge and
                  Varun Gupta},
  title        = {Approximation Schemes for Multiperiod Binary Knapsack Problems},
  booktitle    = {Computer Science - Theory and Applications - 16th International Computer
                  Science Symposium in Russia, {CSR} 2021, Sochi, Russia, June 28 -
                  July 2, 2021, Proceedings},
  doi          = {10.1007/978-3-030-79416-3\_8}
}

@incollection{ausiello2018prize,
  title={Prize collecting traveling salesman and related problems},
  author={Ausiello, Giorgio and Bonifaci, Vincenzo and Leonardi, Stefano and Marchetti-Spaccamela, Alberto},
  booktitle={Handbook of Approximation Algorithms and Metaheuristics},
  pages={611--628},
  year={2018},
  publisher={Chapman and Hall/CRC}
}

@article{orienteering2apx,
    author = {Chekuri, Chandra and Korula, Nitish and P\'{a}l, Martin},
    title = {Improved algorithms for orienteering and related problems},
    year = {2012},
    journal={ACM Transactions on Algorithms (TALG)},
    volume = {8},
    number = {3},
    issn = {1549-6325},
    doi = {10.1145/2229163.2229167}
    }

@article{teamorienteering,
title = "Approximation Algorithms for the Team Orienteering Problem",
author = "Wenzheng Xu and Zichuan Xu and Jian Peng and Weifa Liang and Tang Liu and Xiaohua Jia and Das, \{Sajal K.\}",
year = "2020",
doi = "10.1109/INFOCOM41043.2020.9155343",
journal = "Proceedings - IEEE INFOCOM",
}

@article{bock2015capacitated,
  title={The capacitated orienteering problem},
  author={Bock, Adrian and Sanit{\`a}, Laura},
  journal={Discrete Applied Mathematics},
  volume={195},
  pages={31--42},
  year={2015},
  publisher={Elsevier}
}

@inproceedings{chaudhuri2003paths,
  title={Paths, trees, and minimum latency tours},
  author={Chaudhuri, Kamalika and Godfrey, Brighten and Rao, Satish and Talwar, Kunal},
  booktitle={44th Annual IEEE Symposium on Foundations of Computer Science, 2003. Proceedings.},
  pages={36--45},
  year={2003},
  organization={IEEE}
}

@article{arora20062+,
  title={A 2+$\epsilon$ approximation algorithm for the k-{MST} problem},
  author={Arora, Sanjeev and Karakostas, George},
  journal={Mathematical Programming},
  volume={107},
  number={3},
  pages={491--504},
  year={2006},
  publisher={Springer}
}

@inproceedings{blank2017solving,
  title={Solving the bi-objective traveling thief problem with multi-objective evolutionary algorithms},
  author={Blank, Julian and Deb, Kalyanmoy and Mostaghim, Sanaz},
  booktitle={International Conference on Evolutionary Multi-Criterion Optimization},
  pages={46--60},
  year={2017},
  organization={Springer},
  doi = {10.1007/978-3-319-54157-0_4}
}

@article{CHAGAS2022105560,
title = {A weighted-sum method for solving the bi-objective traveling thief problem},
journal = {Computers \& Operations Research},
volume = {138},
pages = {105560},
year = {2022},
doi = {10.1016/j.cor.2021.105560},
author = {Jonatas B.C. Chagas and Markus Wagner}
}

@inproceedings{DBLP:conf/gecco/PolyakovskiyB0MN14,
  author       = {Sergey Polyakovskiy and
                  Mohammad Reza Bonyadi and
                  Markus Wagner and
                  Zbigniew Michalewicz and
                  Frank Neumann},
  title        = {A comprehensive benchmark set and heuristics for the traveling thief
                  problem},
  booktitle    = {Genetic and Evolutionary Computation Conference, {GECCO} 2014},
  pages        = {477--484},
  publisher    = {{ACM}},
  year         = {2014},
  url          = {https://doi.org/10.1145/2576768.2598249},
  doi          = {10.1145/2576768.2598249},
  bibsource    = {dblp computer science bibliography, https://dblp.org}
}

@article{DBLP:journals/ec/PrzybylekWM18,
  author       = {Michal R. Przybylek and
                  Adam Wierzbicki and
                  Zbigniew Michalewicz},
  title        = {Decomposition Algorithms for a Multi-Hard Problem},
  journal      = {Evol. Comput.},
  volume       = {26},
  number       = {3},
  year         = {2018}
}

@article{DBLP:journals/swevo/HerringKY24,
  author       = {Daniel Herring and
                  Michael Kirley and
                  Xin Yao},
  title        = {A comparative study of evolutionary approaches to the bi-objective
                  dynamic Travelling Thief Problem},
  journal      = {Swarm Evol. Comput.},
  volume       = {84},
  pages        = {101433},
  year         = {2024}
}

@article{DBLP:journals/telo/NikfarjamNN24,
  author       = {Adel Nikfarjam and
                  Aneta Neumann and
                  Frank Neumann},
  title        = {On the Use of Quality Diversity Algorithms for the Travelling Thief
                  Problem},
  journal      = {{ACM} Trans. Evol. Learn. Optim.},
  volume       = {4},
  number       = {2},
  pages        = {12},
  year         = {2024}
}

@inproceedings{DBLP:conf/algocloud/NeumannPSSW18,
  author       = {Frank Neumann and
                  Sergey Polyakovskiy and
                  Martin Skutella and
                  Leen Stougie and
                  Junhua Wu},
  title        = {A Fully Polynomial Time Approximation Scheme for Packing While Traveling},
  booktitle    = {{ALGOCLOUD}},
  series       = {Lecture Notes in Computer Science},
  volume       = {11409},
  pages        = {59--72},
  publisher    = {Springer},
  year         = {2018}
}

@inproceedings{DBLP:conf/gecco/BossekCK020,
  author    = {Jakob Bossek and
               Katrin Casel and
               Pascal Kerschke and
               Frank Neumann},
  title     = {The node weight dependent traveling salesperson problem: approximation
               algorithms and randomized search heuristics},
   booktitle    = {Genetic and Evolutionary Computation Conference, {GECCO} 2020},
  pages     = {1286--1294},
  publisher = {{ACM}},
  year      = {2020}
}

@inproceedings{yafrani2017multi,
  title={Multi-objectiveness in the single-objective traveling thief problem},
  author={Yafrani, Mohamed El and Chand, Shelvin and Neumann, Aneta and Ahiod, Bela{\"\i}d and Wagner, Markus},
  booktitle={Proceedings of the Genetic and Evolutionary Computation Conference Companion},
  pages={107--108},
  year={2017}
}

@inproceedings{wu2018evolutionary,
  title={Evolutionary computation plus dynamic programming for the bi-objective travelling thief problem},
  author={Wu, Junhua and Polyakovskiy, Sergey and Wagner, Markus and Neumann, Frank},
  booktitle={Proceedings of the genetic and evolutionary computation conference},
  pages={777--784},
  year={2018}
}

@INPROCEEDINGS{Bonyadi2013TTP, 
author={M. R. {Bonyadi} and Z. {Michalewicz} and L. {Barone}}, 
booktitle={2013 IEEE Congress on Evolutionary Computation}, 
title={The travelling thief problem: The first step in the transition from theoretical problems to realistic problems}, 
year={2013}, 
volume={}, 
number={}, 
pages={1037-1044}, 
doi={10.1109/CEC.2013.6557681}, 
ISSN={1941-0026}}

@misc{eube2026effectivetravelingmetricinstances,
      title={Effective Traveling for Metric Instances of the Traveling Thief Problem}, 
      author={Jan Eube and Kelin Luo and Aneta Neumann and Frank Neumann and Heiko Röglin},
      year={2026},
      eprint={2604.19271},
      archivePrefix={arXiv},
      primaryClass={cs.DS},
      url={https://arxiv.org/abs/2604.19271}, 
}

\appendix

\section{A $(2e + \epsilon)$-approximation Algorithm for Weighted TSP}\label{apx:improved_weighted}

In this section we describe how the approximation factor for $\WTSP$ can be improved to $(2 e + \epsilon)$. First, we provide a randomized algorithm with this guarantee (following the same ideas as the randomized algorithm for universal sequencing \cite{epstein2010universal}). Then we describe how this algorithm can be derandomized. First, we generalize Lemma~\ref{lem:bound} for a randomized setting:

\begin{lemma}\label{lem:bound_rand}
    Let a randomized algorithm be given that produces a TSP tour $\pi$ and let $\pi^*$ be the optimum $\WTSP$ tour. Let $c$ be an arbitrary constant. If it holds for any weight $W$ that $E\left[D^{\pi}(W)\right] \leq c \cdot D^{\pi^*}(W)$ then $E\left[cost(\pi)\right] \leq c \cdot cost(\pi^*)$.
\end{lemma}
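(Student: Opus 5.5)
The plan is to redo the proof of Lemma~\ref{lem:bound} and take expectations, using linearity. First I will establish the deterministic layer-cake identity for an arbitrary tour $\pi$. With the convention $f(-1)=0$,
\begin{equation*}
    \cost(\pi) = \sum_{W=0}^{w(V)} D^{\pi}(W)\,\bigl(f(W)-f(W-1)\bigr).
\end{equation*}
To verify it, consider an edge traversed while the agent carries weight $W'$. That edge contributes its length to $D^{\pi}(W)$ for exactly the values $W \le W'$. The increments $f(W)-f(W-1)$ for $W=0,\dots,W'$ telescope to $f(W')$. Summing over the edges recovers $\cost(\pi)$. Lemma~\ref{lem:bound} already used this identity, written there with a typo as $D^{\pi^*}$, so I only need to state it for general $\pi$.

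Next I apply the identity to the random tour $\pi$ and take expectations. The sum has finitely many terms, indexed by $W=0,\dots,w(V)$. The coefficients $f(W)-f(W-1)$ are deterministic, because they depend only on the instance and not on the randomness of the algorithm. Linearity of expectation therefore gives
\begin{equation*}
    E[\cost(\pi)] = \sum_{W=0}^{w(V)} E\bigl[D^{\pi}(W)\bigr]\,\bigl(f(W)-f(W-1)\bigr).
\end{equation*}
Since $f$ is non-decreasing and $f(-1)=0\le f(0)$, every coefficient is nonnegative. So I can apply the hypothesis $E[D^{\pi}(W)]\le c\cdot D^{\pi^*}(W)$ term by term. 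Then I use the identity once more, now for $\pi^*$, to obtain $c\cdot\cost(\pi^*)$.

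The only real subtlety is the nonnegativity of the coefficients, and that follows from monotonicity of $f$. One could worry about infinite values of $f$, but the range of $W$ never exceeds $w(V)$, and for $\WTSP$ there is no capacity constraint, so every term is finite.
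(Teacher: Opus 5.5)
Your proposal is correct and follows the paper's proof: write $\cost(\pi)$ via the layer-cake identity with $f(-1)=0$, use linearity of expectation with the deterministic coefficients $f(W)-f(W-1)$, and bound term by term. Your explicit check of the identity and of the nonnegativity of the coefficients fills in steps the paper only asserts. You also correctly read the $D^{\pi^*}$ in the paper's displayed identity as a typo for $D^{\pi}$.
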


\begin{proof}

    Let us for simplicity define $f(-1) = 0$. For any tour $\pi$ we can express $\cost(\pi)$ as follows:
    \begin{equation*}
        \cost(\pi) = \sum_{W= 0}^{w(V)} D^{\pi^*}(W) \cdot (f(W) - f(W-1))
    \end{equation*}
    Using the linearity of expectation we obtain:
    \begin{align*}
        E[\cost(\pi)] &= E\left[ \sum_{W= 0}^{w(V)} D^\pi(W) \cdot (f(W) - f(W-1)) \right]\\
        &= \sum_{W= 0}^{w(V)} E[D^\pi(W)] \cdot (f(W) - f(W-1))\\
        &\leq \sum_{W= 0}^{w(V)} c \cdot D^{\pi^*}(W) \cdot (f(W) - f(W-1))\\
        &= c \cdot \cost(\pi^*)\qedhere
    \end{align*}
\end{proof}

Using this, we can improve the approximation factor of Algorithm \ref{Alg:8_apx} by sampling a number $\lambda$ uniformly from the interval $[0,1)$ and modify  the choice of $w_s$ in the algorithm such that:
\begin{equation*}
    \ell(\textsc{Qtsp}(r, (R,d), w_s)) \leq e^{s + \lambda} < \ell(\textsc{Qtsp}(r,(R,d), w_{s}+1))
\end{equation*}

Similarly to Lemma \ref{lem:opt_plan_lower}, one can show that:

\begin{lemma}
    If $(2 + \epsilon)\cdot D^{\pi^*}(\mathcal{W}) \leq e^{j+\lambda}$ then $W_j^\subtour \geq w(v) - \mathcal{W} +1$.
\end{lemma}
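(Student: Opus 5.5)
The plan is to mirror the proof of Lemma~\ref{lem:opt_plan_lower}, replacing the threshold $2^i$ by $e^{j+\lambda}$. I will argue directly: assume $(2+\epsilon)\, D^{\pi^*}(\mathcal{W}) \leq e^{j+\lambda}$ and show $W_j^\subtour \geq w(V) - \mathcal{W} + 1$. Here $w(v)$ in the statement is read as $w(V)$.

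First, fix the index $l$ at which the optimum tour $\pi^*$ first has collected weight at least $\mathcal{W}$. Consider the path $\pi^*_l,\ldots,\pi^*_n,\pi^*_1$ ending in $r$. Its length is exactly $D^{\pi^*}(\mathcal{W})$. The vertices before $\pi^*_l$ have total weight at most $\mathcal{W}-1$, so the vertices on this path carry weight at least $w(V)-\mathcal{W}+1$.

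Next, let $R$ be the vertex set at the start of the iteration computing $\subtour_j$. The removed vertices $V\setminus R$ have total weight exactly $W_{j-1}^\subtour$, since the subtours are vertex-disjoint. Shortcut the path by deleting all vertices of $V\setminus R$, and call the result $\subtour^*$. This path still ends in $r$. By the triangle inequality its length is at most $D^{\pi^*}(\mathcal{W})$, and its weight is at least $w(V)-\mathcal{W}-W_{j-1}^\subtour+1$.

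If this quantity is nonpositive, then $W_{j-1}^\subtour \geq w(V)-\mathcal{W}+1$ already holds, and we are done by monotonicity of $W^\subtour$. Otherwise, the $\QTSP$ guarantee gives, for every $W \leq w(V)-\mathcal{W}-W_{j-1}^\subtour+1$,
\[
\ell(\textsc{Qtsp}(r,(R,d),W)) \leq (2+\epsilon)\,\ell(\subtour^*) \leq (2+\epsilon)\, D^{\pi^*}(\mathcal{W}) \leq e^{j+\lambda}.
\]

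The main subtlety is that the hypothesis is now non-strict, whereas Lemma~\ref{lem:opt_plan_lower} used a strict inequality. This is harmless because the modified selection rule is $\ell(\textsc{Qtsp}(r,(R,d),w_j)) \leq e^{j+\lambda} < \ell(\textsc{Qtsp}(r,(R,d),w_j+1))$. Every weight $W$ in the range above satisfies $\ell \le e^{j+\lambda}$, so none of them can be the first weight exceeding the threshold. Assuming, as in the original algorithm, that $\ell(\textsc{Qtsp}(r,(R,d),\cdot))$ is treated as monotone in the weight (which the binary search implicitly does), this forces $w_j \geq w(V)-\mathcal{W}-W_{j-1}^\subtour+1$.

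Adding $W_{j-1}^\subtour$ to both sides gives $W_j^\subtour = w_j + W_{j-1}^\subtour \geq w(V)-\mathcal{W}+1$. The only care needed beyond Lemma~\ref{lem:opt_plan_lower} is this boundary case with equality and the monotonicity convention of the binary search. Nothing about $\lambda$ being random enters, since the statement holds pointwise for every $\lambda \in [0,1)$.
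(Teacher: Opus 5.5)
Your proposal is correct and is precisely the adaptation of the proof of Lemma~\ref{lem:opt_plan_lower} that the paper intends (the paper gives no separate proof), and you handle the non-strict hypothesis correctly through the strict inequality in the modified selection rule. Two minor remarks. First, monotonicity of $\ell$ is not needed: if $w_j+1 \leq w(V)-\mathcal{W}-W_{j-1}^\subtour+1$, then the $\QTSP$ guarantee applied directly to $W=w_j+1$ contradicts $e^{j+\lambda} < \ell(\textsc{Qtsp}(r,(R,d),w_j+1))$. Second, the removed weight equals $W_{j-1}^\subtour$ only if $w_i$ denotes the actual weight of $\subtour_i$, which may exceed the $\QTSP$ parameter; under that reading your last step should be $W_j^\subtour \geq w_j + W_{j-1}^\subtour$ rather than an equality, a conflation the paper's own proof shares.
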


Under this circumstances the value $D^{\pi^*}(\mathcal{W})$ can be bounded by the length of the tours $\subtour_1,\ldots,\subtour_j$. By summing up the respective lengths and applying the harmonic sum we obtain:

 \begin{lemma}\label{lem:rand_plan_upper}
 If $(2 + \epsilon)\cdot D^{\pi^*}(\mathcal{W}) \leq e^{j+\lambda}$ then $D^\pi(\mathcal{W}) \leq e^{j + \lambda} \cdot \frac{e}{e-1}$
 \end{lemma}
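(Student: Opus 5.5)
Let $j$ be an index with $(2+\epsilon)\cdot D^{\pi^*}(\mathcal{W}) \leq e^{j+\lambda}$. The plan is to repeat the argument of Lemma~\ref{lem:alg_plan_upper}, with the geometric sum now in base $e$. By the preceding lemma, $W_j^\subtour \geq w(V) - \mathcal{W} + 1$. Hence the subtours $\subtour_j,\ldots,\subtour_1$ together contain vertices of total weight at least $w(V)-\mathcal{W}+1$.

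\textbf{Step 1 (weight before the last $j$ subtours).} The tour $\pi$ traverses the subtours in reversed order, and distinct subtours are vertex-disjoint apart from $r$, since vertices are removed from $R$ once they are used. So the weight collected before entering $\subtour_j$ is at most $w(V) - W_j^\subtour \leq \mathcal{W}-1$. The first moment at which the agent carries weight at least $\mathcal{W}$ therefore lies inside one of $\subtour_j,\ldots,\subtour_1$. Consequently $D^\pi(\mathcal{W})$ is at most the remaining length of $\pi$ from the start of $\subtour_j$, which is at most $\sum_{i=1}^{j}\ell(\subtour_i)$.

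\textbf{Step 2 (length of each subtour).} By the modified binary-search rule, each $\subtour_i$ satisfies $\ell(\subtour_i) \leq e^{i+\lambda}$. Note that skipping the vertex $r$ when appending, and omitting empty subtours, can only shorten $\pi$ by the triangle inequality.

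\textbf{Step 3 (geometric sum).} We then get
\begin{equation*}
D^\pi(\mathcal{W}) \leq \sum_{i=1}^{j} e^{i+\lambda} = e^{j+\lambda}\sum_{k=0}^{j-1} e^{-k} \leq e^{j+\lambda}\cdot\frac{1}{1-1/e} = e^{j+\lambda}\cdot\frac{e}{e-1}.
\end{equation*}

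\textbf{Main obstacle.} The only delicate point is Step 1. We must ensure that the ``reversed order'' concatenation really places every vertex of $\subtour_1,\ldots,\subtour_j$ after all vertices of the larger subtours $\subtour_{j+1},\ldots,\subtour_s$. This follows from vertex-disjointness, exactly as argued in the proof of Lemma~\ref{lem:alg_plan_upper}.

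We also have to handle the boundary case in which the preceding lemma's hypothesis holds only at some $j > s$, so that the index would exceed the number of subtours. There we simply bound $D^\pi(\mathcal{W})$ by the full length of $\pi$, at most $\sum_{i=1}^{s} e^{i+\lambda} \leq e^{s+\lambda}\frac{e}{e-1} \leq e^{j+\lambda}\frac{e}{e-1}$, so the same estimate still applies.
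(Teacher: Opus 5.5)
Your proof is correct and is the same argument the paper sketches: invoke the preceding lemma to get $W_j^\subtour \geq w(V)-\mathcal{W}+1$, bound $D^\pi(\mathcal{W})$ by $\sum_{i=1}^{j}\ell(\subtour_i)$ exactly as in Lemma~\ref{lem:alg_plan_upper}, and sum the geometric series $\sum_{i=1}^{j} e^{i+\lambda}\leq e^{j+\lambda}\cdot\frac{e}{e-1}$. Your explicit handling of the case $j>s$ is extra care that the paper leaves implicit.
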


 This lemma can then be used to bound the expected approximation factor of the randomized algorithm.

 \begin{theorem}
It holds that $E\left[\cost(\pi)\right] \leq (2 + \epsilon) \cdot e \cdot \cost(\pi^*)$.
 \end{theorem}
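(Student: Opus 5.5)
We bound the expectation pointwise in $\mathcal{W}$ and then invoke Lemma~\ref{lem:bound_rand} with $c=(2+\epsilon)e$. So fix an arbitrary weight $\mathcal{W}$ and write $D^* := D^{\pi^*}(\mathcal{W})$. If $D^*=0$, then the path of $\pi^*$ after reaching weight $\mathcal{W}$ has length zero. All vertices on it sit at $r$'s position, so the first QTSP tour already captures them at length $0$, and $D^\pi(\mathcal{W})=0$ as well. Hence we may assume $D^*>0$ and set
\[
x := \ln\bigl((2+\epsilon)D^*\bigr).
\]
The goal is to show $E[D^\pi(\mathcal{W})]\le (2+\epsilon)e\,D^*$.

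The key step is to choose the right random index $j$ to which the two preceding lemmas are applied. Take $j$ to be the smallest integer with $j+\lambda \ge x$, i.e.\ $j = \lceil x-\lambda\rceil$. If $x$ is below the range of indices used by the algorithm, we just take the first index; this case only helps, as it yields an even smaller bound.

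With this $j$, the hypothesis $(2+\epsilon)D^*\le e^{j+\lambda}$ holds. Lemma~\ref{lem:rand_plan_upper} then gives
\[
D^\pi(\mathcal{W}) \le \frac{e}{e-1}\, e^{j+\lambda}.
\]
Write $e^{j+\lambda}=e^{x}\cdot e^{Y}$ with $Y := j+\lambda - x = \lceil x-\lambda\rceil -(x-\lambda)$. Since $\lambda$ is uniform on $[0,1)$, the quantity $x-\lambda$ modulo $1$ is also uniform on $[0,1)$. Consequently $Y$ is uniform on $[0,1)$, up to a measure-zero boundary event.

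It remains to compute the expectation:
\[
E\bigl[e^{Y}\bigr]=\int_0^1 e^y\,dy = e-1.
\]
Therefore
\[
E[D^\pi(\mathcal{W})] \le \frac{e}{e-1}\cdot e^{x}\cdot (e-1) = e\,(2+\epsilon)\,D^*.
\]
Since this holds for every $\mathcal{W}$, Lemma~\ref{lem:bound_rand} yields $E[\cost(\pi)]\le(2+\epsilon)e\cdot\cost(\pi^*)$.

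The main obstacle is verifying that the uniformity argument survives the discretization. The iteration index $j$ must exist within $[s]$, so I must check that the algorithm runs long enough that $e^{s+\lambda}$ exceeds every relevant $(2+\epsilon)D^*$. This holds because the loop only stops when all vertices are covered, and the relevant $D^*$ then gives a feasible QTSP witness. I also need the fractional-part map $\lambda\mapsto \lceil x-\lambda\rceil-(x-\lambda)$ to be measure preserving on $[0,1)$. It is, being a rotation of the circle.
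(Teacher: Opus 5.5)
Your main computation is the paper's own proof: fix $\mathcal W$, put $x=\ln\bigl((2+\epsilon)D^{\pi^*}(\mathcal W)\bigr)$, take the least $j$ with $j+\lambda\ge x$, apply Lemma~\ref{lem:rand_plan_upper}, integrate $e^{j+\lambda-x}$ over a uniform offset to get $e\,(2+\epsilon)D^{\pi^*}(\mathcal W)$, and finish with Lemma~\ref{lem:bound_rand}. The gap is in your boundary remark. The algorithm's first subtour is $\subtour_1$, with budget $e^{1+\lambda}$, so $j=\lceil x-\lambda\rceil=0$ cannot be used. That value occurs on the event $\lambda\ge x$ whenever $x<1$, i.e.\ whenever $(2+\epsilon)D^{\pi^*}(\mathcal W)<e$. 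Replacing it by $j=1$ does not ``only help'': it makes $e^{j+\lambda}$ larger by a factor $e$ on that event, so $Y=j+\lambda-x$ is no longer uniform on $[0,1)$. In fact, for $x<1$ you get $j=1$ for every $\lambda$, and your estimate becomes $\frac{e}{e-1}E[e^{1+\lambda}]=e^2>e\cdot e^{x}$. As written, then, the argument does not prove the inequality for such $\mathcal W$. (The paper's display glosses over the same point by minimizing over $j\in\mathbb{N}_0$, which tacitly assumes a subtour with budget $e^{\lambda}$.)

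The repair uses an ingredient you did not invoke: the model's normalization that every location other than $r$ is at distance at least $1$ from $r$. This gives $D^{\pi^*}(\mathcal W)\ge 1$ for every relevant $\mathcal W$. For $x<1$, use the one-term bound $D^\pi(\mathcal W)\le \ell(\subtour_1)\le e^{1+\lambda}$ instead of the geometric factor $\frac{e}{e-1}$. It is valid because $(2+\epsilon)D^{\pi^*}(\mathcal W)<e\le e^{1+\lambda}$, and it yields $E[D^\pi(\mathcal W)]\le e(e-1)<2e\le(2+\epsilon)e\,D^{\pi^*}(\mathcal W)$. Alternatively, start the loop at $s=0$; then $x\ge 0$ forces $j\ge 0$, and your uniformity computation goes through verbatim. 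The same normalization shows that your case $D^{\pi^*}(\mathcal W)=0$ never occurs. That matters, because your argument for it is wrong: $\subtour_1$ is chosen with budget $e^{1+\lambda}$, not $0$, so it may pick up a heavy vertex located at $r$ first and then travel a positive distance. Finally, the upper end that you call the main obstacle is harmless. If $j>s$, all vertices already lie on $\subtour_1,\dots,\subtour_s$, so $D^\pi(\mathcal W)\le\sum_{k\le s}e^{k+\lambda}\le\sum_{k\le j}e^{k+\lambda}$ holds trivially.
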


 \begin{proof}

 Let $\mathcal{W} \in w(V)$ be an arbitrary weight and let $c = \log_{e}((2 + \epsilon)\cdot D^{\pi^*}(\mathcal{W}))$. We can then use Lemma~\ref{lem:rand_plan_upper} to bound the expected value of $D^{\pi}(\mathcal{W})$ as follows:

\begin{align*}
    E_{\lambda \sim U[0,1]}\left[D^{\pi}(\mathcal{W})\right] &\leq E_{\lambda \sim U[0,1]}\left[\min_{j \in \mathbb{N}_0: j + \lambda \geq c}e^{j+\lambda}\cdot \frac{e}{e-1}\right]\\
    &= e^c \cdot \frac{e}{e-1} E_{\lambda \sim U[0,1]}\left[\min_{j \in \mathbb{N}_0: j + \lambda \geq c}e^{j+\lambda}/e^c\right]\\
    &= (2 + \epsilon)\cdot D^{\pi^*}(\mathcal{W}) \cdot \frac{e}{e-1} \int_{0}^{1} e^{\lambda'} d \lambda'\\
    &= (2 + \epsilon) \cdot e \cdot D^{\pi^*}(\mathcal{W})
\end{align*}

Since this holds for every arbitrary weight $\mathcal{W} \in w(V)$ we can directly  apply Lemma~\ref{lem:bound_rand} and the theorem is proven.
\end{proof}

We can derandomize the algorithm by choosing an integer value $l \in \mathbb{N}$ and execute the randomized algorithm with every value $\lambda \in \left\{ \frac{0}{l}, \frac{1}{l},\ldots,\frac{l-1}{l}\right\}$ and taking the solution with the smallest cost. It is easy to verify that if we had chosen the value $\lambda$ from $ \left\{ \frac{0}{l}, \frac{1}{l},\ldots,\frac{l-1}{l}\right\}$ uniformly at random the expected cost of the solution can be bounded by $ (1+ \frac{1}{l}) \cdot (2 + \epsilon) \cdot e \cdot \cost(\pi^*)$. Thus, for at least one of the possible choices the cost of the tour must lie below this expected value. By using a smaller value $\epsilon' < \epsilon$ instead of $\epsilon$ in this algorithm and choosing a sufficiently large $l$ one obtains an $(2e + \epsilon)$-approximation guarantee:

\ThmWeighted*

\section{Applying our Approach to Different Models}

We introduced the $\cthief$ model in this paper because it simultaneously captures the general challenges of the traveling thief problem while it also simplifies writing a lot compared to existing models. In this section we explain how our $\cthief$ algorithm can be adapted to also approximate different settings. In particular, we argue that our algorithm can be used to calculate an approximate Pareto set for $\bithief$ and that this is also true if the starting node is not fixed, if there can be multiple items at the same location or if the agent is required to visit every location even if it does not want to collect any items there. 

\subsection{Bi-objective TTP}\label{apx:bittp}
In $\bithief$ we are not given a desired profit $\mathcal{P}$ and instead one aims to find a Pareto set with respect to the profit $p(\pi)$ and the travel time $\cost(\pi)$ of the tours. A tour $\pi$ is Pareto-optimal if there exists no tour $\pi'$ such that $p(\pi') \geq p(\pi)$ and $\cost(\pi') \leq \cost(\pi)$, with at least one of these inequalities being strict. The Pareto set is then the set of all Pareto-optimal tours. Since both objectives are $\NP$-hard, computing this set is also $\NP$-hard. We therefore restrict our attention to the computation of an approximate Pareto set. For two values $\alpha_1, \alpha_2 \in \mathbb{R}_{> 0}$, we say that a set $S$ of tours is an $(\alpha_1,\alpha_2)$-approximate Pareto set if for every feasible tour $\pi$ there exists a tour $\pi' \in S$ such that $\alpha_1 \cdot p(\pi') \geq p(\pi)$ and $\cost(\pi') \leq \alpha_2 \cdot \cost(\pi)$.

If we are given a bicriteria $(\alpha_1, \alpha_2)$-approximation algorithm for $\cthief$ and an $\epsilon > 0$, we can obtain a $((1 + \epsilon) \cdot \alpha_1,\alpha_2)$-approximate Pareto set for $\bithief$ by simply applying the bicriteria approximation algorithm with $\mathcal{P} = (1+ \epsilon)^i$ for every $i \in \mathbb{N}_0$ with $i \leq \lceil \log_{1+\epsilon}(p(V))\rceil$, where $p(V) = \sum_{v \in V} p(v)$. One may verify that for a constant $\epsilon > 0$ the choices of $i$ are polynomial in the input size. Thus, we can use our $\cthief$ algorithm to calculate a $(9 + \epsilon, 9 + \epsilon)$-approximate Pareto set for $\bithief$.

\subsection{Visiting every Location}\label{apx:all_location}
In most of the existing literature the agent is required to visit every single location, even if it does not collect an item there. One can deal with this variant by using Christofides $1.5$–approximation algorithm for TSP~\cite{chaudhuri2003paths} to calculate a tour visiting all locations where the agent does not collect any items and adding this to the start of the $\cthief$ tour produced by our algorithm. Since the agent still travels with its maximum speed while visiting these locations, the travel time of this initial subtour is bounded by $1.5$ times the travel time of any solution that visits every node and the approximation factor $\alpha_2$ gets increased by at most $1.5$.

However, by being a little bit more careful we can also ensure that all locations are visited without increasing the approximation factor. To do this we require a stronger version of Lemma~\ref{lem:alt_lower_bound}:

\begin{lemma}\label{lem:apx_stronger_bound}
Let $\pi^*$ be the optimum tour. It holds that
    \begin{equation*}
        \cost(\pi^*) \geq \textcolor{red}{\frac{1}{2}\cdot D^{\pi^*}(0) +}\sum_{i=0}^{s} 2^{i-1} \cdot D^{\pi^*}(T_{i} + 1).
    \end{equation*}
\end{lemma}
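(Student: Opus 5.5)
The plan is to rerun the proof of Lemma~\ref{lem:alt_lower_bound} with one more step in the staircase. We again compare $f$ with the step function $g$ given by $g(w) = 2^i$ for $w \in \{T_i+1,\ldots,T_{i+1}\}$, and we put $g(0)=1$ since $f(0)=1$. Then $f \geq g$ holds pointwise, so $\cost(\pi^*) \geq \cost'(\pi^*)$. The new ingredient is the portion of the tour travelled with weight exactly $0$, i.e.\ before the first item is picked up.

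Define $\ell_{-1}$ as the length of the initial segment of $\pi^*$ travelled before the weight reaches $T_0+1 = 1$. Define $\ell_0,\ldots,\ell_s$ exactly as in the proof of Lemma~\ref{lem:alt_lower_bound}. The whole tour is partitioned into these pieces, which gives $D^{\pi^*}(0) = \ell_{-1} + \sum_{j=0}^s \ell_j$ and $D^{\pi^*}(T_i+1) = \sum_{j\geq i}\ell_j$. The segment counted by $\ell_{-1}$ is traversed at rate $f(0) = 1$, so
\begin{equation*}
\cost'(\pi^*) \geq \ell_{-1} + \sum_{i=0}^s 2^i \ell_i .
\end{equation*}

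Next, write each coefficient as a telescoping sum. For the $\ell_{-1}$ term, use $1 \geq \tfrac12$. For $i \geq 0$, use
\begin{equation*}
2^i = \tfrac12 + \sum_{j=0}^{i} (2^j - 2^{j-1}) = \tfrac12 + \sum_{j=0}^{i} 2^{j-1}.
\end{equation*}
Applying the first to $\ell_{-1}$ and the second to each $\ell_i$ and summing bounds $\cost'(\pi^*)$ from below by
\begin{equation*}
\tfrac12\Big(\ell_{-1} + \sum_{i=0}^s \ell_i\Big) + \sum_{j=0}^s 2^{j-1}\sum_{i\geq j}\ell_i .
\end{equation*}
Swapping the order of summation turns this into $\tfrac12 D^{\pi^*}(0) + \sum_{j=0}^s 2^{j-1} D^{\pi^*}(T_j+1)$, which is the claim.

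The main thing to watch is the coefficient bookkeeping. In the original proof the $i=0$ term used the bound $2^0 \geq 2^0 - 2^{-1} = \tfrac12$ and so discarded a slack of exactly $\tfrac12$ on $\ell_0$, and similarly on every other $\ell_i$. The stronger bound needs that slack recovered exactly rather than thrown away. The telescoping identity $2^i = \tfrac12 + \sum_{j=0}^i 2^{j-1}$ does this: its leading $\tfrac12$ is the leftover on every $\ell_i$, and together with $\tfrac12\ell_{-1}$ it adds up to $\tfrac12 D^{\pi^*}(0)$.

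A second point is the definition of $D^{\pi^*}(0)$. The smallest index $i$ with $W_i^{\pi^*} \geq 0$ is $i=1$, so $D^{\pi^*}(0)$ is the full tour length, consistent with the decomposition above. The boundary cases, where some $\ell_i = 0$ because an item jumps across several intervals, are handled exactly as in Lemma~\ref{lem:alt_lower_bound}.
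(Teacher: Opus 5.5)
Your proof is correct and is essentially the argument the paper intends. The paper gives no formal proof and only points to Figure~\ref{fig:lower_bound}; your identity $2^i = \tfrac12 + \sum_{j=0}^{i} 2^{j-1}$ is exactly the stacking of rectangles of height $2^{j-1}$ and width $D^{\pi^*}(T_j+1)$ on top of a bottom strip of height $\tfrac12$ spanning the whole tour, which lies under the curve since $f \geq f(0) = 1$, thereby recovering the slack $\tfrac12 D^{\pi^*}(1)$ and half of the weight-$0$ segment that the proof of Lemma~\ref{lem:alt_lower_bound} discards.
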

One can get an intuition why also this stronger lemma is true by having a look at Figure~\ref{fig:lower_bound}. 

Let $\tau$ be the TSP tour calculated by Christofides algorithm. Since we assume that $f(0) = 1$ and $\pi^*$ needs to be a TSP tour if we require the agent to visit every node, we know that
\begin{equation*}
\frac{1}{2}\cdot D^{\pi^*}(0) = \frac{1}{2} f(0) \cdot \ell(\pi^*) \geq \frac{1}{3} \cdot \ell(\tau).
\end{equation*}

Because of Lemma~\ref{lem:apx_stronger_bound} we can use $\mathcal{D} - \frac{1}{3} \cdot \ell(\tau)$ instead of $\mathcal{D}$ in the definitions of the efficient intervals, the efficient items and the thresholds $t_1$, $t_2$. We can again lower bound $\max(p(\pi^{(1)},p(\pi^{(2)})$ by $\frac{\mathcal{P}}{9 + 3 \epsilon}$. If we ignore the cost of the subtour $\tau$, the cost of each tour is at most $(9 + 9 \epsilon)\mathcal{D} - 2 \ell(\tau)$. Since traveling the tour $\tau$ before collecting any item costs only $\ell(\tau)$, the travel time of the entire tours $\pi_1$ and $\pi_2$ is then at most $(9 + 9 \epsilon)\mathcal{D}$. The exact details of this analysis are omitted.

\subsection{Unknown Starting Location}\label{apx:no_start}

If the initial position $r$ of the agent is not given, we can simply evaluate all $n$ choices of $r$ and choose the best one. Since our algorithm requires that there is no item placed at $r$ we have to take care of the respective item in advance. We can simply run the algorithm for both the case that the item gets picked up and that we do not pick up the item and take the better solution. There are two models in the literature regarding the point in time at which the item at the starting position needs to be collected (if we decide to collect it). If the agent is required to pick up the item in the beginning, we can simply modify $f$ and $w_{\max}$ to incorporate that the agent already starts with an initial weight. If the item can be picked up in the end, we only have to modify $w_{\max}$ to make sure that the agent can still collect the item when finishing the tour.

Doing this we are able to obtain an approximate $\cthief$ solution. If we want to calculate an approximate Pareto set for $\bithief$ we need to calculate the respective Pareto sets for every possible starting node and then merge them to obtain an approximate Pareto set for the situation that $r$ is not fixed.

\subsection{Having Multiple Items at the Same Location}\label{apx:mult_items}

If multiple items are placed at the same location we can simply create multiple copies of the respective node and place exactly one item at each of these nodes. It could happen that the solution calculated by our algorithm visits different copies of a location at different points in time. In the classic TTP model one is required to visit every location at most once and needs to decide then which items are collected. We can handle this situation by modifying our tour such that we only visit a location when we visit the last copy of it. Once we visit it we collect all items that are placed at the copies visited by our original tour. This only causes some of the weights to be collected later which cannot increase the travel time.

However, we assume that the minimum distance between $r$ and any other location is at least $1$. If there are multiple items placed at the starting location, we would need to have copies of $r$ that are at distance $0$ to it. If we are allowed to collect the items at the starting location at the end of the tour this does not cause too much problems. We simply have to modify Algorithm~\ref{Alg:profit_tour1} to also consider $\subtour' = \textsc{Cop}((V \setminus P_i,d),0,T_{i+1} - T_i)$ as a possible choice for subtour $\subtour_i$ for every $i \in \{0,\ldots,s\}$. We also have to be a little bit more careful during some parts of our analysis, but this is mainly due to the fact that we want to avoid to divide by $0$. In principle the algorithm works as intended.

If we are required to collect the items placed at $r$ in the beginning of the tour, things get a little bit more complicated. For a given set $S'$ of the items that get collected at $r$, we can simply modify the function $f$ and the value of $w_{\max}$ to incorporate that the agent already starts with some weight. To do this, we need to be able to predetermine $S'$.

Let $o_1, \ldots, o_l$ be the items placed at $r$ and let them be ordered such that $\frac{p(o_1)}{w(0_1)} \geq \frac{p(o_2)}{w(0_2)}\geq \ldots \geq \frac{w(o_l)}{p(o_l)}$. For two indices $i,j \in [l]$ we define the set:
\begin{equation*}
    S_{i,j} = \{v_{i'} \mid i' \leq i \land w(o_{i'}) \leq w(o_j)\},
\end{equation*}
which contains exactly those items whose index is at most $i$ and whose weight is at most $w(o_j)$. For every $i,j \in [l]$, we apply our $\cthief$ algorithm with $S' = S_{i,j}$. Additionally, we also calculate a solution for the case that we only pick up item $o_i$ at $r$ for every $i \in [l]$ and a solution for $S' = \emptyset$. Among all the solutions we take then the one with the largest profit.

Let $S^*$ be the set of items that get collected at $r$ in the optimum solution and let $o_{j^*} = \argmax_{o_j \in S^*}  w(o_j)$ be the heaviest item in $S^*$. We consider the minimum index $i^*$ such that $w(S_{i^*,j^*}) \geq w(S^*)$ (such an index must exists because $S^* \subseteq S_{l,j^*}$). It is easy to verify that $p(S_{i^*,j^*}) \geq p(S^*)$. Additionally, $w(S_{i^*-1,j^*}) \leq w(S^*)$ and $w(o_{i^*}) \leq w(o_{j^*}) \leq w(S^*)$. Thus, we know that the $\cthief$ algorithm will be executed at least once with a set $S'$ such that $w(S') \leq w(S^*)$ and $p(S') \geq \frac{p(S^*)}{2}$. 
In this iteration, we are then also guaranteed to collect a $\frac{1}{9 + 3\epsilon}$ fraction of the profit of the items collected on the optimum tour $\pi^*$ at the other locations. In total, we collect at least a profit of $\frac{p(\pi^*)}{9 + 3 \epsilon}$ while we can also bound the length of the tour by $(9 + 9 \epsilon) \cdot \mathcal{D}$. Thus, we also have a bicriteria approximation algorithm for the case that multiple items can be placed at the same location, even if the items at $r$ can only be collected at the beginning of the tour.

\end{document}